\def\BibTeX{{\rm B\kern-.05em{\sc i\kern-.025em b}\kern-.08em
    T\kern-.1667em\lower.7ex\hbox{E}\kern-.125emX}}
   \newtheorem{lemma}{Lemma}
   \newtheorem{theorem}{Theorem}
   \newtheorem{Corollary}{Corollary}
   \newtheorem{definition}{Definition}
\titlespacing{\subsection}{0pt}{0.5em}{0pt}
\begin{document}
\title{Finite Blocklength Regime Performance {of} Downlink Large Scale Networks }

\author{
\IEEEauthorblockN{Nourhan Hesham, \textit{Student Member, IEEE}, Anas Chaaban, \textit{Senior Member, IEEE}, Hesham ElSawy, \textit{Senior Member, IEEE}, Jahangir Hossain, \textit{Senior Member, IEEE}}\\
\IEEEauthorblockA{}
\thanks{
 N. Hesham, A. Chaaban, and J. Hossain are with the School of Engineering, University of British Columbia, Kelowna, BC V1V 1V7, Canada (e-mail: \{nourhan.soliman,anas.chaaban,jahangir.hossain\}@ubc.ca), and N. Hesham is on leave from the Department of Electronics and Electrical Engineering, Cairo University, Cairo, Egypt.

H. ElSawy is with the School of Computing, Queen’s University, Kingston, ON K7L 2N8, Canada. (e-mail: hesham.elsawy@queensu.ca).

This publication is based upon work supported by King Abdullah University of Science and Technology (KAUST) under Award No. OSR-2018-CRG7-3734.

Part of this work was presented in the IEEE International Conference on Communications (ICC 2021)~\cite{mypaper}
}}

\maketitle

\begin{abstract} 
 Some emerging 5G and beyond use-cases impose stringent latency constraints, which necessitates a paradigm shift towards finite blocklength performance analysis. In contrast to Shannon capacity-achieving codes, the codeword length in the finite blocklength regime (FBR) is a critical design parameter that imposes an intricate tradeoff between delay, reliability, and information coding rate. In this context, this paper presents a novel mathematical analysis to characterize the performance of large-scale downlink networks using short codewords. Theoretical achievable rates, outage probability, and reliability expressions are derived using the finite blocklength coding theory in conjunction with stochastic geometry, and compared to the performance in the asymptotic regime (AR). Achievable rates under practical modulation schemes as well as multilevel polar coded modulation (MLPCM) are investigated. Numerical results provide theoretical performance benchmarks, highlight the potential of MLPCM in achieving close to optimal performance with short codewords, and confirm the discrepancy between the performance in the FBR and that predicted by analysis in the AR. Finally, the meta distribution of the coding rate is derived, providing the percentiles of users that achieve a predefined target rate in a network.


\end{abstract}

\begin{IEEEkeywords}
Average Coding Rate, Finite Blocklength, Meta Distribution, Multilevel Polar-Coded Modulation, Stochastic Geometry.
\end{IEEEkeywords}

\IEEEpeerreviewmaketitle

\section{Introduction}

The proliferating applications and diverse use-cases of beyond 5G (B5G) networks enforce stringent key performance indicators that cannot be realized via conventional coding schemes~\cite{Short_Motivations}. {For instance, mission critical applications require sub-$1\ \text{msec}$ latency that cannot be achieved with conventional long codes~\cite{5G_2}. Moreover,  many Internet of Things (IoT) devices have an extreme low-power consumption profile that cannot support the complexity and long transmission intervals of conventional long codes~\cite{IoT}. Consequently, there is a {paradigm} shift towards {using} short codes to comply with the stringent latency and power consumption constraints of the foreseen B5G networks.}\footnote{{In a network, there are four dominant delays: transmission delay, propagation delay, processing delay, and queuing delay. As the network becomes denser, the transmission delay can dominate the propagation delay which motivates using short codes.}} Such reduced code length comes at the expense of inevitable errors and lower information coding rates. The tolerance to the induced errors differs with the application, which can be lower than $10^{-9}$ for ultra-reliable low-latency communications (URLLC)~\cite{5G_1}. {This motivates the study of the finite blocklength coding theory~\cite{polyanski, BlockFadingPolyankiy}, which characterizes the information coding {rate} as a function of the code length and frame error rate (FER).} Therefore, for a context-aware short code design in B5G networks, it is of primary importance to extend finite blocklength coding theory to interference{-prone} scenarios that account for the intrinsic large-scale and dense deployments {of future networks}.



Performance of large-scale cellular networks is well investigated {using the} classical Shannon capacity~\cite{paper1_SG,paper2_SG,paper3_SG,SG,sawy,sawy2}, defined as the maximum achievable rate such that the error probability vanishes as the code length increases \cite{Shannon}. Such idealistic scenario is hereafter denoted as the asymptotic regime (AR). The AR analysis in~\cite{paper1_SG,paper2_SG,paper3_SG,SG,sawy,sawy2} is not applicable for networks operating in the finite blocklength regime (FBR). Short codes (e.g., 128 {symbols}) violate the AR assumptions and lead to inevitable errors as a cost for {constraints on} delay and/or power consumption. {Thus,} the code length $n$ of the FBR is a critical design parameter that imposes a delicate trade-off between FER ($\epsilon$) and information coding rate. To characterize such trade-off, the work of Polyanskiy \textit{et al.} in~\cite{polyanski,BlockFadingPolyankiy} finds the maximum achievable rate as function of the code length $n$ and FER $\epsilon$ for a point-to-point link.  


Motivated by the {need to migrate} towards the FBR, the work in \cite{polyanski,BlockFadingPolyankiy} is extended for several use cases in~\cite{Relaying,Noma1,RS,Mu_MIMO, URLLC_1,URLLC_2,URLLC_4,URLLC_5,URLLC_7,URLLC_8,URLLC_9}. For instance, the work in~\cite{Relaying} extends the FBR analysis {to} relaying channels. The achievable rate for non-orthogonal multiple access in the FBR is studied in \cite{Noma1}. The impact of the FBR on rate splitting is characterized in \cite{RS}. The performance of multi-user MIMO, massive-MIMO, and cell-free MIMO in the FBR are investigated in \cite{Mu_MIMO}, \cite{URLLC_1}, and \cite{URLLC_2}, respectively. The authors in \cite{URLLC_4,URLLC_5} optimize the blocklength and other network parameters (e.g., transmit power or transmitter location) to minimize the decoding error probability subject to a latency constraint. Works was also done in~\cite{URLLC_7,URLLC_8,URLLC_9} to maximize the achievable rate, improve latency, enhance reliability, and/or provide a secure communication in the FBR. However, the models in~\cite{Relaying,Noma1,RS,Mu_MIMO, URLLC_1,URLLC_2,URLLC_4,URLLC_5,URLLC_7,URLLC_8,URLLC_9} are limited to small scale networks. To the best of the authors' knowledge, performance characterization of large-scale networks {in the FBR} is still an open problem.


Using the coding theory in the FBR in conjunction with stochastic geometry tools, this paper intends to {contribute to} the aforementioned research gap, {as an extension of \cite{mypaper}}. In particular, this paper develops novel mathematical {analysis} to characterize the trade-off between the code length $n$, the FER $\epsilon$, and the information coding rate $R$ in large-scale downlink (DL) networks using orthogonal multiple access (OMA). To this end, achievable rates and outage {in the FBR} are characterized for Gaussian codebooks as well as practical modulation schemes. 
Multi-level polar-coded modulation (MLPCM) \cite{multilevel_polar,multilevel_polar2,5G_MLPCM_2,5G_MLPCM_3}, is presented as a practical validation {for} the obtained theoretical achievable rates.\footnote{Polar codes is standardized for 5G in 3GPP Release 16 Specification $\#$ 38.212\cite{3gpp1}, and MLPCM was proposed to be used in the future generations in \cite{future_MLPCM1,future_MLPCM2}} The contributions of this paper are summarized as follows.

The paper analyzes the performance of a large-scale DL network in the FBR {as a function of $n$ and $\epsilon$} in terms of:
\begin{itemize}
\item the average coding rate under a random distance between the user and its serving BS, {under Gaussian codebooks},
\item the average coding rate using QAM constellations, 
\item the rate outage probability and reliability bounds and approximations, {and}
\item the approximation of the coding rate meta distribution.
\end{itemize}
{Additionally, the paper simulates the average coding rate achieved by MLPCM for different QAM modulation orders in comparison with the obtained theoretical benchmarks.} All results are validated via Monte Carlo simulations. It is shown that MLPCM achieves rates close to the {derived} theoretical benchmarks. It is also shown that the derived outage probability bounds are tight and coincide with simulation results. Additionally, it is shown that the {proposed approximation of the coding rate} meta-distribution {is fairly} tight under moderate and high SINR. Throughout the paper, we conduct qualitative and quantitative comparisons with the performance in the AR {to motivate the importance of analyzing the performance in the FBR and demonstrate the discrepancy with AR results}. We also investigate the effect of various network parameters in the FBR.

This paper is organized as follows. In Sec.~\ref{sec:system_model}, the system model and assumptions of the analysis are presented. In Sec.~\ref{sec:Rate_Analysis}, the average coding rate of a large-scale DL network in the FBR is derived, {under Gaussian codebooks} and under a constraint of {QAM constellations}. Then, the {theoretical results are numerically evaluated and compared with the performance of} MLPCM as a practical scheme. In Sec.~\ref{sec:Outage_Analysis}, bounds on the rate outage probability are derived, {followed by} the characterization of the reliability, the meta distribution of the coding rate, {numerical evaluations}, and a detailed discussion. Finally, the paper is concluded in Sec.~\ref{Conclusion}.

\section{System Model}\label{sec:system_model}

{Consider a single-tier OMA large-scale DL network with universal frequency reuse and no intra-cell interference. The base stations (BSs) are located according to a} Poisson point process (PPP) 
 $\Psi \subset \mathbb{R}^2$ with intensity $\lambda\ \text{BS}/\text{km}^2 $. 
{The network applies} universal frequency reuse with one user equipment (UE) served per BS at a given {resource block}. Each UE is served by its geographically closest BS, where the intended distance between a UE and its serving BS is denoted as $r_0$. The distances to the interfering BSs ordered with respect to the intended UE are denoted as $r_1,\ r_2,\dots,\ r_i,\dots$ such that $r_{i+1}>r_{i}$. {For the sake of simple presentation, the set $\Tilde{\Psi}\in\mathbb{R}$ is defined {as} the BSs distances to the desired UE.} Hence, the received signal at the desired UE is given by
\begin{equation}\label{eq:system_model}
   y = \sqrt{\mathcal{P}} h_0  r_0^{-\eta/2} s_0 + \underset{ I_{agg}}{\underbrace{\sum_{r_i\in\Tilde{\Psi}\setminus \{r_0\}}\sqrt{\mathcal{P}} h_i  r_i^{-\eta/2} s_i} } +w,
\end{equation}
where $\mathcal{P}$ is the transmit power of the {BSs}, $h_0$ (resp. $h_i$) {represents the channel fading}  of the intended (resp. $i^{th}$ interfering) channel, $\eta$ is the path loss exponent, $s_0 $ (resp. $s_i$) is a unit average power codeword symbol transmitted by the serving (resp. $i^{th}$ interfering) BS, $I_{agg} $ is the aggregate interference from all other BSs, and $w\sim \mathcal{CN}(0,\sigma_w^2)$ is circularly symmetric complex Gaussian noise with zero mean and variance $\sigma_w^2$. {We model $h_0$ and $h_i$ to be circularly symmetric complex Gaussian with zero mean and unit variance (modeling Rayleigh fading). An independent and identically distributed block fading channel model is assumed where the channel coefficients $h_0$ and $h_i$ remain constant for a duration of {$L$} consecutive symbols and change to independent realizations at the end of each symbol interval.}






The serving BS wants to send information to the UE using codewords from a code with the rate $R$ and length $n$ symbols, where $n$ can be defined following a latency and/or energy consumption constraint. To ensure that the channel remains constant during the transmission of $n$ symbols, we require $n= L/l$ for some integer $l$ {\cite{BlockFading}.} {This ensures that the channels remain constant during the transmission, but makes consecutive transmission blocks have identical channels. Nonetheless, by considering the sequence of blocks consisting of the first block (of $n$ symbols) in each coherence interval (of $L$ symbols), we have independent channels and we can derive the average performance over all transmissions. This is because the remaining sequences of blocks (such as the second frame in each coherence interval) have identical statistics.}

{To this end, two modes of operation are considered.
The first mode assumes knowledge of channel-state information (CSI) at the transmitter and the receiver in the form of SINR availability.\footnote{Assuming SINR knowledge benchmarks the performance of the network.} {Under this} mode, we investigate the average coding rate $\mathbb{E}\{R\}$ as a function of the FER and codelength (see Sec. \ref{sec:Rate_Analysis}). The second mode assumes that the CSI is unknown at the transmitter but known at the receiver. Under this scenario, we {derive the} outage probability and utilize the coding-rate meta distribution to characterize the percentile of users achieving a given transmission reliability (See Sec. \ref{sec:Outage_Analysis}).}

\vspace{-0.25cm}
\section{Average Rate Analysis}\label{sec:Rate_Analysis}

{Considering the first mode of operation, we assume that the SINR is available at the transmitter and the receiver. Hence, the coding rate can be adapted based on the SINR in each transmission block so that the transmission using codelength $n$ is successful with a desired FER $\epsilon$. This rate adaptation can be realized using the following lemma, which characterizes the maximum achievable rate over an additive white Gaussian noise (AWGN) channel in the
FBR \cite{polyanski}.}
\begin{lemma}{\textsc{(\cite{polyanski})}}\label{lem:poly}
For an AWGN channel with {signal to noise power ratio (SNR)} $\alpha$, blocklength $n$, and FER {$\epsilon\in(0,0.5)$}, the maximum coding rate is approximated as{\footnote{{This approximation is shown to be tight at blocklengths ($>100$) and FER $>10^{-6}$} in \cite{polyanski}.} }
\begin{equation}\label{eq:Cap_polyankiy}
    R_{n,\epsilon}(\alpha)= C_{\infty,0}(\alpha)-\frac{\sqrt{V(\alpha)} Q^{-1}(\epsilon)}{\sqrt{n}}{+\frac{1}{2n} \log_2 n},
\end{equation}
where  $C_{\infty,0}(\alpha) = \log_2(1+\alpha)$ is the AWGN {channel} capacity in the AR, $V(\alpha)= \frac{\alpha(\alpha+2)}{(\alpha
+1)^2} \log_2^2(e)$ is {known as} the channel dispersion, and $Q^{-1}(\cdot)$ is the inverse of the Q-function. 
\end{lemma}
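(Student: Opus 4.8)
Since the statement is quoted verbatim from \cite{polyanski}, the plan is to reproduce the structure of the normal-approximation argument there rather than to reprove it from scratch. The starting point is the complex AWGN channel $Y = X + Z$ with $Z$ circularly symmetric complex Gaussian of unit variance and an equal-power constraint $\|x\|^2 = n\alpha$ on each codeword. First I would fix the input distribution to be uniform on the power sphere $\{x : \|x\|^2 = n\alpha\}$ --- the capacity-achieving input under this constraint --- and work with the associated information density $\imath_n(x;y)$, which decomposes as a sum of $n$ (conditionally) i.i.d.\ contributions. The two facts I need from this decomposition are that, under the joint law $P_{XY}$, $\tfrac1n \mathbb{E}[\imath_n] \to C_{\infty,0}(\alpha) = \log_2(1+\alpha)$ and $\tfrac1n \mathrm{Var}(\imath_n) \to V(\alpha) = \frac{\alpha(\alpha+2)}{(\alpha+1)^2}\log_2^2 e$, with a finite third absolute moment so that the Berry--Esseen theorem applies.

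Next I would sandwich $\log_2 M^{*}(n,\epsilon)$, the base-2 logarithm of the largest codebook achieving FER at most $\epsilon$, between a random-coding achievability bound and a converse bound. For achievability I would invoke the $\kappa\beta$ (or random-coding union) bound of \cite{polyanski} specialized to the spherical input; for the converse I would use the meta-converse, equivalently Shannon's sphere-packing bound, evaluated with the capacity-achieving output distribution. Expanding both bounds with Berry--Esseen turns the $\epsilon$-threshold on $\imath_n$ into the Gaussian quantile $Q^{-1}(\epsilon)$, yielding $\log_2 M^{*}(n,\epsilon) = n C_{\infty,0}(\alpha) - \sqrt{nV(\alpha)}\,Q^{-1}(\epsilon) + \tfrac12\log_2 n + O(1)$; dividing by $n$ gives \eqref{eq:Cap_polyankiy} up to an $O(1/n)$ residual. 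The hypothesis $\epsilon \in (0,0.5)$ is used so that $Q^{-1}(\epsilon) > 0$, making the dispersion term a genuine rate penalty and keeping the expansion monotone in the regime of interest.

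The main obstacle is the third-order term $\tfrac{1}{2n}\log_2 n$: the capacity term and the $n^{-1/2}$ dispersion term follow from a routine central-limit argument, but isolating the $\tfrac12\log n$ correction --- and showing that the achievability and converse coefficients of $\log n$ actually coincide for the AWGN channel, a special feature not shared by general channels --- requires the sharper non-asymptotic forms of the two bounds together with an Edgeworth/lattice refinement of Berry--Esseen and careful bookkeeping of the $O(1)$ remainder so that it does not swallow the logarithm. This is precisely the content of \cite[Thm.~54]{polyanski}, which we cite directly; the footnoted accuracy claim (blocklength above roughly $100$ and FER above $10^{-6}$) simply records the regime in which the omitted $O(1/n)$ and higher-order terms are numerically negligible.
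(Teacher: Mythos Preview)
The paper does not provide its own proof of this lemma; it is stated with a direct citation to \cite{polyanski} and used as a black box for the subsequent analysis. Your proposal correctly recognizes this and sketches the normal-approximation argument from the original source (spherical input, information-density decomposition, Berry--Esseen, achievability via $\kappa\beta$/RCU, converse via meta-converse, matching $\tfrac12\log n$ coefficients), which is an accurate summary of Polyanskiy--Poor--Verd\'u's Theorem~54; there is nothing to compare against beyond noting that your sketch goes well beyond what the present paper requires.
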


{In this mode, due to the random changes of the channels $h_0$ and $h_i$, we characterize performance by the average coding rate subject to a target FER $\epsilon$ and codelength $n$.} However, Lemma~\ref{lem:poly} is derived for {an AWGN channel}, and hence, cannot be directly applied to a large-scale network due to the additional non-Gaussian interference term $I_{agg}$ in \eqref{eq:system_model}. To overcome this, we utilize the equivalence in distribution (EiD) approach to express the interference as a conditionally Gaussian random variable~{\cite{EID,eid2,sawy}}, as described next.

\subsection{Conditional Gaussian Representation}
Using the EiD approach, we represent the aggregate interference for a given $r_0$ as follows
\begin{align}\label{eq:I_eq}
 I_{agg} {\overset{\text{eid}}{=}} \sqrt{\mathcal{B}} {g},
 \end{align}
where {$\overset{\text{eid}}{=}$} denotes the EiD, ${g}\sim \mathcal{CN}(0,1)$ and $\mathcal{B}>0$ is a positive random variable independent of $g$ with a {probability density function} (PDF) whose Laplace transform (LT) {is given by}~\cite{sawy},
\begin{align}
\label{LapTrans}
   { \mathcal{L}_\mathcal{B}(u)=\exp\left\{ \sum_{k=1}^{\infty} (-1)^k 2 \pi \lambda r_0^2 \left(\frac{\mathcal{P}}{r_0^\eta}\right)^k \frac{\mathbb{E}\big\{|s|^{2k}\big\}u^k}{(\eta k -2)k!} \right\}.}
\end{align}
{where $s$ is the transmitted symbol from an arbitrary codebook by an arbitrary BS and follows the same distribution as $s_0$ and $s_i$.}

The EiD in \eqref{eq:I_eq} is proved by showing that the product $ \sqrt{\mathcal{B}} {g}$ has the same characteristic function as $I_{agg}$ \cite{sawy}. Using \eqref{eq:I_eq}, the equivalent system model is expressed as 
\begin{align}
y=\sqrt{\mathcal{P}} h_0  r_0^{-\eta/2} s_0 +  \sqrt{\mathcal{B}} {g}  +w,\nonumber 
\end{align}
where for a given $\mathcal{B}$, the lumped term $\sqrt{\mathcal{B}} {g}+w$ is Gaussian with zero mean and variance $\mathcal{B}+\sigma_w^2$. Hence, the resulting conditional SINR, given $\mathcal{B}$ and $h_0$, is given by 
\begin{equation}\label{eq:SINR}
 {\Upsilon}=\frac{|h_0 |^2  \mathcal{P} r_0^{-\eta}}{\mathcal{B}+\sigma_w^2}.   
\end{equation}
The random variable $\mathcal{B}$ {in \eqref{eq:SINR}} augments the noise power with the aggregate network interference power, for any constellation {of $s_0$ and $s_i$. If $s_0$ and $s_i$ is Gaussian distributed} (Gaussian codebooks), it can be shown that \eqref{LapTrans} simplifies to \cite{sawy}
\begin{align}
\label{LapTrans_Gaussian}
    \mathcal{L}_{\mathcal{B}_g}(u)=\exp\left\{\frac{- 2 \pi \lambda u \mathcal{P} r_0^{2-\eta}}{\eta-2}  {}_2 F_1\left(1,1-\frac{2}{\eta};2-\frac{2}{\eta};-\frac{u \mathcal{P}}{r_0^\eta}  \right) \right\}.
\end{align}

Since at this point, interference is modeled as conditionally Gaussian, and since Lemma \ref{lem:poly} provides a tight approximation when noise is Gaussian, then the tightness of the approximation provided in Lemma \ref{lem:poly} holds here too. Thus, from this point onward, we focus on analyzing performance using $R_{n,\epsilon}$ as a surrogate for the maximum achievable rate in the FBR given that this is a tight approximation. Under this framework, the maximum average coding rate is defined next.
\vspace{-0.2cm}
{\begin{definition}\label{def_avg_Rate}
Given a DL large-scale network with a distance $r_0$ between the desired UE and the serving BS as defined in (1), assuming the SINR knowledge is available at the transmitter and the receiver, we define the maximum average coding rate in the FBR with blocklength $n$ and FER $\epsilon$ with $\alpha_0=\frac{\mathcal{P}r_0^{-\eta}}{\sigma_w^2}$, as $R_{n,\epsilon} (\alpha_0 )=\mathbb{E}_{h_0,\mathcal{B}} \{R_{n,\epsilon} (\Upsilon)\}$, where $\Upsilon$ is the SINR defined in \eqref{eq:SINR} and $R_{n,\epsilon} (\Upsilon)$ is as defined in Lemma~\ref{lem:poly}.
\end{definition}}

{Next, we use Lemma \ref{lem:poly} and Def. \ref{def_avg_Rate} to characterize the average coding rate for a large-scale network in the FBR, first using Gaussian codebooks, and then under finite constellations (QAM). Then,} we present a practical MLPCM to validate the achievability of the derived theoretical benchmarks.

\subsection{Average Coding Rate in the Finite Block-Length Regime}\label{sec:outage_capacity}


By virtue of the EiD, the AWGN results in Lemma~\ref{lem:poly} can be {now} extended to large-scale networks. For a given intended UE distance $r_0$, the FBR average coding rate is characterized in the following theorem.
\begin{theorem}\label{Theorem:Avg_coding_Network}
The maximum average coding rate of the large-scale network modeled by~\eqref{eq:system_model} {with Gaussian codebooks,} blocklength $n$, FER $\epsilon$, and distance $r_0$ between the {UE and its serving BS} is given by
\begin{align}\label{eq:Capacity_FB_r_0}
    \mathcal{R}_{n,\epsilon}({\alpha_0})=\mathcal{C}_{\infty,0}({\alpha_0})-\frac{\mathcal{V}({\alpha_0}) Q^{-1}(\epsilon)}{\sqrt{n}} {+\frac{1}{2n} \log_2 n},
\end{align}
{where $\mathcal{C}_{\infty,0}({\alpha_0})=\int_{0}^{\infty} \exp\left(-\frac{2^{ c}-1}{{\alpha_0}}\right) \mathcal{L}_{\mathcal{B}_g}\left\{\frac{2^{ c}-1}{\mathcal{P} r_0^{-\eta}}\right\} d c,$ $
\mathcal{V}({\alpha_0})=\int_{0}^{\log_2(e)} e^{\frac{-z(v)}{{\alpha_0} } }\mathcal{L}_{\mathcal{B}_g}\hspace{-0.05cm}\left\{\frac{r_0^{\eta}}{\mathcal{P} }z(v)\right\} d{v}$, $\alpha_0=\frac{\mathcal{P} r^{-\eta}_0}{\sigma_w^2}$, $z(v)=\sqrt{\frac{1}{1-\frac{{v^2}}{\log_2^2(e)}}}-1$,
and $\mathcal{L}_{\mathcal{B}_g}\{\cdot\}$ is given in \eqref{LapTrans_Gaussian}.}
\end{theorem}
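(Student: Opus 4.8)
The plan is to start from Definition~\ref{def_avg_Rate}, which expresses the quantity of interest as $\mathcal{R}_{n,\epsilon}(\alpha_0) = \mathbb{E}_{h_0,\mathcal{B}}\{R_{n,\epsilon}(\Upsilon)\}$ with $R_{n,\epsilon}$ given by Lemma~\ref{lem:poly}. Since the blocklength term $\tfrac{1}{2n}\log_2 n$ in \eqref{eq:Cap_polyankiy} is deterministic, it passes through the expectation untouched, so the whole problem reduces to computing the two averages $\mathbb{E}\{C_{\infty,0}(\Upsilon)\} = \mathbb{E}\{\log_2(1+\Upsilon)\}$ and $\mathbb{E}\{\sqrt{V(\Upsilon)}\}$, where $\Upsilon = |h_0|^2 \mathcal{P} r_0^{-\eta}/(\mathcal{B}+\sigma_w^2)$ from \eqref{eq:SINR}. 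I would define $\mathcal{C}_{\infty,0}(\alpha_0) := \mathbb{E}\{\log_2(1+\Upsilon)\}$ and $\mathcal{V}(\alpha_0) := \mathbb{E}\{\sqrt{V(\Upsilon)}\}$ and then show each matches the integral expression claimed.

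For the mean capacity term, I would use the standard identity $\mathbb{E}\{\log_2(1+\Upsilon)\} = \int_0^\infty \Pr(\log_2(1+\Upsilon) > c)\,dc = \int_0^\infty \Pr(\Upsilon > 2^c - 1)\,dc$. Conditioning on $\mathcal{B}$ and using that $|h_0|^2$ is exponential with unit mean (Rayleigh fading), $\Pr(\Upsilon > x \mid \mathcal{B}) = \Pr\big(|h_0|^2 > x(\mathcal{B}+\sigma_w^2)/(\mathcal{P}r_0^{-\eta})\big) = \exp\big(-x(\mathcal{B}+\sigma_w^2)/(\mathcal{P}r_0^{-\eta})\big)$. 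Splitting the exponential as $\exp(-x\sigma_w^2/(\mathcal{P}r_0^{-\eta}))\cdot\exp(-x\mathcal{B}/(\mathcal{P}r_0^{-\eta}))$ and taking the expectation over $\mathcal{B}$ turns the second factor into exactly the Laplace transform $\mathcal{L}_{\mathcal{B}_g}\{x/(\mathcal{P}r_0^{-\eta})\}$ evaluated at argument $u = x/(\mathcal{P}r_0^{-\eta})$; recognizing $\sigma_w^2/(\mathcal{P}r_0^{-\eta}) = 1/\alpha_0$ and substituting $x = 2^c - 1$ yields the stated formula for $\mathcal{C}_{\infty,0}(\alpha_0)$. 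Here I should invoke Tonelli to justify swapping the $dc$ integral with the expectation over $(h_0,\mathcal{B})$, which is immediate since the integrand is nonnegative.

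For the dispersion term, the key observation is that $V(\alpha) = \frac{\alpha(\alpha+2)}{(\alpha+1)^2}\log_2^2(e) = \big(1 - \frac{1}{(1+\alpha)^2}\big)\log_2^2(e)$, so $\sqrt{V(\Upsilon)} = \log_2(e)\sqrt{1 - (1+\Upsilon)^{-2}}$, which is a bounded function of $\Upsilon$ taking values in $[0,\log_2(e))$. Again writing the expectation of this nonnegative bounded random variable via the tail formula $\mathbb{E}\{\sqrt{V(\Upsilon)}\} = \int_0^{\log_2(e)} \Pr(\sqrt{V(\Upsilon)} > v)\,dv$, I would invert the event $\sqrt{V(\Upsilon)} > v$ to an event on $\Upsilon$: solving $\log_2(e)\sqrt{1-(1+\Upsilon)^{-2}} > v$ gives $(1+\Upsilon)^{-2} < 1 - v^2/\log_2^2(e)$, i.e. $\Upsilon > (1 - v^2/\log_2^2(e))^{-1/2} - 1 =: z(v)$, which is precisely the function $z(v)$ in the statement. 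Then the same conditional-exponential-plus-Laplace-transform computation as above gives $\Pr(\Upsilon > z(v)) = e^{-z(v)/\alpha_0}\,\mathcal{L}_{\mathcal{B}_g}\{r_0^\eta z(v)/\mathcal{P}\}$, matching the claimed integral for $\mathcal{V}(\alpha_0)$. The one subtlety to watch is monotonicity of the maps $c \mapsto 2^c-1$ and $v \mapsto z(v)$ (both strictly increasing on the relevant ranges) so that the tail events translate cleanly; this is routine but should be noted.

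The main obstacle, such as it is, is bookkeeping rather than anything deep: correctly identifying the argument of the Laplace transform $\mathcal{L}_{\mathcal{B}_g}$ in each case (it is $u = x/(\mathcal{P}r_0^{-\eta}) = x r_0^\eta/\mathcal{P}$ for threshold $x$ on $\Upsilon$, and the two $\mathcal{L}_{\mathcal{B}_g}$ arguments in the theorem statement — written as $\tfrac{2^c-1}{\mathcal{P}r_0^{-\eta}}$ and $\tfrac{r_0^\eta}{\mathcal{P}}z(v)$ — are the same expression under the two different thresholds), and algebraically confirming the $z(v)$ inversion. Everything else is the decomposition $\mathbb{E}\{f(\Upsilon)\} = \int \Pr(f(\Upsilon)>t)\,dt$ for nonnegative $f$, the Rayleigh (exponential power) tail of $|h_0|^2$, and the definition \eqref{LapTrans_Gaussian} of $\mathcal{L}_{\mathcal{B}_g}$; no exchange-of-limits issue arises beyond Tonelli on nonnegative integrands.
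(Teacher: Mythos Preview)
Your proposal is correct and follows essentially the same approach as the paper: decompose $\mathbb{E}\{R_{n,\epsilon}(\Upsilon)\}$ into the deterministic $\tfrac{1}{2n}\log_2 n$ term plus the two expectations $\mathbb{E}\{\log_2(1+\Upsilon)\}$ and $\mathbb{E}\{\sqrt{V(\Upsilon)}\}$, and evaluate each via the tail-integral identity combined with the exponential distribution of $|h_0|^2$ and the Laplace transform of $\mathcal{B}_g$. The paper cites an external reference for the $\mathcal{C}_{\infty,0}$ term and relegates the $\mathcal{V}$ computation (identical to yours, including the $z(v)$ inversion and the Fubini/Tonelli justification) to an appendix, but the argument is the same.
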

 \vspace{-0.2cm}
\begin{proof}
{By virtue of the EiD approach, Lemma~\ref{lem:poly} is applicable to large-scale networks by replacing $\alpha$ by $\Upsilon$ where $\Upsilon$ is defined in \eqref{eq:Cap_polyankiy}, which leads to a maximum coding rate $R_{n,\epsilon}(\Upsilon)$ under a given $h_0$ and $\mathcal{B}$. The average rate is then $ \mathcal{R}_{n,\epsilon}({\alpha_0}) = \mathbb{E}\{R_{n,\epsilon}(\Upsilon)\}$, where the averaging is over $h_0$ and $\mathcal{B}$.} {Thus, we need to average $C_{\infty,0}(\Upsilon)$ and $\sqrt{V(\Upsilon)}$ over $\mathcal{B}$ and $h_0$.} The term $\mathcal{C}_{\infty,0}({\alpha_0})=\mathbb{E}
\{C_{\infty,0}(\Upsilon)\}$ {is the average capacity in a large-scale network in the AR with CSI available at the BSs, which was} derived in~\cite{r_o}. {It remains to find } $\mathbb{E}
\{\sqrt{{V}(\Upsilon)}\}$ which is shown to be equal to $\mathcal{V}(\alpha_0)$ in App. \ref{sec:Appendix3}.
\end{proof}
Theorem 1 governs the maximum average rate at a fixed distance $r_0$ between the UE and its serving BS, considering the stochastic locations of the interfering BSs and the Rayleigh {fading environment}. The following corollary generalizes Theorem 1 by accounting for the randomness of $r_0$.
\begin{Corollary}\label{Collary:Avg_Coding_Rate}
The average coding rate of the large-scale network modeled by~\eqref{eq:system_model} with blocklength $n$, FER $\epsilon$, {and Gaussian codebooks} is given by
\begin{align}
    \mathcal{\Bar{R}}_{n,\epsilon} = \mathcal{\bar{C}}_{\infty,0}-\frac{\mathcal{\bar{V}} Q^{-1}(\epsilon)}{\sqrt{n}}{+\frac{1}{2n} \log_2 n},\nonumber
\end{align}
where $\mathcal{ \Bar{C}}_{\infty,0}= \int_{0}^{\infty} \mathcal{ {C}}_{\infty,0}({\alpha_0}) f_{r_0}(r_0) d r_0 $
and $ \mathcal{ \Bar{V}}= \int_{0}^{\infty} \mathcal{ {V}}({\alpha_0}) f_{r_0}(r_0) d r_0$.
\end{Corollary}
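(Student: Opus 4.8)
The plan is to derive the corollary directly from Theorem~\ref{Theorem:Avg_coding_Network} by de-conditioning on the serving distance $r_0$. Theorem~\ref{Theorem:Avg_coding_Network} already furnishes the average coding rate $\mathcal{R}_{n,\epsilon}(\alpha_0)$ for a \emph{fixed} $r_0$, with the expectation over the fading $h_0$ and the interference surrogate $\mathcal{B}$ carried out. Hence, by the tower property of expectation, the overall average coding rate is $\mathcal{\bar{R}}_{n,\epsilon}=\mathbb{E}_{r_0}\{\mathcal{R}_{n,\epsilon}(\alpha_0)\}$, where $\alpha_0=\mathcal{P}r_0^{-\eta}/\sigma_w^2$ is a deterministic function of $r_0$.

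The first step is to invoke the contact-distance distribution of the PPP: since the BSs form a PPP of intensity $\lambda$ and each UE is served by its nearest BS, the distance $r_0$ has the Rayleigh density $f_{r_0}(r_0)=2\pi\lambda r_0 e^{-\pi\lambda r_0^2}$ for $r_0\ge 0$ — this is precisely the $f_{r_0}$ appearing in the statement. The second step is to substitute the closed form \eqref{eq:Capacity_FB_r_0} for $\mathcal{R}_{n,\epsilon}(\alpha_0)$ into $\mathbb{E}_{r_0}\{\cdot\}=\int_0^\infty(\cdot)f_{r_0}(r_0)\,dr_0$ and split the integral by linearity of expectation. The summand $\tfrac{1}{2n}\log_2 n$ is a constant and passes through unchanged, while $Q^{-1}(\epsilon)/\sqrt{n}$ factors out of the expectation of $\mathcal{V}(\alpha_0)$; noting that $\mathcal{C}_{\infty,0}(\alpha_0)$ and $\mathcal{V}(\alpha_0)$ depend on $r_0$ only through $\alpha_0$ and through the explicit $r_0^\eta$ inside $\mathcal{L}_{\mathcal{B}_g}\{\cdot\}$, the defining equalities $\mathcal{\bar{C}}_{\infty,0}=\int_0^\infty\mathcal{C}_{\infty,0}(\alpha_0)f_{r_0}(r_0)\,dr_0$ and $\mathcal{\bar{V}}=\int_0^\infty\mathcal{V}(\alpha_0)f_{r_0}(r_0)\,dr_0$ then yield the claimed expression.

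I do not expect a substantial obstacle here; the only point requiring a word of justification is the legitimacy of the term-by-term interchange of expectation and integration, which follows from Tonelli's theorem since $\mathcal{C}_{\infty,0}(\alpha_0)\ge 0$ and $0\le\mathcal{V}(\alpha_0)\le\log_2(e)$ are nonnegative and integrable against the light-tailed density $f_{r_0}$ (the first because $\mathcal{C}_{\infty,0}(\alpha_0)\le\log_2(1+\alpha_0)$, whose mean over the nearest-BS distance is finite for $\eta>2$). No further convergence or measurability issues arise, and the corollary follows.
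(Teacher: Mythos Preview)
Your proposal is correct and follows essentially the same approach as the paper: the paper's proof simply states that the result follows by directly averaging \eqref{eq:Capacity_FB_r_0} with respect to $r_0$, whose density is the Rayleigh law $f_{r_0}(r_0)=2\pi\lambda r_0 e^{-\pi\lambda r_0^2}$. Your additional care with the tower property, linearity, and the Tonelli justification goes beyond what the paper provides but is entirely in the same spirit.
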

\begin{proof}
The average capacity and average channel dispersion are derived by directly averaging~\eqref{eq:Capacity_FB_r_0} with respect to $r_0$ which has the following probability density function{\cite{r_o}} 
\begin{align}\label{eq:r_o}
    f_{r_0}(r_0)=   2 \pi \lambda r_0 e^{-\pi \lambda r_0^2},\ \  0<r_0<\infty. 
\end{align}
\end{proof}
{As} can be seen in Theorem~\ref{Theorem:Avg_coding_Network} and Corollary~\ref{Collary:Avg_Coding_Rate}{, the average coding rate of the large-scale network in the FBR is} the average capacity in the {AR, plus} a penalty term (loss) which is a function of the blocklength $n$, and FER {$\epsilon$, in addition to the {average SNR $\alpha_0$} and the stochastic geometry of the network, manifested in the Laplace transform term}. This penalty term implies a trade-off between the reliability and the average rate: the higher the reliability requirement (low FER), the lower the rate. However, for a given average achievable rate, the reliability can be increased ($\epsilon$ decreased) by increasing the blocklength, converging to the AR performance as $n$ grows to infinity. 

{Theorem \ref{Theorem:Avg_coding_Network} expresses the average coding rate under Gaussian codebooks, which is of theoretical relevance}. Characterizing the performance in the FBR {when using} a finite constellation {is of practical interest} and is discussed {next}.

\subsection{Average Rate Using Finite Constellations}\label{sec:Constellation}
This section focuses on achievable rates in the FRB {under transmission using} standard finite constellations. {The maximum achievable rate using an $M$-ary constellation over an AWGN channel in the FBR was given
in~\cite{modulation}. Following the same methodology as in Sec.~\ref{sec:outage_capacity}, we use the EiD approach to extend the results of~\cite{modulation} to large-scale networks.} 

Consider an encoder that maps messages from a message set into a length $n$ sequence of symbols chosen from an $M$-ary constellation consisting of $M$ complex-valued symbols $s_1,s_2,...,s_M$. {Using an $M$-ary QAM constellation, denote by $\mathcal{C}^{\text{\tiny(M)}}_{\infty,0}(\alpha_0)$ the average capacity achieved in the AR, by ${\mathcal{ V}}^{\text{\tiny{(M)}}}(\alpha_0)$ the average square root channel dispersion, and by $\mathcal{R}^{\text{\tiny(M)}}_{n,\epsilon}(\alpha_0)$ the average coding rate under a blocklength $n$ and FER $\epsilon$.} {To characterize the average capacity of the $M$-ary QAM under FBR, the conditional mutual information $I(s_0;y|h_0,\mathcal{B},r_0)$ has to be averaged with respect to the aggregate interference power $\mathcal{B}$ and the channel gain $h_0$. However, the distribution of the aggregate interference power $\mathcal{B}$ is unknown which leads to an intractable expression. Although such an expression can be evaluated using Monte Carlo simulations, an expression that is amenable to numerical integration is preferable. A similar argument applies for ${\mathcal{ V}}^{\text{\tiny{(M)}}}(\alpha_0)$. Hence, to obtain tractable integral expressions, we adopt an approximation for the distribution of the interference power $\mathcal{B}$ proposed in \cite{Gamma_app_1,Gamma_app_2} which was shown to be a tight approximation. In particular, in \cite{Gamma_app_1,Gamma_app_2}, the distribution of the interference power $\mathcal{B}$ was approximated by a Gamma distribution 
\begin{align}\label{eq:Gamma_app}
    f(x;q,\theta)=\frac{x^{q-1} e^{\frac{-x}{\theta}}}{\Gamma(q) \theta^q},\ \ \ \ \ \ \ \ \   x>0
\end{align}
where $q=\frac{4 \pi \lambda r_0^2 (\eta-1)}{(\eta-2)^2}$ is the shape parameter, and $\theta=\frac{(\eta-2)\mathcal{P}}{2(\eta-1)r_0^\eta}$ is the scale parameter. The approximation is based on a moment-matching approach. The first two moments of $\mathcal{B}$ can be obtained from \eqref{LapTrans} as 
\begin{align}
\mathbb{E}\{\mathcal{B}\}&=\left.\frac{d \mathcal{L}_\mathcal{B}(u)}{du}\right|_{u=0} = \frac{2 \pi\lambda r_0^{2-\eta} \mathcal{P}}{ (\eta-2)}\nonumber\\
\mathbb{E}\{\mathcal{B}^2\}&= \left.\frac{d^2 \mathcal{L}_\mathcal{B}(u)}{du^2}\right|_{u=0}=\left(\frac{2 \pi\lambda r_0^{2-\eta} \mathcal{P}}{ (\eta-2)}\right)^2+\frac{ \pi \lambda r_0^{2-2\eta} \mathcal{P}^2}{ (\eta-1)}.\nonumber
\end{align}
The scale and shape parameters of the gamma distribution are then obtained as $q=\frac{\mathbb{E}\{\mathcal{B}\}^2}{\mathbb{V}ar\{\mathcal{B}\}}$ and $\theta=\frac{\mathbb{V}ar\{\mathcal{B}\}}{\mathbb{E}\{\mathcal{B}\}}$, where  $\mathbb{V}ar\{\mathcal{B}\} = \mathbb{E}\{\mathcal{B}^2\}-\mathbb{E}\{\mathcal{B}\}^2$.
}

{Fig.~\ref{fig:gamma_vs_exact} numerically demonstrates the accuracy of this approximation by plotting the approximate {CDF} of $\mathcal{B}$ (using \eqref{eq:Gamma_app}) and its exact {CDF} obtained from a numerical inversion {of}~\eqref{LapTrans}. The figure shows that the Gamma approximation in \eqref{eq:Gamma_app} provides a fairly tight approximation for the CDF of $\mathcal{B}$. Note that the Gamma approximation is crucial to maintain the tractability of the analysis. Using this approximation, the average coding rate of a large-scale network in the FBR, at a {given} distance $r_0$ and using $M$-ary QAM constellation, is characterized.}

\begin{figure}
    \centering
    \includegraphics[width=0.55\linewidth]{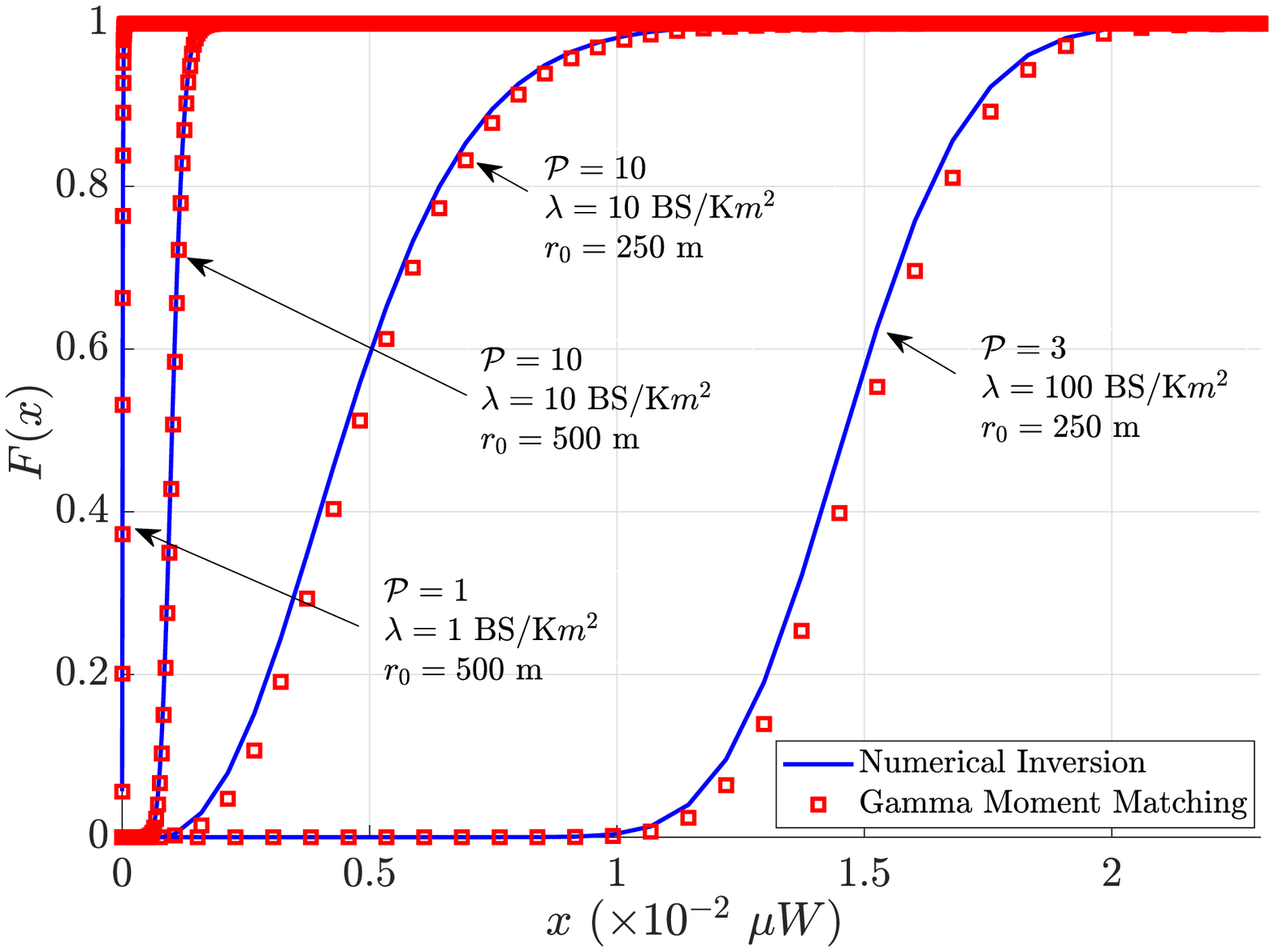}
    \caption{CDF of the interference power {$\mathcal{B}$} and its approximation at different values of\hspace{0.08cm}$\mathcal{P}$,\hspace{0.05cm}$\lambda$ and $r_0$}
    \label{fig:gamma_vs_exact}
\end{figure}
{The maximum achievable rate for a finite constellation is provided in \cite{modulation}. However, the channel dispersion in \cite{modulation} is defined as the variance of the information density ($V=Var(i(x;y))$). This definition is considered to be imprecise according to \cite{polyanski}, where the channel dispersion is defined as the conditional variance of the information density for finite constellation and is defined as the unconditional variance of the information density in the case of using Gaussian signals. Therefore, in this paper, we derive the maximum achievable rate using this definition for large-scale networks in the FBR in the following theorem.}
{
\begin{theorem}\label{Theorem:ModulationCapacity}
For a DL large-scale network with {BS} density $\lambda\ \mathrm{BS}/\mathrm{km}^2$, blocklength $n$, FER $\epsilon$, distance $r_0$ between the UE and the serving BS, and an $M$-ary constellation with symbols $\{\underline{s}_m\}_{m=1}^M$ where $\underline{s}_m=[\mathcal{R}e\{s_m\},\mathcal{I}m\{s_m\}]^T$ and $\mathcal{R}e\{s_m\}$ and $\mathcal{I}m\{s_m\}$ are the real and imaginary parts of $s_m$, respectively, if the interference power $\mathcal{B}$ follows a Gamma distribution \eqref{eq:Gamma_app}, then the {average} achievable coding rate can be expressed by
\begin{align}\label{eq:modulation}
    \mathcal{R}^{\text{\tiny(M)}}_{n,\epsilon}(\alpha_0)=\mathcal{C}^{\text{\tiny(M)}}_{\infty,0}(\alpha_0)-\frac{\mathcal{{V}}^{\text{\tiny{(M)}}}(\alpha_0) Q^{-1}(\epsilon)}{\sqrt{n}}{+\frac{1}{2n} \log_2 n},
\end{align}
where $\alpha_0=\frac{
\mathcal{P} r_0^{-\eta} }{\sigma_w^2}$,
{\begin{align}
  \mathcal{C}^{\text{\tiny(M)}}_{\infty,0}(\alpha_0)&=\log_2(M)- \frac{1}{M \pi}\sum_{m=1}^{M}\int_{0}^{\infty}\int_{{\mathbb{R}^2}} e^{-\|\underline{t}\|^2} {g_m(\underline{t})}  f_{{\Upsilon}}(\Upsilon|r_0) d\underline{t} \ d \Upsilon \label{eq:Capacity_mod_FB}\\
{\mathcal{ V}}^{\text{\tiny{(M)}}}(\alpha_0)&=\int_{0}^{\infty}\sqrt{V_{\text{\tiny M} }(\Upsilon)} f_{{\Upsilon}}(\Upsilon|r_0)\ d\Upsilon \label{eq:V_averaged_mod}\\
    {V}_{\text{\tiny{M}}}(\Upsilon)&=\sum_{m=1}^{M}\left(\int_{{\mathbb{R}^2}} \frac{e^{-\|\underline{t}\|^2}}{M \pi}  {g_m(\underline{t})^2} d\underline{t} 
  -\left( \int_{\mathbb{R}^2} \frac{e^{-\|\underline{t}\|^2}}{M \pi}  {g_m(\underline{t})}d \underline{t} \right)^2 \right)\label{eq:V_mod}
\end{align}}
{where $g_m(\underline{t})=\log_2\left( \sum_{l=1}^{M} e^{ -2\sqrt{{\Upsilon}}\underline{t}^T(\underline{s}_m-\underline{s}_l)-  {\Upsilon}\|\underline{s}_m-\underline{s}_l\|^2} \right)$, $\underline{t}=[t_1,t_2]^T$ is a 2-D real-valued vector, $\Upsilon$ is the SINR given in~\eqref{eq:SINR}}, and  
\begin{align}\label{eq:pdf_SINR}
f_{{\Upsilon}}(\Upsilon|r_0)=\frac{\exp\left\{\frac{-\Upsilon \sigma_w^2}{\mathcal{P}r_0^{-\eta}}\right\}}{\mathcal{P}r_0^{-\eta}} \frac{(1+\theta r_0^{\eta} \Upsilon/\mathcal{P})\sigma_w^2+q\theta}{(1+\theta r_0^{\eta} \Upsilon/\mathcal{P})^{q+1}},  0\leq \Upsilon\leq \infty.
\end{align}
\end{theorem}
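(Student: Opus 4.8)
The plan is to reuse the equivalence-in-distribution (EiD) machinery exactly as in Theorem~\ref{Theorem:Avg_coding_Network}, but starting from the finite-constellation point-to-point result of~\cite{modulation} instead of Lemma~\ref{lem:poly}, and with the corrected channel-dispersion definition from~\cite{polyanski}. First I would recall that, conditioned on $h_0$ and $\mathcal{B}$, the equivalent channel is AWGN with SINR $\Upsilon$ given in~\eqref{eq:SINR}; hence the point-to-point $M$-ary result applies verbatim with $\alpha$ replaced by $\Upsilon$, yielding a conditional maximum rate of the normal-approximation form $C_{\infty,0}^{\text{\tiny(M)}}(\Upsilon) - \sqrt{V_{\text{\tiny M}}(\Upsilon)}\,Q^{-1}(\epsilon)/\sqrt{n} + \tfrac{1}{2n}\log_2 n$. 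Here $C_{\infty,0}^{\text{\tiny(M)}}(\Upsilon)$ is the constellation-constrained mutual information over an AWGN channel at SNR $\Upsilon$ and $V_{\text{\tiny M}}(\Upsilon)$ is the \emph{conditional} variance of the information density; I would derive the closed forms inside~\eqref{eq:Capacity_mod_FB} and~\eqref{eq:V_mod} by writing $y = \sqrt{\Upsilon}\,\underline{s}_m + \underline{t}$ with $\underline{t}$ a standard 2-D real Gaussian (the $e^{-\|\underline t\|^2}/(M\pi)$ factor being the per-symbol Gaussian density with the $1/M$ uniform prior), expanding the log-likelihood ratio, and identifying the function $g_m(\underline t)$. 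This step is essentially bookkeeping once the point-to-point expressions of~\cite{modulation}/\cite{polyanski} are in hand.

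Next I would take the expectation over $(h_0,\mathcal{B})$ as dictated by Definition~\ref{def_avg_Rate}. By linearity, $\mathcal{R}^{\text{\tiny(M)}}_{n,\epsilon}(\alpha_0) = \mathbb{E}\{C_{\infty,0}^{\text{\tiny(M)}}(\Upsilon)\} - Q^{-1}(\epsilon)\,\mathbb{E}\{\sqrt{V_{\text{\tiny M}}(\Upsilon)}\}/\sqrt{n} + \tfrac{1}{2n}\log_2 n$, so it suffices to show the two expectations equal $\mathcal{C}^{\text{\tiny(M)}}_{\infty,0}(\alpha_0)$ and $\mathcal{V}^{\text{\tiny(M)}}(\alpha_0)$ respectively. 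Both reduce to a one-dimensional average over $\Upsilon$ once I establish the conditional SINR density $f_{\Upsilon}(\Upsilon|r_0)$ in~\eqref{eq:pdf_SINR}. To get that density I would use the Gamma approximation~\eqref{eq:Gamma_app} for $\mathcal{B}$: since $|h_0|^2 \sim \mathrm{Exp}(1)$ is independent of $\mathcal{B}$, the numerator $|h_0|^2 \mathcal{P} r_0^{-\eta}$ is exponential, and $\Upsilon = |h_0|^2 \mathcal{P} r_0^{-\eta}/(\mathcal{B}+\sigma_w^2)$; conditioning on $\mathcal{B}=x$ gives $\Pr(\Upsilon>\gamma\mid\mathcal{B}=x) = \exp(-\gamma(x+\sigma_w^2)r_0^\eta/\mathcal{P})$, and then I would integrate against the Gamma density $f(x;q,\theta)$, recognizing the resulting integral as the Laplace transform of a Gamma law, $\mathbb{E}\{e^{-s\mathcal{B}}\} = (1+s\theta)^{-q}$. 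Differentiating the complementary CDF $\Pr(\Upsilon>\gamma) = e^{-\gamma\sigma_w^2 r_0^\eta/\mathcal{P}}(1+\gamma\theta r_0^\eta/\mathcal{P})^{-q}$ in $\gamma$, and using the product rule (one term from the exponential, one from the power), produces exactly the two-term numerator and the $(q+1)$-power denominator displayed in~\eqref{eq:pdf_SINR}. Substituting this density into the $\Upsilon$-averages of $C_{\infty,0}^{\text{\tiny(M)}}(\cdot)$ and $\sqrt{V_{\text{\tiny M}}(\cdot)}$ yields~\eqref{eq:Capacity_mod_FB}--\eqref{eq:V_averaged_mod} and completes the proof.

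The main obstacle I anticipate is not any single computation but rather getting the dispersion term conceptually right: \cite{modulation} uses $V = \mathrm{Var}(i(X;Y))$ (unconditional), whereas the tight normal approximation requires the \emph{conditional} variance $\mathbb{E}_{\underline s_m}\mathrm{Var}(i(\underline s_m;Y)\mid \underline s_m)$ for a finite input, which is what~\eqref{eq:V_mod} encodes via the ``second moment minus square of first moment'' structure inside the sum over $m$. I would need to argue carefully that, conditioned on $(h_0,\mathcal{B})$, the per-channel-use information density has the claimed conditional mean and variance, so that the point-to-point FBR expansion of~\cite{polyanski} applies with $V_{\text{\tiny M}}(\Upsilon)$ in place of the Gaussian dispersion $V(\Upsilon)$; the tightness inheritance argument is the same one already invoked after~\eqref{LapTrans_Gaussian} (conditionally Gaussian noise plus a tight point-to-point approximation). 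A secondary, purely technical point is justifying the interchange of the expectation over $\mathcal{B}$ with the (finite-range, bounded-integrand) integrals defining $C_{\infty,0}^{\text{\tiny(M)}}$ and $\sqrt{V_{\text{\tiny M}}}$, which follows from dominated convergence since $0 \le C_{\infty,0}^{\text{\tiny(M)}}(\Upsilon) \le \log_2 M$ and $V_{\text{\tiny M}}(\Upsilon)$ is uniformly bounded in $\Upsilon$.
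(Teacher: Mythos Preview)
Your proposal is correct and follows essentially the same route as the paper: apply the point-to-point $M$-ary FBR expansion (with the conditional-variance dispersion of~\cite{polyanski} rather than the unconditional one in~\cite{modulation}) at fixed $\Upsilon$, then average over $\Upsilon$ using the Gamma-approximation density, which is obtained by writing the CCDF $\Pr(\Upsilon>\gamma)=e^{-\gamma\sigma_w^2 r_0^\eta/\mathcal{P}}\mathcal{L}_{\mathcal{B}}\{\gamma r_0^\eta/\mathcal{P}\}$ and differentiating. The paper relegates the explicit computation of $V_{\text{\tiny M}}(\Upsilon)$ in~\eqref{eq:V_mod} to an appendix, but your sketch of that step (expand the log-likelihood ratio after the change of variable $\underline t=(\underline y-\sqrt{\Upsilon}\,\underline s_m)$ and compute $\mathbb{E}_{S}\{\mathrm{Var}(i(S;Y)\mid S)\}$) is exactly what is done there.
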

}
\begin{proof}
{The average capacity of an $M$-ary constellation {($\mathcal{C}^{\text{\tiny(M)}}_{\infty,0}(\alpha_0)$)} at a fixed $r_0=r$ is derived by averaging the {conditional} mutual information between the transmitted symbol $s_0\in\{s_m\}_{m=1}^{M}$ and the received signal $y$, i.e., $I(s_0;y|\Upsilon,r_0=r)$, provided in \cite{modulation}, with respect to the SINR $\Upsilon$ which is a function of {interference power} $\mathcal{B}$ and fading channel $h_0$ statistics. Similarly, the average square-root channel dispersion (${\mathcal{ V}}^{\text{\tiny{(M)}}}(\alpha_0)$) is derived by averaging the channel dispersion in \eqref{eq:V_mod} with respect to $\Upsilon$. The derivation of \eqref{eq:V_mod} is provided in App.~\ref{appendix}. However, instead of averaging over the distribution of $\Upsilon$ which is unknown, we average over $f_{{\Upsilon}}(\Upsilon|r_0)$ in \eqref{eq:pdf_SINR} which is obtained using the Gamma approximation provided in \eqref{eq:Gamma_app} (See App.~\ref{appendix2})}. 
\end{proof}



The following corollary generalizes Theorem 2 by accounting for the randomness of $r_0$.
 \begin{Corollary}\label{corollary:Modulation}
The average achievable coding rate of the large-scale network modeled by \eqref{eq:system_model} with blocklength $n$, FER $\epsilon$, and using an $M$-ary constellation with symbols $\{s_m\}_{m=1}^M$ is given by
\begin{align}\label{eq:modulation_r0}
    \bar{\mathcal{R}}^{\text{\tiny(M)}}_{n,\epsilon}\approx\bar{\mathcal{C}}^{\text{\tiny(M)}}_{\infty,0}-\frac{\mathcal{\bar{V}}^{\text{\tiny{(M)}}} Q^{-1}(\epsilon)}{\sqrt{n}}{+\frac{1}{2n} \log_2 n},
\end{align}
where $ { \bar{\mathcal{C}}^{\text{\tiny(M)}}_{\infty,0}=\int_{0}^{\infty} {\mathcal{ C}}^{\text{\tiny{(M)}}}_{\infty,0}(\alpha_0) f_{r_0}(r_0) d r_0}$, ${\bar{\mathcal{ V}}^{\text{\tiny{(M)}}}= \int_{0}^{\infty} {\mathcal{ V}}^{\text{\tiny{(M)}}}(\alpha_0) f_{r_0}(r_0) d r_0}$,
and $f_{r_0}(r_0)$ as defined in~\eqref{eq:r_o}. 
\end{Corollary}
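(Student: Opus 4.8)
The plan is to obtain Corollary~\ref{corollary:Modulation} by directly averaging the fixed-distance expression of Theorem~\ref{Theorem:ModulationCapacity} over the random link distance $r_0$, in exactly the same way Corollary~\ref{Collary:Avg_Coding_Rate} is derived from Theorem~\ref{Theorem:Avg_coding_Network}. First I would recall that in the PPP model with nearest-BS association, the distance $r_0$ from the typical UE to its serving BS has PDF $f_{r_0}(r_0)=2\pi\lambda r_0 e^{-\pi\lambda r_0^2}$ for $0<r_0<\infty$, as in \eqref{eq:r_o}. Since $\alpha_0=\mathcal{P}r_0^{-\eta}/\sigma_w^2$ is a deterministic monotone function of $r_0$, conditioning on $r_0$ is equivalent to conditioning on $\alpha_0$, and Theorem~\ref{Theorem:ModulationCapacity} furnishes the conditional average coding rate $\mathcal{R}^{\text{\tiny(M)}}_{n,\epsilon}(\alpha_0)$ given that value.

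Next I would invoke the law of total expectation to write $\bar{\mathcal{R}}^{\text{\tiny(M)}}_{n,\epsilon}=\mathbb{E}_{r_0}\{\mathcal{R}^{\text{\tiny(M)}}_{n,\epsilon}(\alpha_0)\}=\int_0^\infty \mathcal{R}^{\text{\tiny(M)}}_{n,\epsilon}(\alpha_0)\,f_{r_0}(r_0)\,dr_0$. Substituting \eqref{eq:modulation} and using linearity of the integral, the three additive terms separate: the deterministic term $\frac{1}{2n}\log_2 n$ integrates to itself because $f_{r_0}$ integrates to one; the capacity term gives $\mathbb{E}_{r_0}\{\mathcal{C}^{\text{\tiny(M)}}_{\infty,0}(\alpha_0)\}=\bar{\mathcal{C}}^{\text{\tiny(M)}}_{\infty,0}=\int_0^\infty \mathcal{C}^{\text{\tiny(M)}}_{\infty,0}(\alpha_0)f_{r_0}(r_0)\,dr_0$; and the dispersion term gives $\frac{Q^{-1}(\epsilon)}{\sqrt n}\mathbb{E}_{r_0}\{\mathcal{V}^{\text{\tiny(M)}}(\alpha_0)\}$ with $\bar{\mathcal{V}}^{\text{\tiny(M)}}=\int_0^\infty \mathcal{V}^{\text{\tiny(M)}}(\alpha_0)f_{r_0}(r_0)\,dr_0$. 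This reproduces \eqref{eq:modulation_r0}.

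The only points needing care are (i) the interchange of the outer $r_0$-expectation with the inner integrals defining $\mathcal{C}^{\text{\tiny(M)}}_{\infty,0}(\alpha_0)$ and $\mathcal{V}^{\text{\tiny(M)}}(\alpha_0)$ in \eqref{eq:Capacity_mod_FB}--\eqref{eq:V_averaged_mod}, which is legitimate by Tonelli's theorem since the integrands ($e^{-\|\underline t\|^2}g_m(\underline t)$, $\sqrt{V_{\text{\tiny M}}(\Upsilon)}$, and the densities $f_{\Upsilon}(\Upsilon|r_0)$, $f_{r_0}(r_0)$) are all nonnegative; and (ii) the approximation sign in \eqref{eq:modulation_r0}, which is inherited verbatim from Theorem~\ref{Theorem:ModulationCapacity} — the averaging over $r_0$ is exact, so it introduces no new approximation beyond the Gamma model \eqref{eq:Gamma_app} already assumed. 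I do not anticipate a substantive obstacle here; the result is a routine consequence of linearity of expectation together with the nearest-neighbour distance distribution of the PPP.
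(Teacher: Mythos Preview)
Your proposal is correct and follows essentially the same approach as the paper, which simply states that the result is obtained by averaging \eqref{eq:modulation} with respect to $r_0$. Your added justifications (linearity of expectation, the constant term integrating to itself, and the Tonelli argument for nonnegative integrands) are sound and merely make explicit what the paper leaves implicit.
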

 \begin{proof}
 This is derived by averaging~\eqref{eq:modulation} with respect to $r_0$. 
 \end{proof}


The {average achievable rate} depends on the distances between symbols in the constellation set, i.e. {$\|\underline{d}_{ml}\|=\|\underline{s}_m-\underline{s}_l\|$} for $m\neq l$, {where} neighboring symbols contribute more to error than far ones. Such distances are in turn affected by the constellation choice. Moreover, {the average achievable rate} is affected by the network parameters $\lambda,\ r_0$ which will be investigated later in the numerical {evaluations} section.

{The achievable rates provided so far are theoretical.} In the next section, we use MLPCM, which was proposed in \cite{future_MLPCM1,future_MLPCM2} to be used in future generations, to validate the achievability of the obtained theoretical benchmarks.

\subsection{Multilevel Polar-Coded Modulation}\label{sec:PolarCodes}
\begin{figure}
\centering
\begin{tikzpicture}[thick,scale=0.7, every node/.style={scale=0.7}]
\draw [->,line width=1] (-3.5 ,0) -- node[pos=0.2,above,thick,font=\normalsize] {$\bold{X}$}(-2.5,0) ;
\node (demux) at (-2,0) [draw,thick,minimum width=1cm,minimum height=5cm, font=\normalsize] {\rotatebox{90}{Demultiplexer}} ;

\draw [->,line width=1] (-1.5,2) -- node[above,font=\scriptsize] {$x_{1}$}(0,2) ;
\draw [->,line width=1] (-1.5 ,1) -- node[above,font=\scriptsize] {$x_{2}$} (0,1);
\filldraw[color=white, fill=black,  thick](-1,0.5) circle (0.08);
\filldraw[color=white, fill=black,  thick](-1,0) circle (0.08);
\filldraw[color=white, fill=black,  thick](-1,-0.5) circle (0.08);
\draw [->,line width=1] (-1.5 ,-1) -- (0,-1);
\draw [->,line width=1] (-1.5 ,-2) -- node[above,font=\scriptsize] {$x_{\log_2 M}$}  (0,-2) ;

\node (Encoder1) at (0.75,2) [draw,thick,minimum width=1cm,minimum height=0.7cm, font=\small] {Encoder} ;

\node (Encoder2) at (0.75,1) [draw,thick,minimum width=1cm,minimum height=0.7cm, font=\small] {Encoder} ;

\node (Encoder3) at (0.75,-1) [draw,thick,minimum width=1cm,minimum height=0.7cm, font=\small] {Encoder} ;

\node (Encoder4) at (0.75,-2) [draw,thick,minimum width=1cm,minimum height=0.7cm, font=\small] {Encoder} ;

\node (demod) at (3,0) [draw,thick,minimum width=1cm,minimum height=5cm, font=\normalsize] {\rotatebox{90}{M-ary  Modulator}} ;

\draw [->,line width=1] (1.5,2) -- node[above,font=\scriptsize] {$u_{1}$}(2.5,2) ;
\draw [->,line width=1] (1.5 ,1) -- node[above,font=\scriptsize] {$u_{2}$} (2.5,1);
\filldraw[color=white, fill=black,  thick](2,0.5) circle (0.08);
\filldraw[color=white, fill=black,  thick](2,0) circle (0.08);
\filldraw[color=white, fill=black,  thick](2,-0.5) circle (0.08);
\draw [->,line width=1] (1.5 ,-1) -- (2.5,-1);
\draw [->,line width=1] (1.5 ,-2) -- node[above,font=\scriptsize] {$u_{\log_2 M}$} (2.5,-2) ;

\draw [->,line width=1] (3.5 ,0) --  node[pos=0.28,above,font=\normalsize] {${s_0}$} (4.5,0);

\node (channel) at (5.32,0) [draw,thick,align=center,minimum width=1cm,minimum height=1cm, font=\normalsize] {Channel} ;
\draw [->,line width=1] (6.18 ,0) --  node[pos=0.65,above,font=\normalsize] {$\bold{y}$}   (7,0);

\node (transmitter) at (0.5,0) [draw,dashed,thick,align=center,minimum width=7cm,minimum height=6cm, font=\large] {} ;

\draw [-,line width=1] (7 ,2.2) --  (7,-2.2);

\draw [->,line width=1] (7 ,2.2) --  (7.4,2.2);
\node (decoder1) at (8.15,2.2) [draw,thick,align=center,minimum width=0.7 cm,minimum height=0.7cm, font=\small] {Decoder} ;
\draw [->,line width=1] (8.9 ,2.2) -- node[pos=0.8,above,font=\scriptsize] {$\widetilde{x}_1$} (11.5,2.2);
\draw [->,line width=1] (7 ,1) --  (7.4,1);
\node (Delay2) at (7.95,1) [draw,thick,align=center,minimum width=0.7cm,minimum height=0.5cm, font=\small] {Delay} ;
\node (Decoder2) at (9.6,1) [draw,thick,align=center,minimum width=0.5cm,minimum height=0.7cm, font=\small] {Decoder} ;
\draw [->,line width=1] (8.55 ,1) --  (8.85,1);
\draw [->,line width=1] (10.35 ,1) -- node[pos=0.7,above,font=\scriptsize] {$\widetilde{x}_2$} (11.5,1);
\draw [->,line width=1] (9.6 ,2.2) --  (9.6,1.35);
\filldraw[color=white, fill=black,  thick](7.5,0.4) circle (0.08);
\filldraw[color=white, fill=black,  thick](7.5,0.05) circle (0.08);
\filldraw[color=white, fill=black,  thick](7.5,-0.3) circle (0.08);
\draw [->,line width=1] (7 ,-0.9) --  (7.4,-0.9);
\node (Delay2) at (7.95,-0.9) [draw,thick,align=center,minimum width=0.5cm,minimum height=0.7cm, font=\small] {Delay} ;
\node (Decoder2) at (9.7,-0.9) [draw,thick,align=center,minimum width=0.5cm,minimum height=0.7cm, font=\small] {Decoder} ;
\draw [->,line width=1] (8.5 ,-0.9) --  (8.95,-0.9);
\draw [->,line width=1] (10.44 ,-0.9) -- node[pos=0.5,above,font=\tiny] {$\widetilde{x}_{\log_2\hspace{-0.09cm} M\hspace{-0.09cm}-\hspace{-0.05cm}1}$} (11.5,-0.9);
\draw [->,line width=1] (10 ,0) --  node[pos=0.2,above,font=\scriptsize] {$\widetilde{x}_{\log_2\hspace{-0.09cm}M\hspace{-0.05cm}-\hspace{-0.05cm}2}$} (10,-0.55);
\draw [->,line width=1] (9.2 ,0) --  node[pos=0.2,above,font=\scriptsize] {$\widetilde{x}_{1}$} (9.2,-0.55);
\filldraw[color=white, fill=black,  thick](9.4 ,-0.3) circle (0.07);
\filldraw[color=white, fill=black,  thick](9.6 ,-0.3) circle (0.07);
\filldraw[color=white, fill=black,  thick](9.8 ,-0.3) circle (0.07);
\draw [->,line width=1] (7 ,-2.2) --  (7.4,-2.2);

\node (Delay2) at (7.95,-2.2) [draw,thick,align=center,minimum width=0.5cm,minimum height=0.7cm, font=\small] {Delay} ;
\node (Decoder2) at (9.7,-2.2) [draw,thick,align=center,minimum width=0.5cm,minimum height=0.7cm, font=\small] {Decoder} ;
\draw [->,line width=1] (8.5 ,-2.2) --  (8.95,-2.2);
\draw [->,line width=1] (10.44 ,-2.2) -- node[pos=0.5,above,font=\scriptsize] {$\widetilde{x}_{\log_2\hspace{-0.09cm} M}$} (11.5,-2.2);
\draw [->,line width=1] (10 ,-1.5) --  (10,-1.85);
\draw [-,line width=1] (10 ,-1.5) --  (10.85,-1.5);
\draw [-,line width=1] (10.85 ,-1.5) --  (10.85,-0.9);
\draw [->,line width=1] (9.2 ,-1.5) -- node[pos=0.4,above,font=\scriptsize] {$\widetilde{x}_{1}$} (9.2,-1.85);
\filldraw[color=white, fill=black,  thick](9.4 ,-1.6) circle (0.07);
\filldraw[color=white, fill=black,  thick](9.6 ,-1.6) circle (0.07);
\filldraw[color=white, fill=black,  thick](9.8 ,-1.6) circle (0.07);

\node (mux) at (12,0) [draw,thick,minimum width=1cm,minimum height=5cm, font=\normalsize] {\rotatebox{90}{Multiplexer}} ;

\node (Receiver) at (9.75,0) [draw,dashed,thick,align=center,minimum width=6.5cm,minimum height=6cm, font=\large] {} ;

\draw [->,line width=1.2] (12.5 ,0) -- node[pos=0.8,above,thick,font=\normalsize] {$\widetilde{\mathbf{X}}$} (13.5,0) ;
\end{tikzpicture}
\caption{Multilevel Coded Modulation Block Diagram}
\label{fig:MCM}
\end{figure}



{To validate} the theoretical results in Sec.~\ref{sec:outage_capacity} and ~\ref{sec:Constellation}, {we use the MLPCM scheme shown} in Fig.~\ref{fig:MCM}. {The information bits (message) $\mathbf{X}\in\mathbb{F}_2^{k \log_2⁡(M)}$ are fed into a demultiplexer that splits them into $\log_2(M)$ sub-messages $\{x_i \}_{i=1}^{\log_2⁡(M)}$  where $x_i\in \mathbb{F}_2^k$. Each sub-message $x_i$ is encoded using a polar encoder with rate $R_c=k/n$ to obtain codewords $\mathbf{U}\in \mathbb{F}_2^{\log_2(M)\times n}$, where $u_i\in \mathbb{F}_2^n$ is the $i^{th}$ row of $\mathbf{U}$ ($i^{th}$ codeword). Then, the transmitter collects one bit from each of the codewords $u_i$ to form a $\log_2(M)$-bit symbol which is then mapped to a symbol $s_0$ from an $M$-ary constellation such as the $M$-QAM constellation. The symbols are then scaled according to the power constraint in Sec. II. The result of repeating this process for all $n$ codeword symbols of all $\log_2⁡(M)$ sub-messages is one MLPCM codeword of length $n$, which is then transmitted through the channel. At the receiver side, multistage decoding takes place to efficiently recover the message. It breaks the decoding process into $\log_2(M)$ stages.}

{In each stage $i=1,\dots,\log_2(⁡M)$, the receiver feeds the demapper of the $i^{th}$ stage with the received signal $y$ and the submessages recovered from previous stages $(\Tilde{x}_1,\dots,\Tilde{x}_{i-1})$, (which is defined as $\emptyset$ for $i=1$), and the demapper demaps the symbols into bits to obtain $\Tilde{u}_i$, using the multilevel decoding criteria \cite{multilevel_polar}. Then, the demapped bits $\Tilde{u}_i$ are decoded using a polar decoder to obtain $\Tilde{x}_i$  using the successive cancellation algorithm provided in~\cite{SCD}. Finally, all sub-messages are fed to a multiplexer to form the message $\mathbf{\Tilde{X}}$. This MLPCM will be incorporated in the large-scale DL network Monte Carlo simulator to validate the achievability of the theoretical rates obtained in Sec.~\ref{sec:outage_capacity} and~\ref{sec:Constellation}.}
%



 \begin{figure}[t]
     \begin{subfigure}[b]{0.48\textwidth}
         \centering
         \includegraphics[height=0.82\linewidth,width=\linewidth]{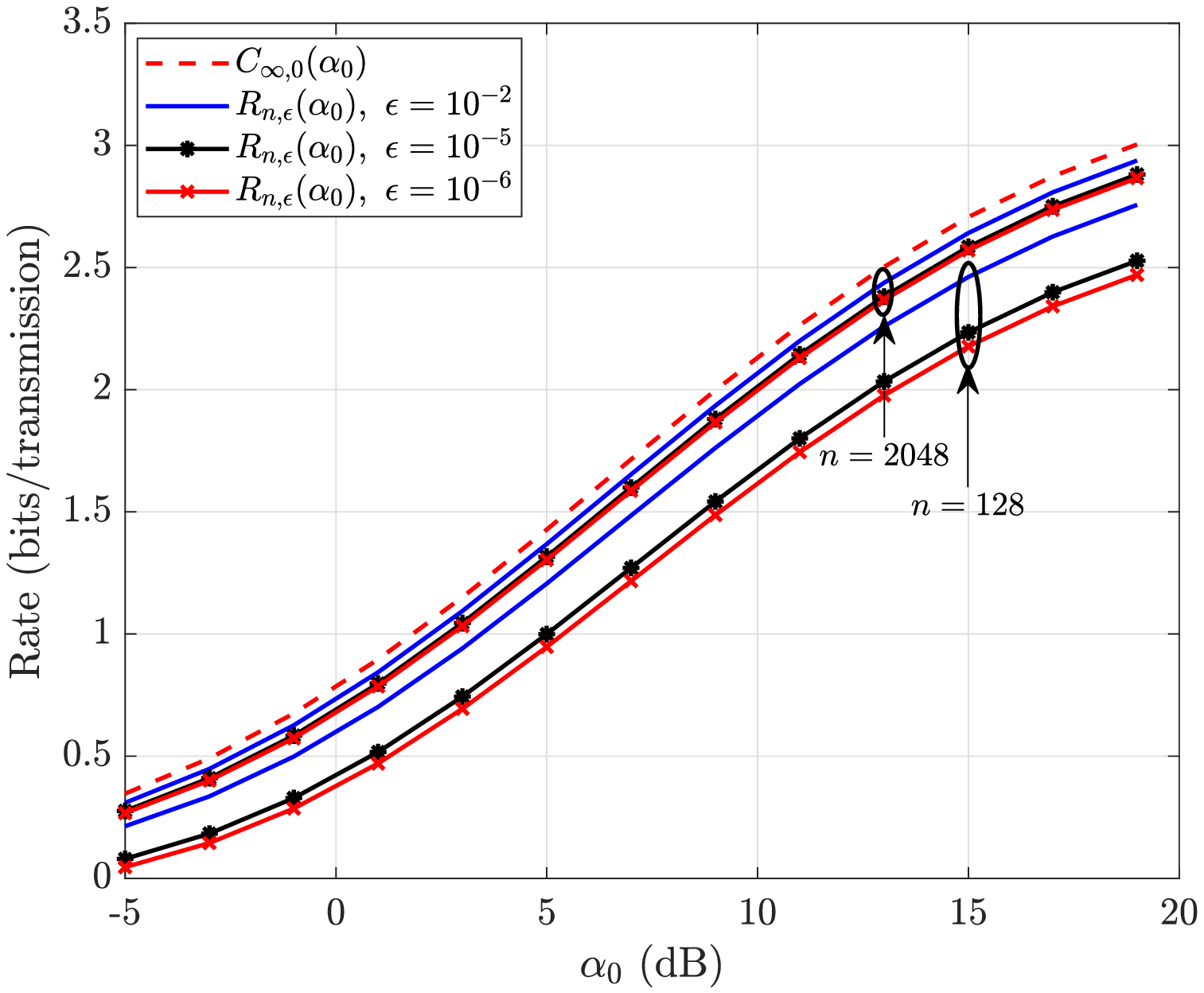}
         \caption{The maximum average coding rate at $r_0=250\ m$ for different values of $n$ and $\epsilon$}
         \label{fig:n}
     \end{subfigure}
     \begin{subfigure}[b]{0.48\textwidth}
         \centering
      \includegraphics[height=0.82\linewidth,width=1\linewidth]{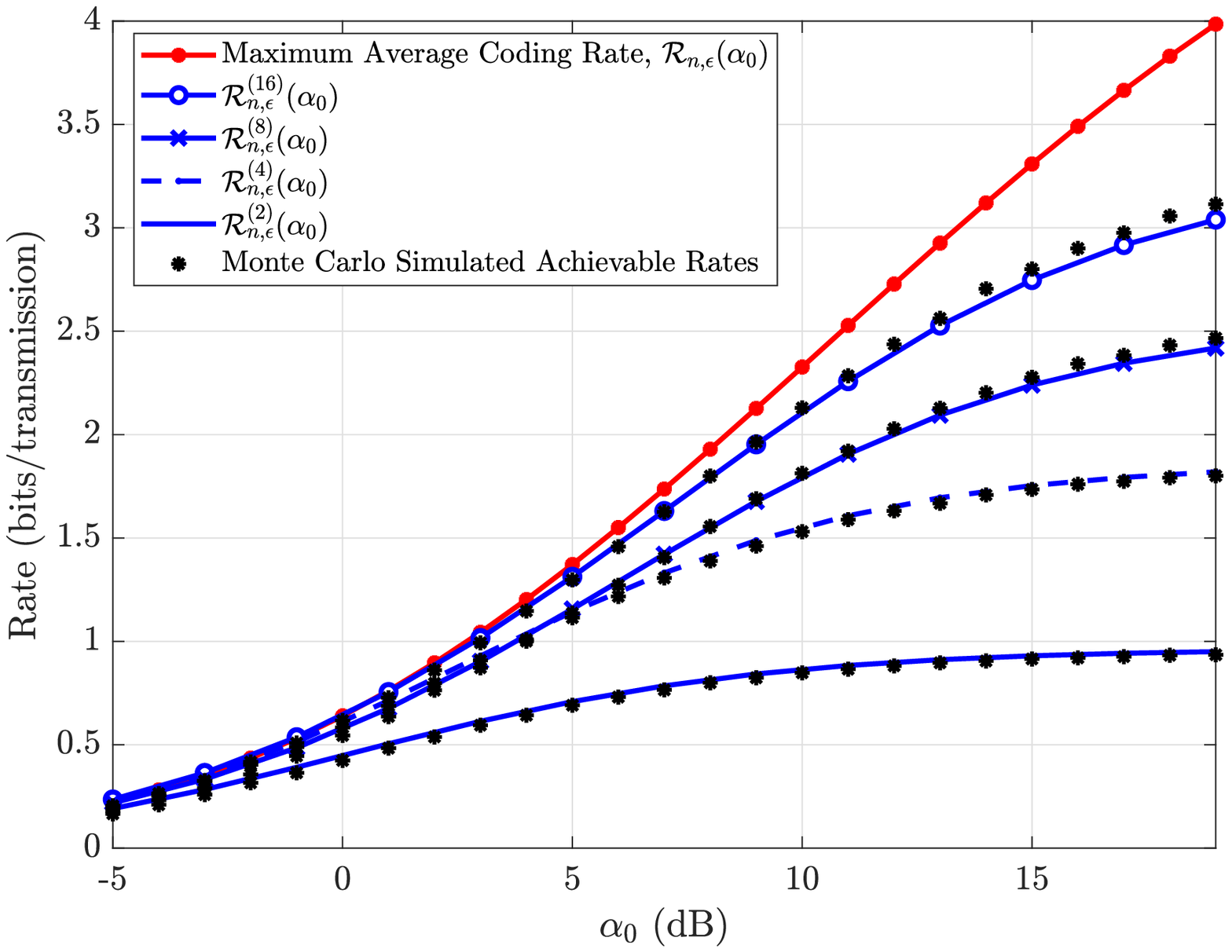}
         \caption{The average coding rate for M-QAM at $r_0=150\ m$, $n=128$, and $\epsilon=10^{-2}$.}
         \label{fig:Capacity_QAM}
     \end{subfigure}
     \caption{Average coding rate analysis of a large-scale network with $\lambda=1\ \mathrm{BS/km^2}$ for fixed $r_0$ }
     \label{fig:Average_Coding_Rate}
\end{figure}

\subsection{Average Rate Numerical Results}\label{sec:rate_results}

{This subsection presents illustrative numerical results for the theoretical rate analysis provided in Sec.~\ref{sec:outage_capacity} and~\ref{sec:Constellation}, which are also validated via Monte Carlo simulation. The results presented investigate the effect of the blocklength $n$ and the FER $\epsilon$, the tightness of the proposed theoretical approximations, and the performance comparison between the FBR and the AR. Then, the performance of the MLPCM is investigated in comparison with the theoretical average coding rate. }

{Fig.~\ref{fig:Average_Coding_Rate} plots the maximum average coding rate and the average coding rate for M-QAM versus the average SNR {$\alpha_0$} for a fixed $r_0$. In Fig.~\ref{fig:n}, the average coding rate is plotted for $n\hspace{-0.05cm}\in\hspace{-0.05cm}\{128,2048\}$, FER $\epsilon\hspace{-0.05cm}\in\hspace{-0.05cm}\{10^{-2},\ 10^{-5},\ 10^{-6}\}$, and $r_0=250$ m. Given a FER $\epsilon$, a gap exists between the average coding rate in the FBR and the AR. {The gap is around $1$ dB for $n=2048$ and is between $3$ and $5$ dB for $n\hspace{-0.1cm}=\hspace{-0.1cm}128$ at $\epsilon\hspace{-0.1cm}=\hspace{-0.1cm}10^{-5}$ and $10^{-6}$, respectively, which shows the severity of the SNR penalty in the FBR compared to the AR.} The figure shows the trade-off between reliability (represented by FER) and maximum average coding rate at a given $n$, {where} the maximum average coding rate decreases/increases as the FER decreases/increases. {For instance, to achieve a given maximum average coding rate at $n\hspace{-0.1cm}=\hspace{-0.1cm}128$ and FER $\epsilon\hspace{-0.1cm}=\hspace{-0.1cm}10^{-5}$, we need around $2$ dB more power compared with the same blocklength $n$ under the FER $\epsilon=10^{-2}$.} Furthermore, the impact of {small} $n$ becomes more {severe when the} FER $\epsilon$ is smaller, {resulting in rate degradation. Overall, this shows that} it is impractical to use results from the AR in the FBR.} 

{Second, the maximum achievable coding rate for different modulation schemes, presented in Sec.~\ref{sec:Constellation}, is simulated for $r_0=150\ m$, $\lambda=1\ \text{BS}/ {\text{km}}^2$, blocklength $n=128$, and FER $\epsilon=10^{-2}$ {in Fig. \ref{fig:Capacity_QAM}}. Fig.~\ref{fig:Capacity_QAM} demonstrates the approximated maximum achievable coding rate using the Gamma approximation for various modulation orders, ${M}=\{2,4,8,16\}$,  {(Theorem~\ref{Theorem:ModulationCapacity}), along with the maximum achievable rate under Gaussian codebooks (Theorem \ref{Theorem:Avg_coding_Network})} versus $\alpha_0$. The Monte Carlo simulated maximum achievable coding rate for QAM is provided to show the tightness of the approximation. The approximated maximum achievable rate is very close to the simulated maximum achievable rate. Thus, the Gamma approximation in \eqref{eq:Gamma_app} {leads to a fairly tight approximation}. It is also observed that as the QAM modulation order increases, the performance improves towards the maximum average coding rate {(without Gaussian codebooks)} in a large-scale network {in the FBR (Theorem \ref{Theorem:Avg_coding_Network}).}}

{The spatially averaged coding rate is depicted in Fig~\ref{fig:Capacity_r0} versus the transmit signal to noise power ratio $\frac{\mathcal{P}}{\sigma_w^2}$.\footnote{{The figure is plotted for a wide range of $\frac{\mathcal{P}}{\sigma_w^2}$ for illustrative purposes, bearing in mind that some values of $\frac{\mathcal{P}}{\sigma_w^2}$ in the plot may be impractical.}} The figure shows that for high SNR (above $0$ dB), one cannot achieve the same rate and FER achieved at a given $n$ by decreasing $n$ and paying an SNR penalty. For instance, the rate achieved at $n=2048$, $\epsilon=10^{-5}$, $\text{SNR} = 10$ dB cannot be achieved at $n=128$ and $\epsilon=10^{-5}$. The only way to achieve such a rate at $n=128$ is by paying a FER penalty, which is impractical for applications that need high reliability. {Hence, it is important to select $n$} that balances the trade-off between reliability and coding rate. Similar to Fig.~\ref{fig:n}, Fig. \ref{fig:Capacity_r0} shows the importance of the results in this paper for characterizing {and optimizing} the performance of large-scale networks with latency-sensitive applications. }
 
  \begin{figure}[t]
     \begin{subfigure}[b]{0.5\textwidth}
     
         \centering
            \includegraphics[height=0.82\linewidth,width=\linewidth]{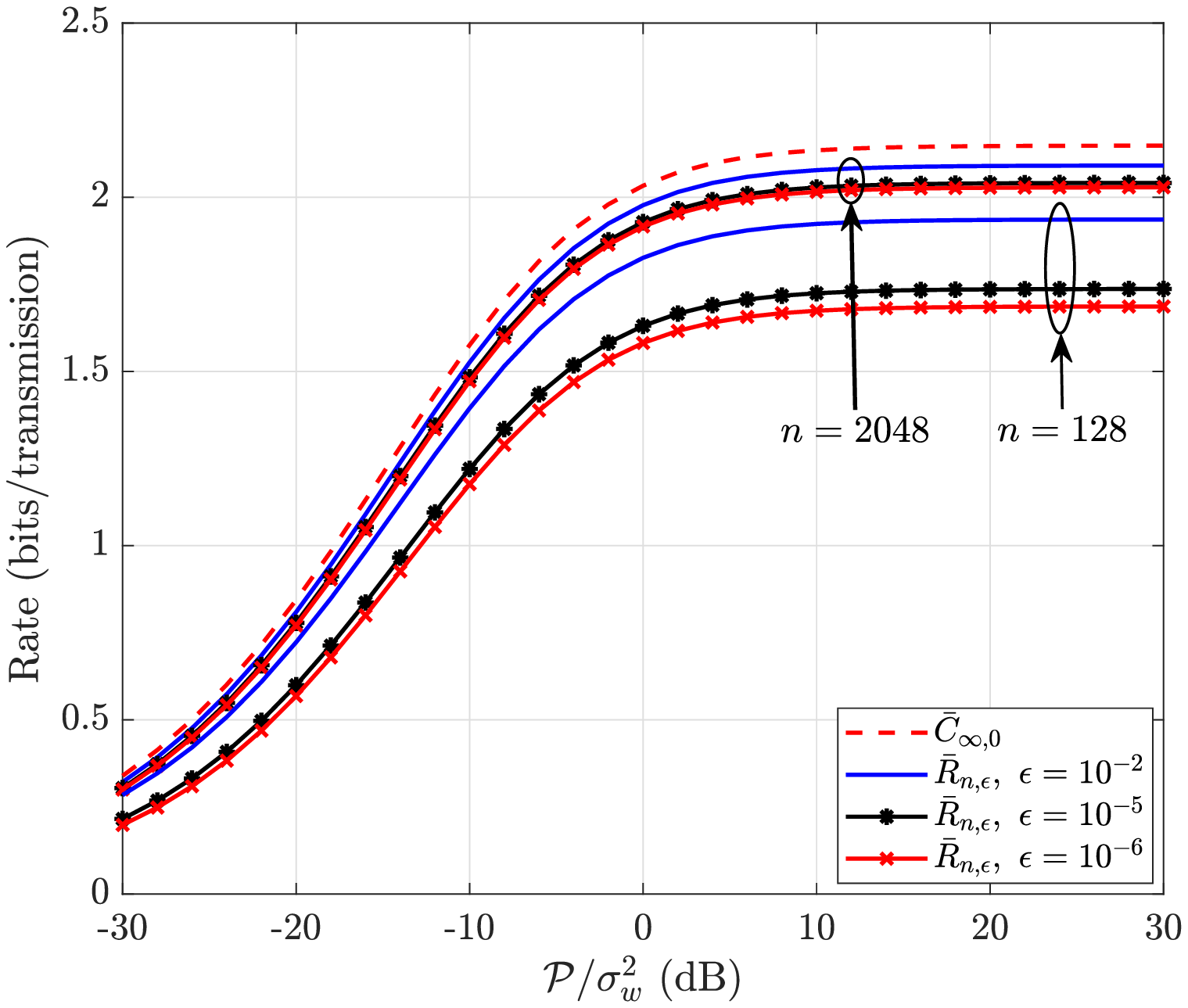}
         \caption{Maximum Average Coding Rate for  different values of $n$ and $\epsilon$. }
         \label{fig:Capacity_r0}
     \end{subfigure}
     \begin{subfigure}[b]{0.5\textwidth}
         \centering
         \includegraphics[height=0.8\linewidth,width=1\linewidth]{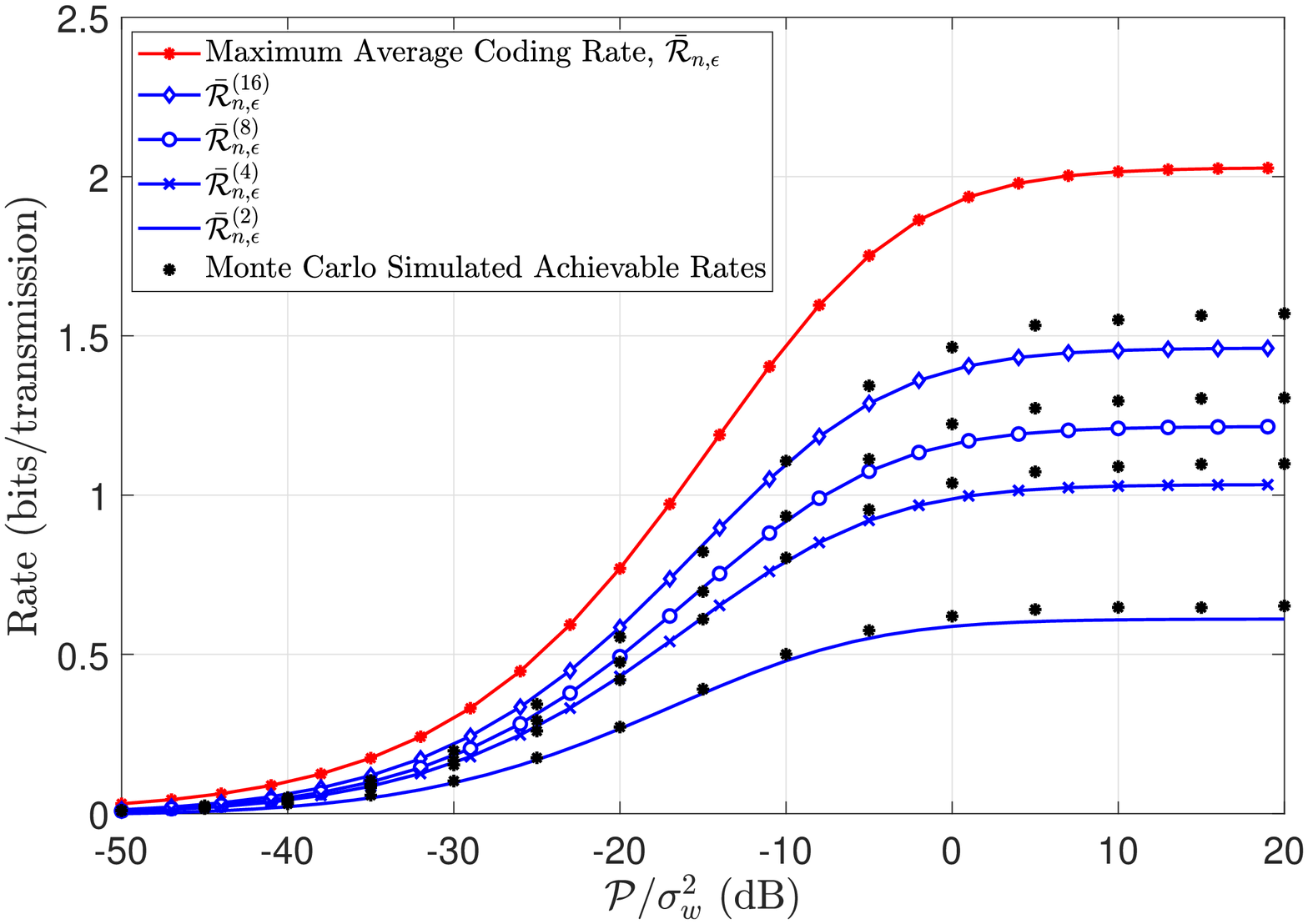}
         \caption{Average coding rate for M-QAM at $n=512$, and $\epsilon=10^{-2}$.}
         \label{fig:Modulation_r0}
     \end{subfigure}
     \caption{Average coding rate analysis for random $r_0$ at $\lambda=1\ \mathrm{BS/km^2}$ }
     \label{fig:Random_r0}
\end{figure}
{Fig.~\ref{fig:Modulation_r0} plots, the average achievable rate versus $\frac{\mathcal{P}}{\sigma_w^2}$ under a {Rayleigh distributed $r_0$}, { {QAM constellation with} $M=2,4,8,$ and $16$}, blocklenth $n=512$, and FER $\epsilon=10^{-2}${, showing a similar trend as} {for fixed $r_0$}. It is observed that the average achievable rate saturates at a rate of $\approx 2\ \text{bits/transmission}$ {for this choice of $\lambda$, $n$, and $\epsilon$, which is low compared to the case $r_0=150\ m$}. This performance is a consequence of averaging the performance of {nearby} UEs which have relatively strong received {desired signals} (high SINR) {and far away} UEs which suffer from more interference (low SINR).}

{{Finally,} considering the same setting in Fig.~\ref{fig:Capacity_QAM}, we simulate MLPCM with a QAM constellation for modulation orders $M=2,4,8,$ and $16$ in Fig.~\ref{fig:BPSK_QPSK_8QAM_16QAM}. {The simulations are performed by fixing the SNR and {increasing the coding rate until reaching} the FER $\epsilon$.} The achievable rates are plotted along with the theoretical expressions proposed in \eqref{eq:Capacity_FB_r_0}~and~\eqref{eq:modulation}. The figure shows that the achievable rates of the MLPCM scheme {with 2-QAM or 4-QAM approaches the theoretical rate of the corresponding QAM constellation.} However, for 8-QAM and 16-QAM, a gap of $\approx 2.5$ dB exists between the theoretical results and the simulation results{. Nonetheless, this SNR gap translates to a small rate gap of fractions of bits/transmission due to the slow increase of the rate versus SNR in the large-scale network which is caused by interference.} Fig.~\ref{fig:BPSK_QPSK_8QAM_16QAM} {thus validates that the theoretical performance under finite constellations can be approached using MLPCM, which validates the potential of MLPCM} for future technologies.}
\begin{figure}[t]
    \centering
    \includegraphics[height=0.39\linewidth,width=0.52\linewidth]{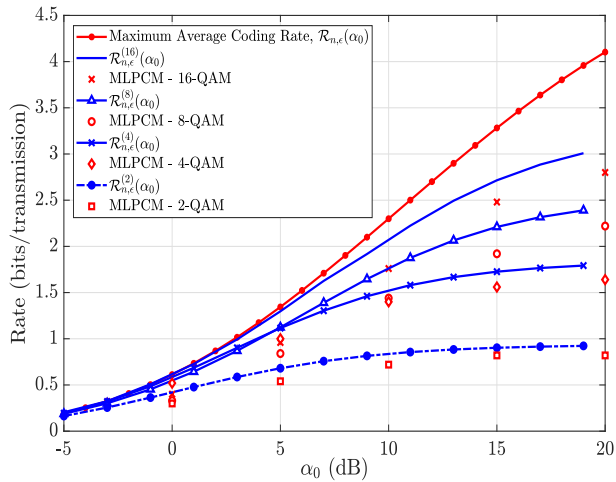}
    \caption{The average coding rate versus $r_0^{-\eta}\alpha$ at $\lambda=1\ \mathrm{BS/km^2}$, $r_0=150\ m$, and $n=128$.}
    \label{fig:BPSK_QPSK_8QAM_16QAM}
\end{figure}


\section{Rate Outage Probability \& Meta Distribution}\label{sec:Outage_Analysis}

In the previous section, we studied the case where the SINR is known at the transmitter. However, if the BS does not track the {SINR}, then the rate adaptation investigated in the previous section is not possible. 
{Considering the second mode of operation, where the SINR is unknown at the transmitter but known at the receiver, the transmitter encodes at a constant (target) rate $R_t$ and blocklength $n$. As a result, the resulting (conditional) FER conditioned on SINR $\Upsilon$ may or may not satisfy a desired design FER $\epsilon$ due to the unknown stochastic variation of the intended channel fading and aggregate interference. In this case, it is important to keep this conditional FER below a threshold. This is especially relevant bearing in mind that $n=L/l$, i.e., a large conditional FER means that there are coherence intervals that produce large FERs for all blocks within these coherence intervals. This can lead to bursts of frame errors when the channel undergoes a period of coherence intervals with weak channels, which is undesired.}

{Thus, we define an FER threshold $\bar{\epsilon}$, and it is desired to maintain the conditional FER below $\bar{\epsilon}$. If the conditional FER exceeds $\bar{\epsilon}$, we say that we have an outage. This  equivalently means that rate $R_{n,\bar{\epsilon}}(\Upsilon)$ supported by the channel at the given $n$ and FER threshold $\bar{\epsilon}$ decreases below $R_t$. This is defined formally as follows.}
\vspace{-0.4cm}
{\begin{definition}\label{def:outage}
Given a target rate $R_t$ and a DL large-scale network with a distance $r_0$ between the desired UE and the serving BS as defined in (1), we define the outage probability in the FBR with blocklength $n$ and FER threshold $\bar{\epsilon}$ as $\mathcal{O}(r_0,R_t  ,n,\bar{\epsilon}) = P(R_{n,\bar{\epsilon}} (\Upsilon) < R_t)$, where $R_{n,\bar{\epsilon}} (\Upsilon)$ is as defined in Lemma \ref{lem:poly}.
 \end{definition}}
This outage probability is studied in this section. 

\subsection{DL Outage Analysis in the Finite Blocklength Regime}\label{sec:Outage_DL}

We start by reviewing the outage probability of a large-scale network in the AR. In this case, the outage is defined as the probability that the channel capacity (defined as the maximum rate such that the error probability vanishes as $n\to\infty$) is lower than a {target rate $R_t$}. The rate outage probability under a given $r_0$ is given as~\cite{Andrew}
\begin{align}
\label{eq:outage}
 \mathcal{O}(r_0,R_t)=1-e^{-\frac{(2^{R_t}-1) \sigma_w^2 }{\mathcal{P} r_0^{-\eta}}} \mathcal{L}_{\mathcal{B}_g}\left\{\frac{2^{R_t}-1 }{\mathcal{P} r_0^{-\eta}}\right\}.
\end{align}

{Studying the rate outage probability under blocklength $n$ and FER threshold $\bar{\epsilon}$ is generally difficult to simplify as it is a function of the fading distribution as well as the network geometry. Instead, we provide bounds which are fairly tight in the following theorem.}
\begin{theorem}\label{Theorem:BoundsOutage_Network_fixed_r0}
For {an average SNR $\alpha_0$}, the rate outage probability defined in Def. \ref{def:outage} satisfies {$\mathcal{O}_l(r_0,R_t,n,\bar{\epsilon})\leq \mathcal{O}(r_0,R_t,n,\bar{\epsilon}) \leq \mathcal{O}_u(r_0,R_t,n,\bar{\epsilon})$}, where
\begin{align}
\mathcal{O}_l(r_0,R_t,n,\bar{\epsilon}) &=  \mathcal{O}(r_0,R_t), \label{eq:lower_bound}\\
\mathcal{O}_u(r_0,R_t,n,\bar{\epsilon})&= \mathcal{O}(r_0,R_t+a_{n,\bar{\epsilon}}{-b_n}),\label{eq:upper_bound}
\end{align}
where $\mathcal{O}(r_0,R_t)$ is defined in \eqref{eq:outage}, $a_{n,\bar{\epsilon}}= \sqrt{\frac{\log_2^2(e)}{ n} } Q^{-1}(\bar{\epsilon})$, and ${b_n=\frac{1}{2n} \log_2 n}$.
\end{theorem}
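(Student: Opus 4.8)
The plan is to sandwich the finite-blocklength rate $R_{n,\bar{\epsilon}}(\Upsilon)$ between two asymptotic-regime capacity expressions evaluated at shifted target rates, and then to invoke the closed-form AR outage formula in \eqref{eq:outage}. The starting observation is a uniform bound on the channel dispersion: writing $V(\Upsilon)=\bigl(1-(1+\Upsilon)^{-2}\bigr)\log_2^2(e)$, one sees at once that $0\le\sqrt{V(\Upsilon)}\le\log_2(e)$ for every $\Upsilon\ge 0$, the upper value being approached as $\Upsilon\to\infty$. Since $\bar{\epsilon}<1/2$ forces $Q^{-1}(\bar{\epsilon})>0$, substituting these two extremes into the rate expression of Lemma~\ref{lem:poly} gives the pointwise sandwich
\begin{equation}\label{eq:plan_sandwich}
\log_2(1+\Upsilon)-a_{n,\bar{\epsilon}}+b_n \;\le\; R_{n,\bar{\epsilon}}(\Upsilon)\;\le\;\log_2(1+\Upsilon)+b_n,
\end{equation}
valid for every realization of $\Upsilon$, with $a_{n,\bar{\epsilon}}=\log_2(e)Q^{-1}(\bar{\epsilon})/\sqrt{n}$ and $b_n=\log_2(n)/(2n)$ exactly as in the statement.

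Next I would turn \eqref{eq:plan_sandwich} into probability inequalities using only that $R\mapsto\mathcal{O}(r_0,R)$ is non-decreasing (it is the CDF of $\log_2(1+\Upsilon)$ evaluated at $R$), so a larger target rate can only enlarge the outage, together with the closed form \eqref{eq:outage}. For the upper bound, the left inequality of \eqref{eq:plan_sandwich} gives the event inclusion $\{R_{n,\bar{\epsilon}}(\Upsilon)<R_t\}\subseteq\{\log_2(1+\Upsilon)<R_t+a_{n,\bar{\epsilon}}-b_n\}$; taking probabilities and identifying the right-hand event as the AR outage at target rate $R_t+a_{n,\bar{\epsilon}}-b_n$ yields $\mathcal{O}(r_0,R_t,n,\bar{\epsilon})\le\mathcal{O}(r_0,R_t+a_{n,\bar{\epsilon}}-b_n)$, which is precisely $\mathcal{O}_u$ after substituting \eqref{eq:outage}. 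For the lower bound I would use that the finite-blocklength maximal rate never exceeds the channel capacity, i.e.\ $R_{n,\bar{\epsilon}}(\Upsilon)\le C_{\infty,0}(\Upsilon)=\log_2(1+\Upsilon)$, whence $\{\log_2(1+\Upsilon)<R_t\}\subseteq\{R_{n,\bar{\epsilon}}(\Upsilon)<R_t\}$ and therefore $\mathcal{O}(r_0,R_t,n,\bar{\epsilon})\ge\mathcal{O}(r_0,R_t)=\mathcal{O}_l$.

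The delicate point, and the one I expect to require the most care, is this lower bound: the normal-approximation formula in Lemma~\ref{lem:poly} carries the third-order term $+b_n$, so $R_{n,\bar{\epsilon}}(\Upsilon)\le\log_2(1+\Upsilon)$ can fail in a vanishingly small neighbourhood of $\Upsilon=0$ where $\sqrt{V(\Upsilon)}\,Q^{-1}(\bar{\epsilon})/\sqrt{n}<b_n$. I would resolve this in one of two equivalent ways: (i) argue at the level of the \emph{true} finite-blocklength maximal rate $R^{*}_{n,\bar{\epsilon}}(\Upsilon)$, for which the converse bound $R^{*}_{n,\bar{\epsilon}}(\Upsilon)\le C_{\infty,0}(\Upsilon)$ holds exactly and which $R_{n,\bar{\epsilon}}$ is only being used to approximate; or (ii) split the SINR axis at the threshold where $\sqrt{V(\Upsilon)}\,Q^{-1}(\bar{\epsilon})/\sqrt{n}=b_n$ and check that on the low-SINR side $\log_2(1+\Upsilon)$ is so small that the AR outage event $\{\log_2(1+\Upsilon)<R_t\}$ already holds for the target rates of interest, so the inclusion above is unaffected. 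Either route closes the argument; everything else is the routine substitution of \eqref{eq:outage} and the monotonicity already noted, and as a consistency check both bounds collapse to $\mathcal{O}(r_0,R_t)$ as $n\to\infty$ by continuity of $\mathcal{O}(r_0,\cdot)$, recovering the AR result.
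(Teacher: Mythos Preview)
Your approach coincides with the paper's: both bound the dispersion factor $\mathcal{K}(\Upsilon)=\sqrt{1-(1+\Upsilon)^{-2}}\in[0,1]$ by its endpoints to sandwich $R_{n,\bar{\epsilon}}(\Upsilon)$, then read off the AR outage \eqref{eq:outage} at the shifted target rates. The paper simply sets $\mathcal{K}(\Upsilon)=0$ and $\mathcal{K}(\Upsilon)=1$ without further comment; your extra discussion of the $+b_n$ subtlety in the lower bound is additional care that the paper omits (indeed, setting $\mathcal{K}=0$ in the approximation literally yields $\mathcal{O}(r_0,R_t-b_n)$, so your appeal to the true-rate converse is what actually delivers the stated $\mathcal{O}(r_0,R_t)$).
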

\begin{proof}
From Def.~\ref{def:outage}, we have
\begin{align}\label{eq:outage_general}
   \mathcal{O}(r_0,R_t,n,\bar{\epsilon})   =\mathbb{P}(R_{n,\bar{\epsilon}}(\Upsilon)<R_t)=\mathbb{P}\Bigg(\log_2 (1+\Upsilon)- {a_{n,\bar{\epsilon}}}\sqrt{1-\frac{1}{(1+\Upsilon)^2}}{+b_n}<R_t\Bigg).
   \end{align}
{Noting that $\mathcal{K}(\Upsilon)=\sqrt{1-\frac{1}{(1+\Upsilon)^2}}$ is monotonically increasing in $\Upsilon$ as shown in Fig. \ref{fig:K_SINR}, {satisfying $\mathcal{K}(\Upsilon)\in[0,1]$,} we conclude that
$\mathcal{O}_{l}(r_0,R_t,n,\bar{\epsilon})\leq \mathcal{O}(r_0,R_t,n,\bar{\epsilon}) \leq \mathcal{O}_{u}(r_0,R_t,n,\bar{\epsilon})$,
where the lower {and upper} bounds are obtained by setting $\mathcal{K}(\Upsilon)$ to $0$ and $1$, respectively.}
\end{proof}
\begin{figure}
    \centering
    \includegraphics[width=0.5\linewidth]{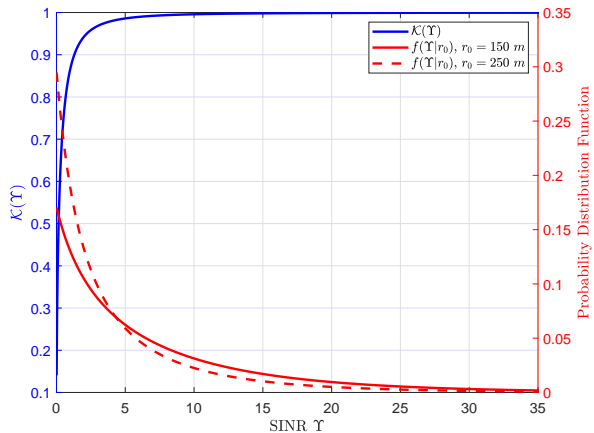}
    \caption{$\mathcal{K}(\Upsilon)$ versus $\Upsilon$ and the PDF of the SINR at $r_0\in\{150,250\}\ m$}
    \label{fig:K_SINR}
\end{figure}
{The outage probability lower bound confirms that the outage probability in the FBR is larger than that in the AR and the upper bound provides a guaranteed performance. By observing Fig.~\ref{fig:K_SINR}, it can be seen {that} $\mathcal{K}(\Upsilon)$ quickly approaches one as $\Upsilon$ grows. The figure also shows that the density $f(\Upsilon|r_0)$ in the interval of SINR, where the approximation $\mathcal{K}(\Upsilon)\approx1$ is tight ($\Upsilon > 5$ e.g. ), increases as $r_0$ decreases (which is the case for dense networks). Hence, the upper bound  $\mathcal{O}_{u}(r_0,R_t,n,\bar{\epsilon})$ can be used to evaluate guaranteed outage performance in general and can be used as a tight approximation for dense networks. {This is verified in Fig.~\ref{fig:Outage_App_r0} \& \ref{fig:Outage_App}}. Next, the spatially averaged outage probability is presented.}

\begin{Corollary}\label{Corollary:Outage_Probability_Random_r0}
The average rate outage probability defined in Def. \ref{def:outage} with blocklength $n$, FER $\bar{\epsilon}$, and target rate $R_t$ can be upper bounded as
\begin{align}\label{eq:Outage_r0_FB}
  \bar{\mathcal{O}}(R_t,n,\bar{\epsilon}) =  1-\int_{0}^{\infty} 2\pi \lambda r_0 &e^{\hspace{-.1cm}-\frac{2^{R_t+a_{n,\bar{\epsilon}}{-b_n}}-1 }{\alpha_0}-\pi \lambda r_0^2}\mathcal{L}_{\mathcal{B}_g}\left\{\frac{2^{R_t+a_{n,\bar{\epsilon}}{-b_n}}-1 }{\mathcal{P} r_0^{-\eta}}\right\}d r_0
\end{align}
\end{Corollary}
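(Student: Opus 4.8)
The plan is to derive the spatially averaged bound directly from the per-distance result of Theorem~\ref{Theorem:BoundsOutage_Network_fixed_r0} together with the known contact-distance distribution. Starting from Definition~\ref{def:outage}, the spatially averaged outage probability is $\bar{\mathcal{O}}(R_t,n,\bar{\epsilon}) = \mathbb{E}_{r_0}\{\mathcal{O}(r_0,R_t,n,\bar{\epsilon})\}$, where the expectation is taken over $r_0$ with PDF $f_{r_0}(r_0)=2\pi\lambda r_0 e^{-\pi\lambda r_0^2}$ from~\eqref{eq:r_o}. Since Theorem~\ref{Theorem:BoundsOutage_Network_fixed_r0} establishes $\mathcal{O}(r_0,R_t,n,\bar{\epsilon}) \le \mathcal{O}_u(r_0,R_t,n,\bar{\epsilon}) = \mathcal{O}(r_0,R_t+a_{n,\bar{\epsilon}}-b_n)$ pointwise for every realization of $r_0$, and $f_{r_0}$ is nonnegative, integrating both sides against $f_{r_0}$ preserves the inequality, giving $\bar{\mathcal{O}}(R_t,n,\bar{\epsilon}) \le \int_0^\infty \mathcal{O}(r_0,R_t+a_{n,\bar{\epsilon}}-b_n)\, f_{r_0}(r_0)\, dr_0$.

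Next I would substitute the closed form of $\mathcal{O}(r_0,\cdot)$ from~\eqref{eq:outage}, keeping in mind that both $\alpha_0=\mathcal{P}r_0^{-\eta}/\sigma_w^2$ and the Laplace transform $\mathcal{L}_{\mathcal{B}_g}\{\cdot\}$ of~\eqref{LapTrans_Gaussian} depend on $r_0$ and therefore stay inside the integral:
\begin{align}
\bar{\mathcal{O}}(R_t,n,\bar{\epsilon}) &\le \int_0^\infty \left(1 - e^{-\frac{2^{R_t+a_{n,\bar{\epsilon}}-b_n}-1}{\alpha_0}}\mathcal{L}_{\mathcal{B}_g}\left\{\tfrac{2^{R_t+a_{n,\bar{\epsilon}}-b_n}-1}{\mathcal{P}r_0^{-\eta}}\right\}\right) 2\pi\lambda r_0 e^{-\pi\lambda r_0^2}\, dr_0. \nonumber
\end{align}
Finally, I would split the integral and use $\int_0^\infty 2\pi\lambda r_0 e^{-\pi\lambda r_0^2}\, dr_0 = 1$: the constant term contributes $1$, and merging the factor $e^{-\pi\lambda r_0^2}$ with $e^{-(2^{R_t+a_{n,\bar{\epsilon}}-b_n}-1)/\alpha_0}$ inside the remaining integral yields exactly~\eqref{eq:Outage_r0_FB}.

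I do not expect a genuine obstacle here, since the statement is an immediate corollary of Theorem~\ref{Theorem:BoundsOutage_Network_fixed_r0}; the only points meriting care are (i) justifying that the expectation of a pointwise upper bound is itself an upper bound (monotonicity of integration against a nonnegative density), and (ii) consistently tracking the $r_0$-dependence of $\alpha_0$ and $\mathcal{L}_{\mathcal{B}_g}\{\cdot\}$ through the substitution and the normalization step. It is worth noting in passing that the resulting integral is left in this form because $\mathcal{L}_{\mathcal{B}_g}\{\cdot\}$ contains a Gauss hypergeometric function, so the $r_0$-integration is carried out numerically rather than in closed form.
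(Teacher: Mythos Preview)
Your proposal is correct and follows essentially the same route as the paper: the paper's proof simply invokes the law of iterated expectation to average the per-distance upper bound~\eqref{eq:upper_bound} over the Rayleigh-distributed $r_0$, which is exactly what you do (with the substitution and normalization steps spelled out in slightly more detail).
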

\begin{proof}
Using the law of iterated expectation, {the average outage probability is the average of} \eqref{eq:upper_bound} with respect to $r_0$, i.e., $\mathbb{P}(R>R_t)=\mathbb{E}_{r_0}\{\mathbb{P}(R>R_t|r_0)\}$. The result then follows since $r_0$ follows a Rayleigh distribution.
\end{proof}

For a shadowed urban cellular network, the path-loss exponent $\eta$ lies in the range $3-5$ \cite{eta}, where $\eta=4$ is a common value of practical relevance that is widely utilized in the literature. For the case $\eta=4$, the outage probability in~\eqref{eq:Outage_r0_FB} can be simplified as given next.
\begin{Corollary}
The average rate outage probability of the large-scale network modeled by \eqref{eq:system_model} with blocklength $n$, FER $\bar{\epsilon}$, target rate $R_t$, and path loss exponent $\eta=4$, can be upper bounded as
\begin{equation}
\bar{ \mathcal{O}}(R_t,n,\bar{\epsilon}) =  1-\pi \lambda  \sqrt{\frac{\pi \mathcal{P}  }{\sigma_w^2 \mu}} e^{\frac{ \sigma_w^2 \mu z^2}{4  \mathcal{P}} }Q\left(\sqrt{\frac{z^2 \mu \sigma_w^2}{2  \mathcal{P}}}\right)
\end{equation}
where $\mu = 2^{R_t+a_{n,\bar{\epsilon}}{-b_n}}-1$  and $ z=  \frac{\lambda \pi  \mathcal{P} }{\sigma_w^2 \mu} \left(\sqrt{\mu} \arctan(\sqrt{\mu})+1\right)$.
\end{Corollary}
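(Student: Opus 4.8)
The plan is to specialize the integral expression for the upper bound $\bar{\mathcal{O}}(R_t,n,\bar{\epsilon})$ in Corollary~\ref{Corollary:Outage_Probability_Random_r0} (eq.~\eqref{eq:Outage_r0_FB}) to $\eta=4$ and to evaluate the resulting $r_0$-integral in closed form. First I would substitute $\eta=4$ into the Gaussian-codebook Laplace transform $\mathcal{L}_{\mathcal{B}_g}$ in \eqref{LapTrans_Gaussian}. Writing $\mu = 2^{R_t+a_{n,\bar{\epsilon}}-b_n}-1$ and observing that the argument of $\mathcal{L}_{\mathcal{B}_g}$ in \eqref{eq:Outage_r0_FB} is $u=\mu r_0^4/\mathcal{P}$, the argument of the Gauss hypergeometric factor equals $-u\mathcal{P}/r_0^4=-\mu$, which is \emph{independent of} $r_0$. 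Using the standard identity ${}_2F_1(1,\tfrac12;\tfrac32;-\mu)=\arctan(\sqrt{\mu})/\sqrt{\mu}$ together with the prefactor simplification $-2\pi\lambda u\mathcal{P}r_0^{-2}/2=-\pi\lambda\mu r_0^2$, this yields $\mathcal{L}_{\mathcal{B}_g}\{\mu r_0^4/\mathcal{P}\}=\exp\{-\pi\lambda r_0^2\sqrt{\mu}\arctan(\sqrt{\mu})\}$.

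Next I would collect all $r_0$-dependence in the integrand of \eqref{eq:Outage_r0_FB}. With $\alpha_0=\mathcal{P}r_0^{-4}/\sigma_w^2$, the first exponential contributes $-\mu\sigma_w^2 r_0^4/\mathcal{P}$, the Rayleigh density contributes $-\pi\lambda r_0^2$, and the Laplace transform contributes $-\pi\lambda\sqrt{\mu}\arctan(\sqrt{\mu})\,r_0^2$. Setting $c=\mu\sigma_w^2/\mathcal{P}$ and invoking the definition $z=\frac{\lambda\pi\mathcal{P}}{\sigma_w^2\mu}\bigl(\sqrt{\mu}\arctan(\sqrt{\mu})+1\bigr)$, so that $\pi\lambda\bigl(1+\sqrt{\mu}\arctan\sqrt{\mu}\bigr)=cz$, the exponent of the integrand becomes exactly $-c\,(r_0^4+z r_0^2)$. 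Completing the square, $r_0^4+z r_0^2=(r_0^2+z/2)^2-z^2/4$, and substituting $t=r_0^2+z/2$ (hence $dt=2r_0\,dr_0$, with $t$ ranging over $[z/2,\infty)$) reduces the integral to $\pi\lambda\, e^{cz^2/4}\int_{z/2}^{\infty}e^{-c t^2}\,dt$.

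Finally I would evaluate the Gaussian-tail integral $\int_{z/2}^{\infty}e^{-c t^2}\,dt=\tfrac12\sqrt{\pi/c}\,\mathrm{erfc}\bigl(\tfrac{z}{2}\sqrt{c}\bigr)=\sqrt{\pi/c}\,Q\bigl(z\sqrt{c/2}\bigr)$ via $\mathrm{erfc}(y)=2Q(\sqrt{2}\,y)$, and re-substitute $c=\mu\sigma_w^2/\mathcal{P}$ to arrive at $\bar{\mathcal{O}}(R_t,n,\bar{\epsilon})=1-\pi\lambda\sqrt{\pi\mathcal{P}/(\sigma_w^2\mu)}\;e^{\sigma_w^2\mu z^2/(4\mathcal{P})}\,Q\bigl(\sqrt{z^2\mu\sigma_w^2/(2\mathcal{P})}\bigr)$, which is the claimed expression. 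None of the steps is deep; the only places requiring care are recognizing the hypergeometric-to-arctangent identity at $\eta=4$ and checking that the particular normalization chosen for $z$ is precisely what makes the quartic exponent a perfect square after completing the square. That is essentially a bookkeeping verification of multiplicative constants, and it is the step where a sign or factor error would most plausibly slip in.
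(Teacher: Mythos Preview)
Your proposal is correct and follows essentially the same route as the paper: specialize the Laplace transform at $\eta=4$ (the paper quotes the $\arctan$ form directly, while you obtain it via the ${}_2F_1(1,\tfrac12;\tfrac32;-\mu)=\arctan(\sqrt{\mu})/\sqrt{\mu}$ identity), collect the exponent into $-c(r_0^4+z r_0^2)$, complete the square, and evaluate the resulting Gaussian tail integral in terms of the $Q$-function. Your presentation of the bookkeeping (in particular the verification that $cz=\pi\lambda(1+\sqrt{\mu}\arctan\sqrt{\mu})$ and the explicit substitution $t=r_0^2+z/2$) is in fact more explicit than the paper's, which simply attributes step~(ii) to a change of variables and the Gaussian CDF.
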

\begin{IEEEproof}
A closed-form expression for the upper bounded outage probability at $\eta=4$ is obtained as follows
\begin{align}
    \bar{ \mathcal{O}}(R_t,n,\bar{\epsilon})&= 1-\int_{0}^{\infty} e^{-\frac{\mu r_0^{4}   \sigma_w^2}{\mathcal{P}}}\mathcal{L}_{\mathcal{B}_g}\left\{\frac{\mu r_0^{4} }{\mathcal{P}}\right\}  f_{r_0}(r_0)\ d r_0 \\
   &\overset{(i)}{=} 1- \int_{0}^{\infty} 2 \pi \lambda r_0 e^{-\frac{  \sigma_w^2 \mu}{\mathcal{P} } \left(r_0^4 +z\ r_0^2\right)} d r_0\\
    &\overset{(ii)}{=} 1-\pi \lambda \sqrt{\frac{\pi\ \mathcal{P} }{\sigma_w^2\mu}} e^{ \frac{z^2 \mu\sigma_w^2}{4 \mathcal{P} }}  Q\left(\sqrt{\frac{z^2 \mu\sigma_w^2}{2 \mathcal{P} }}\right)\label{eq:outage_without_r_0}
\end{align}
where we used ${\mathcal{L}_{\mathcal{B}_g}\{s\}}_{\eta=4}=\exp\left\{-\pi \lambda \sqrt{s\mathcal{P}} \text{tan}^{-1}\left(\sqrt{\frac{s \mathcal{P}}{r_0^4}} \right) \right\}$, $(i)$~is obtained {using a change of} variable
$z=\frac{\lambda \pi  \mathcal{P}  }{\mu\sigma_w^2}\left( \sqrt{\mu} \arctan(\sqrt{\mu})+1\right)$, and $(ii)$ is obtained {using the change of variables and} the CDF of the Gaussian distribution with mean $\left(\frac{z}{2}\right)$ and variance $\left(\frac{\mathcal{P} }{2\mu\sigma_w^2}\right)$.
\end{IEEEproof}

The rate outage probability of a large-scale network in the FBR is upper bounded, with guaranteed performance at high SINR, in a generic form in~\eqref{eq:upper_bound} and~\eqref{eq:Outage_r0_FB}  and in a closed-form expression for $\eta=4$ in~\eqref{eq:outage_without_r_0}, as a function of BS density $\lambda$, FER $\bar{\epsilon}$, blocklength $n$, and average SNR $\alpha_0$. Next, the reliability of the network under the FBR.

\subsection{Reliability of a Large-Scale Network in the Finite Blocklength Regime}
{We define a metric to evaluate the overall network reliability for a fixed $r_0$, defined as the probability of correctly decoding a codeword. A codeword is decoded correctly in two cases in the FBR. The first case is when $\mathcal{R}_{n,\bar{\epsilon}}(\Upsilon)>R_{th}$, in which case the codeword is decoded correctly with a probability larger than or equal to $(1-\bar{\epsilon})$. The second case is when $\mathcal{R}_{n,\bar{\epsilon}}(\Upsilon)<R_{th}$, in which case few codewords may still be decoded correctly. Since in practice, it is desirable to operate at low outage probability, hence the second case does not contribute much to reliability. Thus, to simplify the analysis, we will lower bound the probability of decoding a codeword correctly in the second case by $0$ (i.e., its contribution to reliability is neglected) and we will only consider the first case. Hence, the reliability is lower bounded by the following expression, providing a guaranteed performance for a given $r_0$,}
\begin{equation}
    \mathcal{T}_{n,\bar{\epsilon}}(r_0,R_t)=(1-\mathcal{O}(r_0,R_t,n,\bar{\epsilon}))(1-\bar{\epsilon}).
\end{equation}
The guaranteed reliability averaged over $r_0$ is given by
\begin{equation}
     \bar{\mathcal{T}}_{n,\bar{\epsilon}}(R_t)=(1-\bar{\mathcal{O}}(R_t,n,\bar{\epsilon}))(1-\bar{\epsilon}),
\end{equation}
Also, for the sake of comparison, the reliability of the AR can be defined as
\begin{equation}
     {\mathcal{T}}_{\infty,0}(r_0,R_t)=1-\mathcal{O}(r_0,R_t),
\end{equation}
 since the error probability is assumed to be vanishing as $n\to\infty$, and hence failure occurs when there is an outage. The reliability will be assessed numerically in Sec. \ref{sec:outage_results}.

Studying the rate outage probability and the overall system reliability gives more insight into the performance of the network. Moreover, studying the coding rate meta distribution, which provides the percentiles of users achieving a rate $R_t$, will provide a complete picture of the network performance. Hence, the meta distribution of the coding rate is derived next.

\subsection{{Coding Rate Meta Distribution in the Finite Blocklength Regime}}\label{sec:Meta}
 The meta distribution provides fine-grained information about network performance, in the form of the fraction of users that can achieve a minimum data rate $R_t$ at a FER threshold $\bar{\epsilon}$ {with probability at least $p_{t}$}. {The definition and analysis of this quantity are provided next.} 
{\begin{definition}
Given a target rate $R_t$ and a DL large-scale network with a distance $r_0$ between the desired UE and the serving BS as defined in (1), we define the coding rate meta distribution in the FBR with blocklength $n$ and FER threshold $\bar{\epsilon}$ as $F_{R_t}(p_t) = {P}(P_s(R_t,n,\bar{\epsilon})<p_s)$ where $P_s(R_t,n,\bar{\epsilon})= P(R_{n,\bar{\epsilon}}(\Omega) > R_t|\Tilde{\Psi})$ is the probability of success, $\Omega$ is the signal-to-interference ratio (SIR) and $R_{n,\bar{\epsilon}} (\Omega)$ is as defined in Lemma \ref{lem:poly}.
 \end{definition}}

For simplicity, we assume an interference-limited scenario where noise is neglected and the SIR is given by {$\Omega=\frac{\mathcal{P} r_o^{-\eta}|h_0|^2}{\sum_{r_i\in \Tilde{\Psi}/r_o} \mathcal{P} r_i^{-\eta}|h_i|^2}$. } Then, $P_s(R_t,n,\bar{\epsilon})$ is given by
\begin{align}\label{eq:Ps_exact}
    P_s(R_t,n,\bar{\epsilon})=\mathbb{P}^{^^21}(R_{n,\bar{\epsilon}}(\Omega) > R_t|\Tilde{\Psi})=\mathbb{P}^{!}\left(\log_2(1+\Omega)-\frac{\sqrt{V(\Omega)} Q^{-1}(\bar{\epsilon})}{\sqrt{n}}{+b_n}> R_t|\Tilde{\Psi}\right)
\end{align}
where $\mathbb{P}^{^^21}(\cdot)$ is {the reduced Palm measure of the point process \cite{Meta1},\cite[Def. 8.8]{reducedPalm_book}}. Note that $P_s(R_t,n,\bar{\epsilon})$ is a random variable which is a function of the stochasticity of the network. Its moments are defined as
\begin{align}\label{eq:moments_exact}
    \mathcal{M}_d=\mathbb{E}\left\{P_s(R_t,n,\bar{\epsilon})^d\right\},
\end{align}
and are used to calculate the meta distribution as follows~\cite{Meta1}
\begin{align}\label{eq:meta_exact}
    F_{R_t}(p_{t})=\mathbb{P}(P_s(R_t,n,\bar{\epsilon})>p_{t}){=\frac{1}{2}+\frac{1}{\pi}\int_{0}^{\infty}\frac{\mathcal{I}m(e^{-t \log(p_{t})} \mathcal{M}_{jt})}{t} dt},
\end{align}
{where $\mathcal{I}m(\cdot)$ is the imaginary part of a complex number and $\mathcal{M}_{jt}$ is obtained at $d = j t$ in \eqref{eq:moments_exact}.}

{{In~\cite{Meta1}}, a simple yet tight approximating distribution {of $P_s(R_t,n,\bar{\epsilon})$} is proposed  for the AR, which is the beta distribution given by
\begin{align}
    f(P_s;R_t)\approx\frac{P_s^{\frac{\vartheta(\beta+1)-1}{1-\vartheta}}(1-P_s)^{\beta-1}}{\text{B}\left(\frac{\vartheta \beta}{1-\vartheta},\beta\right)},
\end{align}
where $\text{B}(\cdot,\cdot)$ is the beta function, $\vartheta=\mathcal{M}_1$ and the $\beta=\frac{(\mathcal{M}_1-\mathcal{M}_2)(1-\mathcal{M}_1)}{\mathcal{M}_2-\mathcal{M}_1^2}$. Hence, the meta distribution could be computed as the CCDF of the beta distribution. However, $\mathcal{M}_1$ and $\mathcal{M}_2$ provided in \eqref{eq:moments_exact} are difficult to evaluate in the FBR, and hence, an approximation is proposed in this paper. At high SIR, where the term $\mathcal{K}(\Upsilon){=\sqrt{1-\frac{1}{(1+\Upsilon)^2}}}$ is approximately {equal to} 1, the approximated probability of success is given by}

\begin{figure}[t]
     \begin{subfigure}[h]{0.5\textwidth}
         \centering
         \includegraphics[width=1\linewidth,height=6.2cm]{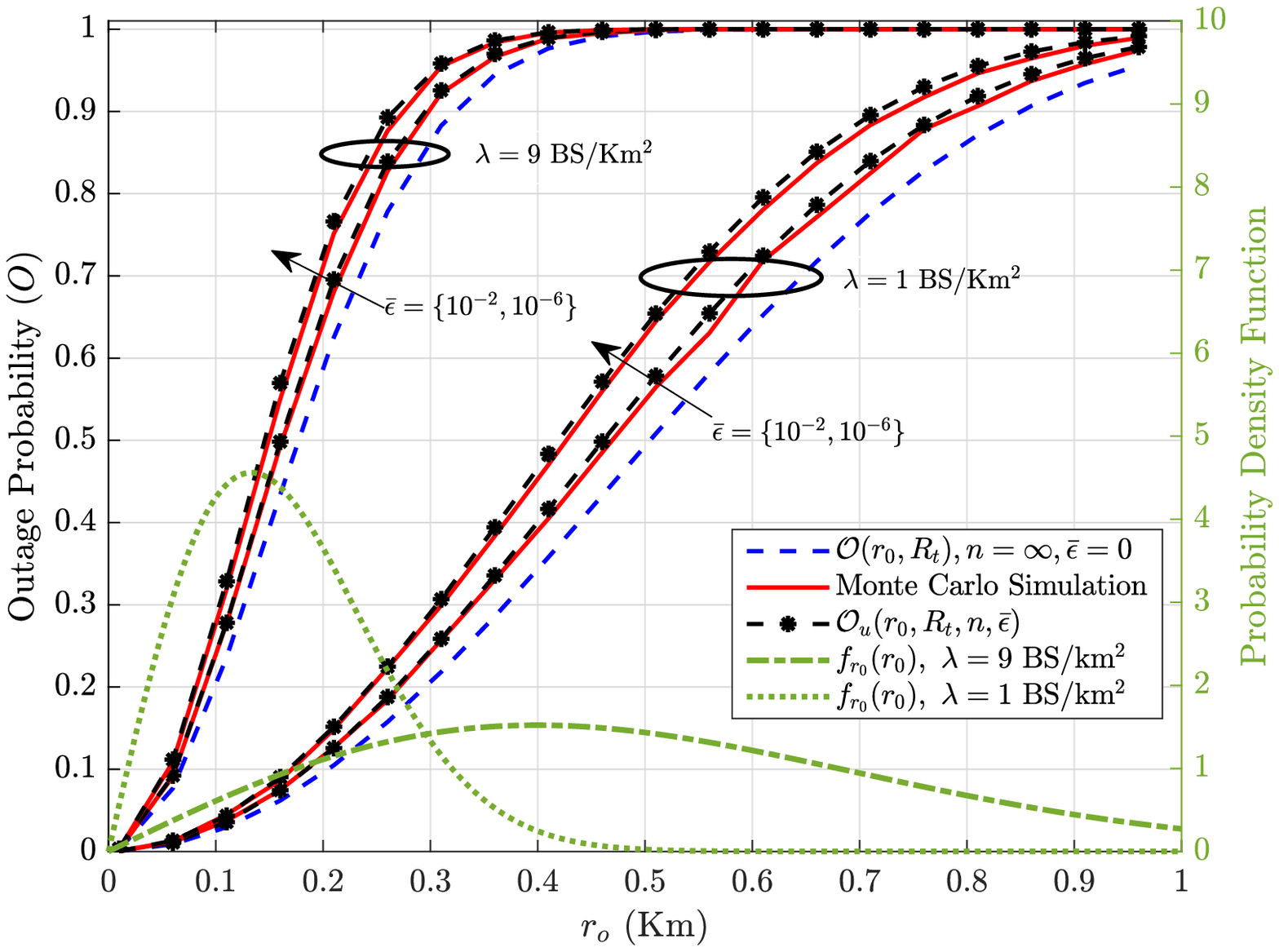}
         \caption{$n=128$}
         \label{fig:Outage_App_128}
     \end{subfigure}
     \begin{subfigure}[h]{0.5\textwidth}
         \centering
         \includegraphics[width=1\linewidth,height=6.2cm]{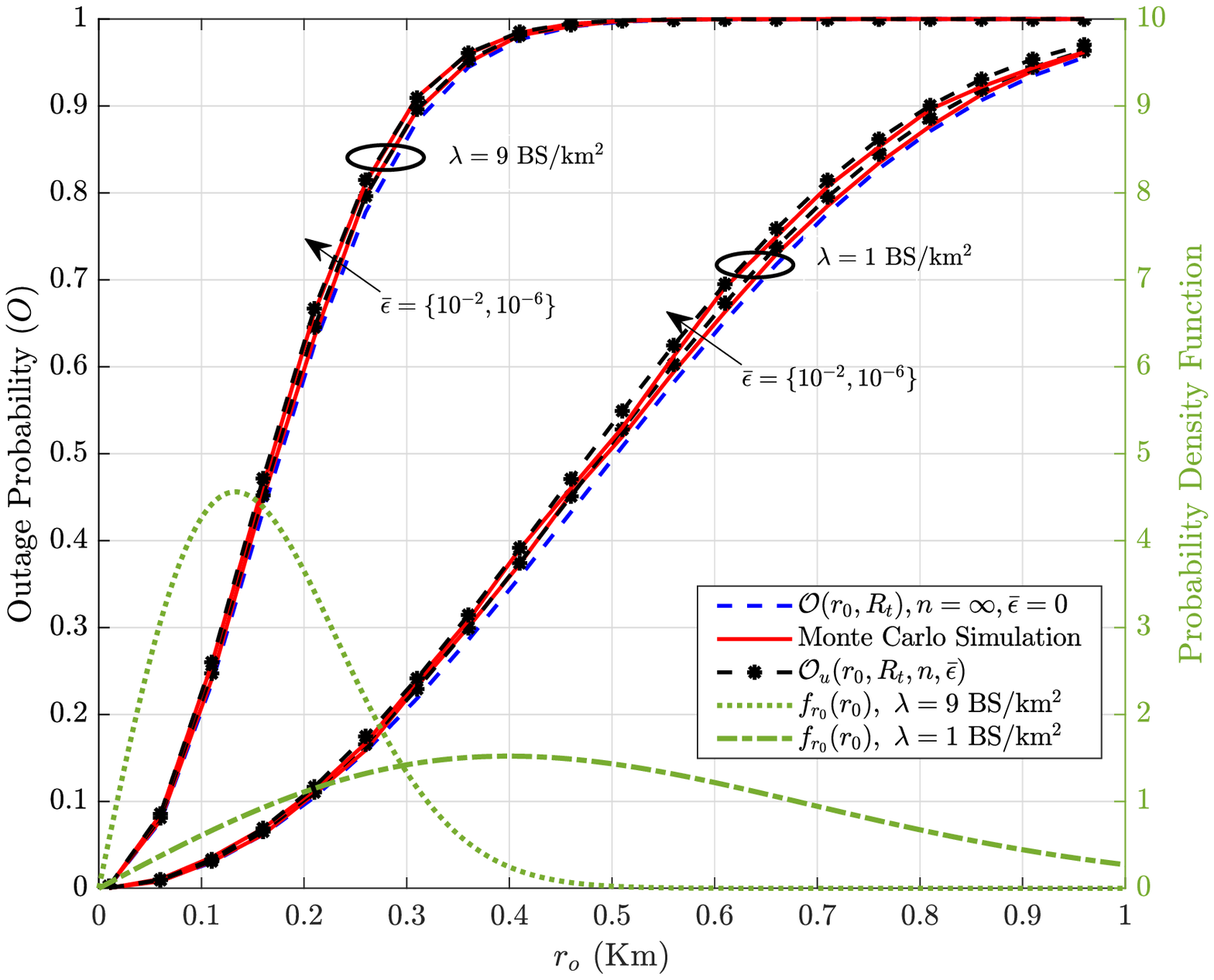}
         \caption{$n=2048$}
         \label{fig:Outage_App_2048}
     \end{subfigure}
     \caption{The outage probability versus $r_0$ at $\lambda\in\{1, 9\}\ \mathrm{BS/km^2}$, $\bar{\epsilon}\in\{10^{-2},10^{-6}\}$, and $\alpha=0\ \text{dB}$.}
     \label{fig:Outage_App_r0}
\end{figure}
\begin{align}
    \hat{P}_s(R_t,n,\bar{\epsilon})&=\mathbb{P}^{!}\left(\log_2 (1+\Omega)- a_{n,\bar{\epsilon}}{+b_n} > R_t|\Tilde{\Psi}\right). \nonumber
\end{align}
Therefore, the approximated probability of success {can be expressed as a function of} the SIR instead of the coding rate, as follows,
\begin{align}\label{eq:P_App}
    \hat{P}_s(R_t,n,\bar{\epsilon})=\mathbb{P}^{!}\left(\Omega > 2^{R_t+a_{n,\bar{\epsilon}}{-b_n}}-1|\Tilde{\Psi}\right) = \prod_{r_i\in \Tilde{\Psi}\setminus \{r_0\}} \frac{1}{1+(2^{R_t+a_{n,\bar{\epsilon}}{-b_n}}-1)\left(\frac{r_0}{r_i}\right)^{\eta}},
\end{align}
The expression provided in \eqref{eq:P_App} is obtained by averaging over the channel gains. Then, its moments are given by
\begin{align}\label{eq:approx_moments}
    \hat{\mathcal{M}}_d&=\mathbb{E}\left\{\hat{P}_s(R_t,n,\bar{\epsilon})^d\right\}\nonumber\\
    &= \mathbb{E}\left\{\prod_{r_i\in \Tilde{\Psi}\setminus \{r_0\}} \frac{1}{\left(1+(2^{R_t+a_{n,\bar{\epsilon}}{-b_n}}-1)\left(\frac{r_0}{r_i}\right)^{\eta}\right)^d}\right\}\nonumber\\
     &=\exp\left\{-2 \pi \lambda  \int_{r_0}^{\infty} r \left(1-\frac{1}{(1+(2^{R_t+a_{n,\bar{\epsilon}}{-b_n}}-1) (r_0/r)^{\eta})^d}\right) d r \right\},
\end{align}
{Thus, the meta distribution can be evaluated using the approximated moments in \eqref{eq:approx_moments} as follows}
\begin{align}
      \hat{F}_{R_t}(p_{t})=\mathbb{P}(\hat{P}_s(R_t,n,\bar{\epsilon})>p_{t})=\frac{1}{2}+\frac{1}{\pi}\int_{0}^{\infty}\frac{\mathcal{I}m(e^{-t \log(p_{t})} \hat{\mathcal{M}}_{jt})}{t} dt.
\end{align}
Using the beta distribution approximation, the meta-distribution can be approximated as 
\begin{align}\label{eq:meta_beta}
      \hat{F}_{R_t}(p_{t})\approx 1- I_{p_{t}}\left(\frac{\hat{\mathcal{M}}_1(\hat{\mathcal{M}}_1-\hat{\mathcal{M}}_2)}{(\hat{\mathcal{M}}_2-\hat{\mathcal{M}}_1^2)},\frac{(1-\hat{\mathcal{M}}_1)(\hat{\mathcal{M}}_1-\hat{\mathcal{M}}_2)}{(\hat{\mathcal{M}}_2-\hat{\mathcal{M}}_1^2)}  \right),
\end{align}
{where $I_{z}(x,y)=\frac{1}{\text{B}(x,y)}\int_{0}^{z} t^{x-1} (1-t)^{y-1}dt$ is the regularized incomplete beta function.}


\begin{figure}[t]
     \begin{subfigure}[h]{0.5\textwidth}
         \centering
         \includegraphics[width=1\linewidth,height=6.4cm]{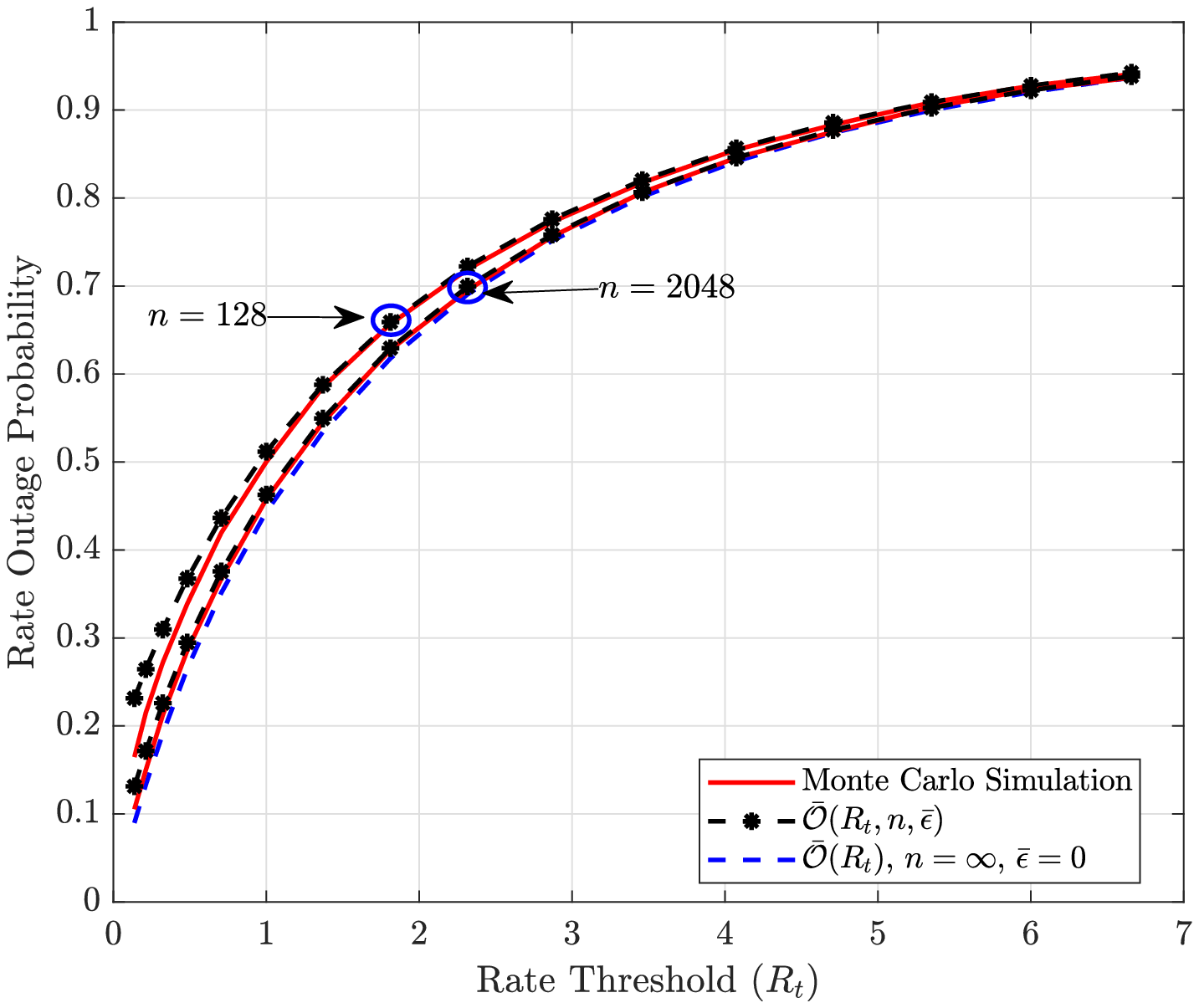}
         \caption{$\bar{\epsilon}=10^{-2}$}
         \label{fig:Outage_App_1e2}
     \end{subfigure}
     \begin{subfigure}[h]{0.5\textwidth}
         \centering
         \includegraphics[width=1\linewidth,height=6.4cm]{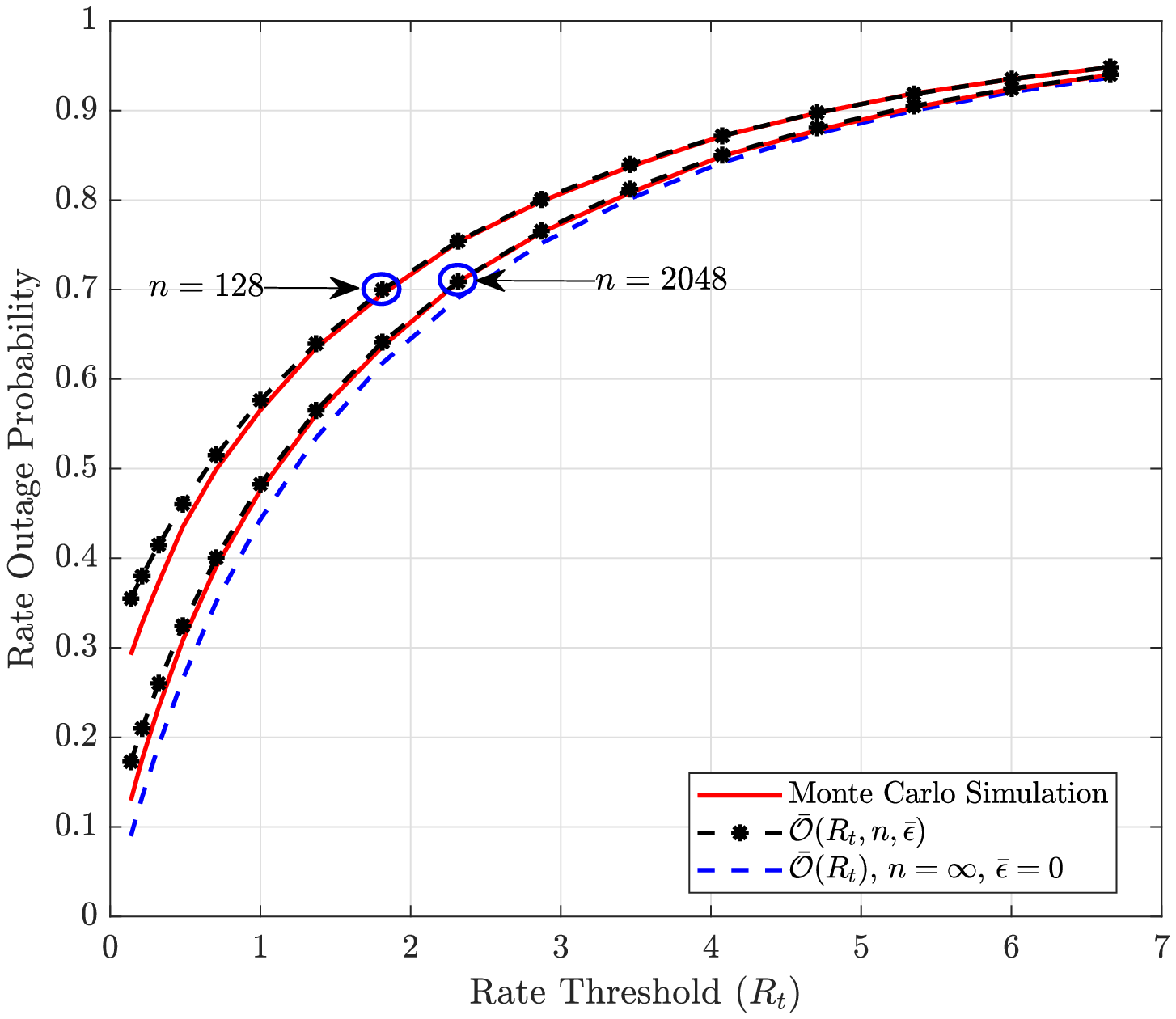}
         \caption{$\bar{\epsilon}=10^{-6}$}
         \label{fig:Outage_App_1e5}
     \end{subfigure}
     \caption{The outage probability versus the rate threshold $R_t$ at $\lambda=1\ \mathrm{BS/km^2}$, $n\in\{128,\ 2048\}$, and $\frac{\mathcal{P}}{\sigma_w^2}=0\ \text{dB}$, averaged with respect to Rayleigh distributed $r_0$.}
     \label{fig:Outage_App}
\end{figure}
\subsection{Numerical Results}\label{sec:outage_results}
Now, we provide numerical results for the bounds and approximations of the rate outage probability, reliability, and the coding rate meta-distribution. We investigate the tightness of these approximations and compare them with results from the literature for the AR.   


First, in Fig.~\ref{fig:Outage_App_r0}, the rate outage probability is plotted versus $r_0$ with $\lambda\in\{1,\ 9\}\ \mathrm{BS/km^2}$, $\eta=4$, $\bar{\epsilon}\in\{10^{-2},\ 10^{-6}\}$, $\frac{\mathcal{P}}{\sigma_w^2}=0\ \text{dB}$, and $R_t=1$ bit/transmission. The PDF of $r_0$ is plotted to illustrate the probability of occurrence of the $r_0$ values. The figure shows the Monte Carlo simulated outage probability, in addition to the upper bound in \eqref{eq:upper_bound} and the outage probability in the AR \eqref{eq:outage} which serves as a lower bound (cf. Thm. \ref{Theorem:BoundsOutage_Network_fixed_r0}). It shows that the rate outage {in the AR} underestimates the rate outage probability {in the FBR} and the gap to the actual outage probability increases as the FER threshold decreases. This gap can be as large as $30\%$ in some cases, such as with $r_0=200$ m, $\lambda=1$ BS/km$^2$, $n=128$, and $\bar{\epsilon}=10^{-6}$, where the actual outage probability is 0.13, but \eqref{eq:outage} underestimates it as 0.1.  {Fig.~\ref{fig:Outage_App_r0} also} shows that the upper bound of the outage probability expression in \eqref{eq:upper_bound} is fairly accurate to provide a convenient approximation for a broad range of $n,\lambda$, and $\bar{\epsilon}$. The outage probability is seen to increase as the FER threshold $\bar{\epsilon}$ decreases because the term $\sqrt{\frac{V(\Upsilon)}{n}}Q^{-1}(\bar{\epsilon})$ in \eqref{eq:Capacity_FB_r_0} increases as $\bar{\epsilon}$ decreases. However, at a specific $\bar{\epsilon}$, as $n$ increases, the outage probability approaches the outage probability in the AR as expected. Thus, while the results in~\cite{sawy} can be a good representative of the performance when $n$ is relatively large (Fig.~\ref{fig:Outage_App_128}), this is not the case when $n$ is small in which case our derived bounds and approximation are more accurate. Moreover, in Fig.~\ref{fig:Outage_App}, the rate outage probability approximation for a Rayleigh distributed $r_0$, provided in~\eqref{eq:outage_without_r_0}, is plotted versus the target rate $R_t$ and compared to the simulated outage probability for BS density $\lambda=1\ \text{BSs}/\text{km}^2$, blocklengths $n\in\{128,\ 2048\}$, FER threshold $\bar{\epsilon}\in\{10^{-2},\ 10^{-6}\}$, and $\frac{\mathcal{P}}{\sigma_w^2}=0\ \text{dB}$. This plot also proves the tightness of the upper bound and the inaccuracy of the AR.



In Fig.~\ref{fig:Outage_mod}, the rate outage probability {under QAM constellations} is shown for various modulation orders versus $r_0$, where the rate outage for a specific modulation scheme is defined as $\mathcal{O}^{\text{\tiny(M)}}(r_0,R_t,n,\bar{\epsilon})=\mathbb{P}(\mathcal{R}_{n,\bar{\epsilon}}^{\text{\tiny(M)}}(\Upsilon)<R_t|r_0)$. This expression is computed using Monte Carlo simulation. The figure is plotted for $n=128$, $\bar{\epsilon}=10^{-2}$, $\lambda=1$, $\frac{\mathcal{P}}{\sigma_w^2}=0\ \text{dB}$, and $R_t\in\{0.825,1.85\}$. Similar to the previous plots, the rate outage probability $\mathcal{O}(r_0,R_t,n,\bar{\epsilon})$ acts as the lower bound of the outage probability of all constellation sets. Also, the higher order modulation (i.e. $M=16$) yields an outage probability lower than the lower order modulation (i.e. $M=2$) at a fixed $R_t$ as the modulation order limits the maximum transmission rate $(\log_2(M))$. However, at lower $R_t$ as such $R_t=0.825$, most of the high order modulation schemes provide a performance close to the theoretical lower bound $\mathcal{O}(r_0,R_t,n,\bar{\epsilon})$. Fig.~\ref{fig:Outage_mod_rand_r0} shows the outage probability for random $r_0$, i.e., $\bar{\mathcal{O}}^{\text{\tiny(M)}}(R_t,n,\bar{\epsilon})=\mathbb{P}(\mathcal{R}_{n,\bar{\epsilon}}^{\text{\tiny(M)}}(\Upsilon)<R_t)$ {versus $R_t$}. As shown, the rate outage probability increases as the rate threshold $R_t$ increases, and it { jumps to 1} at a certain $R_t$ for each modulation order, {which is when $R_t$ exceeds the maximum rate achievable by the modulation order ($\log_2(M)$).}

\begin{figure}[t]
     \begin{subfigure}[h]{0.5\textwidth}
    \centering
    \includegraphics[width=1\linewidth,height=6.2cm]{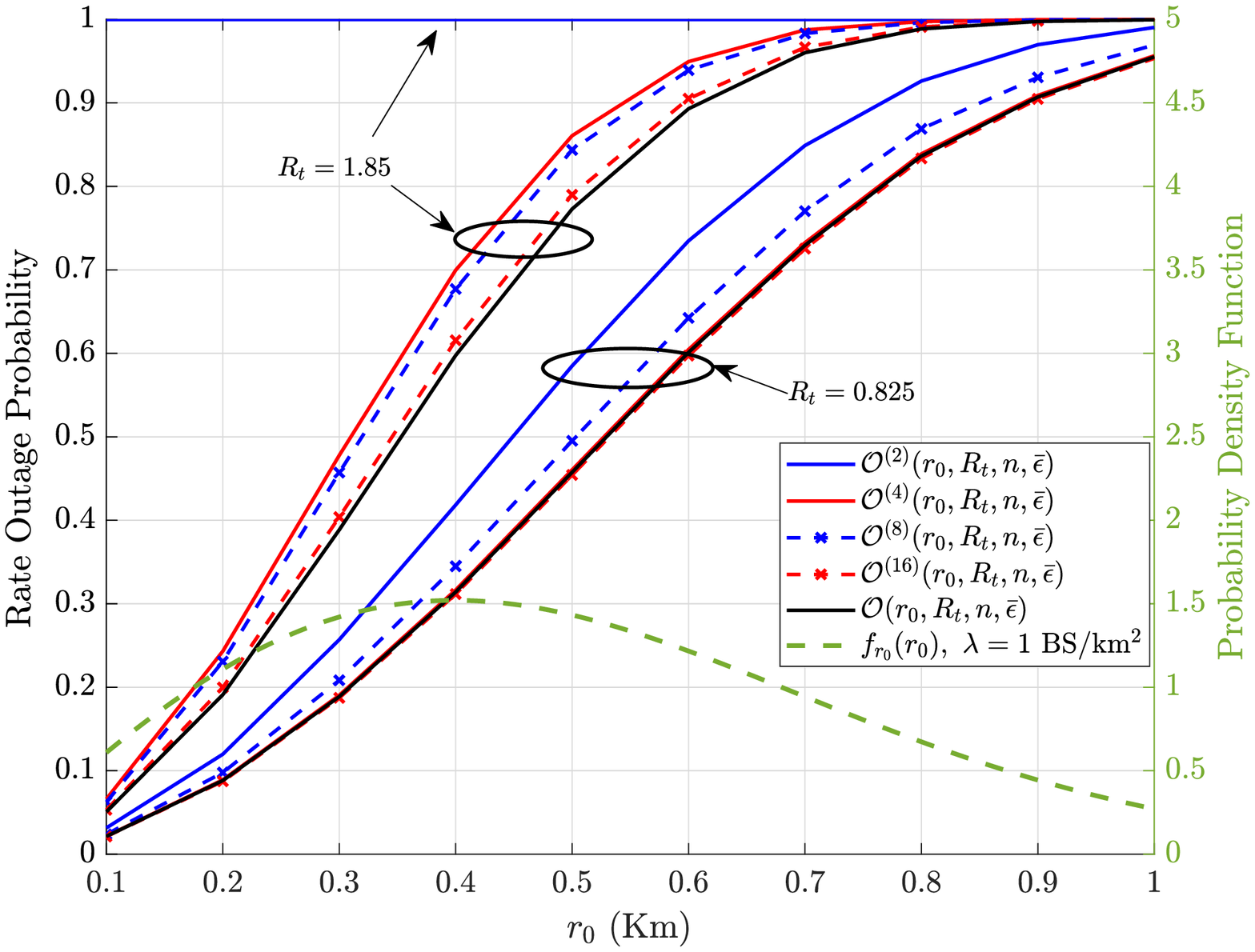}
    \caption{Outage probability versus $r_0$, for $R_t\in\{0.825,1.85\}$ }
    \label{fig:Outage_mod}
     \end{subfigure}
     \begin{subfigure}[h]{0.5\textwidth}
    \centering
    \includegraphics[width=1\linewidth,height=6.2cm]{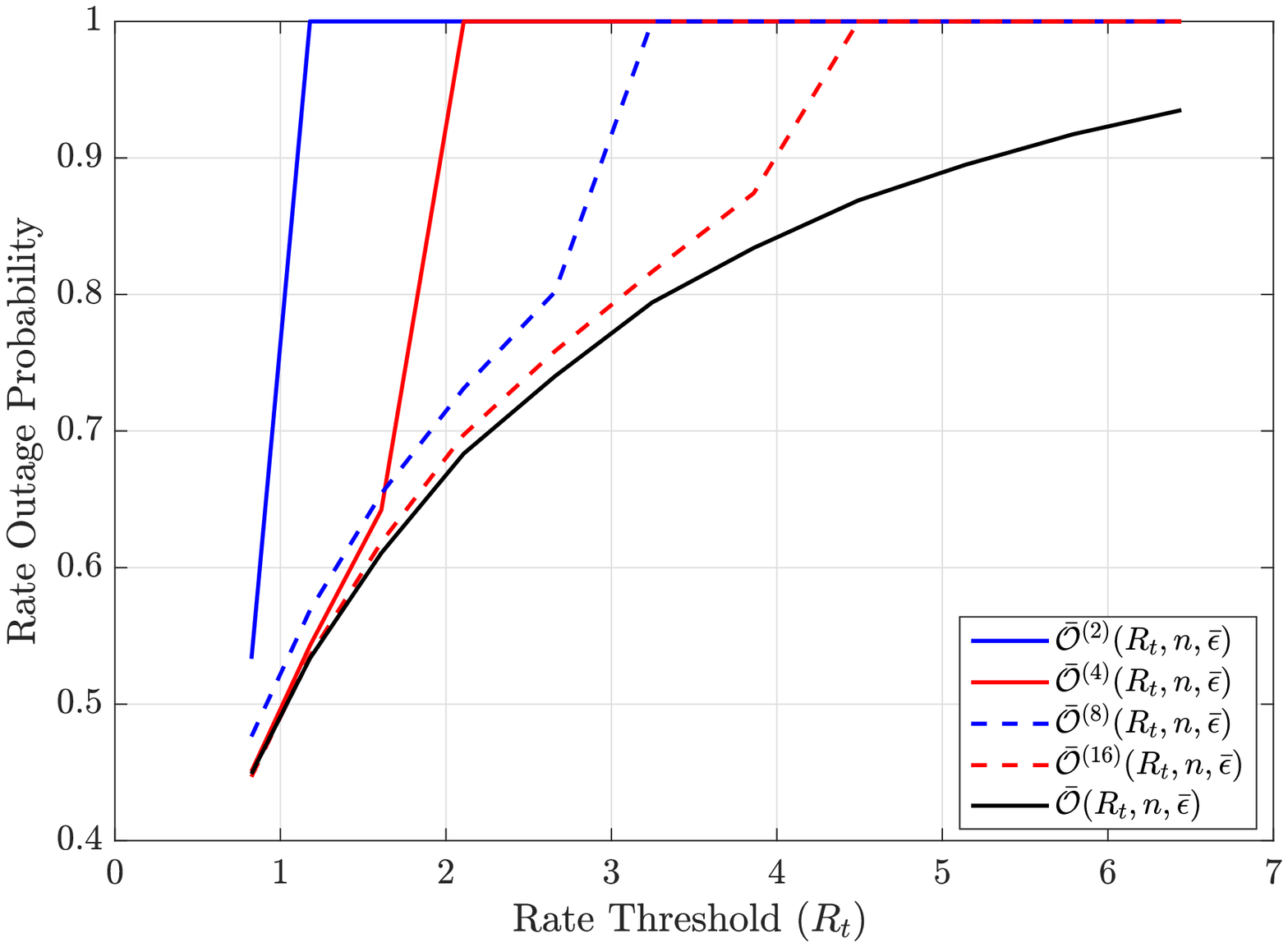}
    \caption{Outage probability versus $R_t$
    }
    \label{fig:Outage_mod_rand_r0}
     \end{subfigure}
     \caption{Outage probability using M-QAM at $\lambda=1\ \text{BS}/\text{km}^2$, $\bar{\epsilon}=10^{-2}$, $n=128$ and $\frac{\mathcal{P}}{\sigma_w^2}=0\ \text{dB}$}
     \label{fig:Outage_mod_App}
\end{figure}

The overall network reliability versus the FER threshold $\bar{\epsilon}$ is evaluated in Fig. \ref{fig:Reliability} for fixed and random $r_0$ for $n\in\{128,2048\}$ and $R_t\in\{0.1375 ,1, 3.46 \}$. It can be observed that the reliability of the network increases as the FER threshold increases until it reaches {a maximum around $\bar{\epsilon}=10^{-2}$ then} it rapidly decays. {This is because the increase of $(1- \mathcal{O}(r_0,R_t, n, \bar{\epsilon}))$ with respect to $\bar{\epsilon}$ is faster than the decrease of $(1-\bar{\epsilon})$ for small $\bar{\epsilon}$. However, for large $\bar{\epsilon}$, $(1- \mathcal{O}(r_0,R_t, n, \bar{\epsilon}))$ saturates to $1$ and $(1-\bar{\epsilon})$ dominates and contributes to the decay of the reliability at high $\bar{\epsilon}$. In other words, at $\bar{\epsilon}<10^{-2}$, the rate of increase of $(1-\mathcal{O}(r_0,R_t,n,\bar{\epsilon}))$ is higher than the rate of decay of $(1-\bar{\epsilon})$. However, at $\bar{\epsilon}>10^{-2}$, the rate of increase of $(1-\mathcal{O}(r_0,R_t,n,\bar{\epsilon}))$ is nearly zero and the rate of decay of $(1-\bar{\epsilon})$ {dominates}. Therefore, the reliability rapidly decays.} It is shown that for a fixed $R_t$, a gap exists between curves simulated under FBR and AR. This gap grows wider up to $0.15$ as FER threshold $\bar{\epsilon}$ decreases because the short blocklengths (e.g. $n=128$) are less reliable. Also, it is obvious that the reliability of the network at random $r_0$ is lower than that at fixed $r_0$ as it averages the performance of {users which are close and those that are far away from the BS}. Moreover, as $R_t$ increases the approximated reliability becomes close to the exact one. This is because the outage upper bound provided in Corollary~\ref{Theorem:BoundsOutage_Network_fixed_r0} is tight for high SINR. Hence, from Def.~\ref{def:outage}, as the $R_t$ increases, {only high SINRs can achieve the desired rate and FER threshold at the given $n$, which is where the approximation becomes tighter}. Whereas, for low $R_t$, {low SINRs can also achieve the desired performance which loosens the approximation.}

\begin{figure}[t]
     \begin{subfigure}[h]{0.52\textwidth}
    \centering
    \includegraphics[width=1\linewidth,height=6.2cm]{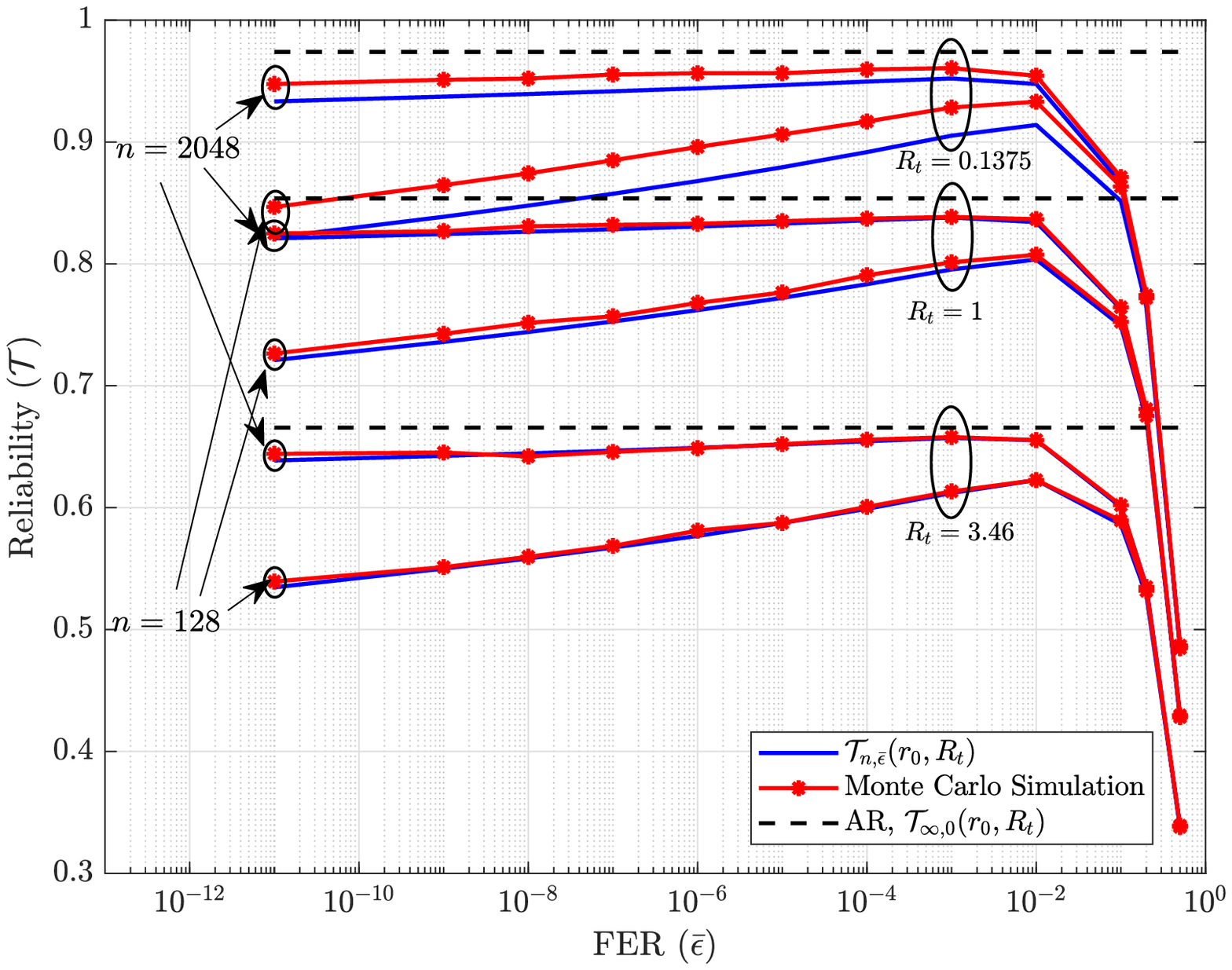}
    \caption{Fixed $r_0$: $\lambda=1$ $\text{BS}/\text{km}^2$, $\frac{\mathcal{P}}{\sigma_w^2}=0\ \text{dB}$ and $r_0=250\ m$ }
    \label{fig:Reliability_const_r0}
     \end{subfigure}
     \begin{subfigure}[h]{0.5\textwidth}
    \centering
    \includegraphics[width=1\linewidth,height=6.2cm]{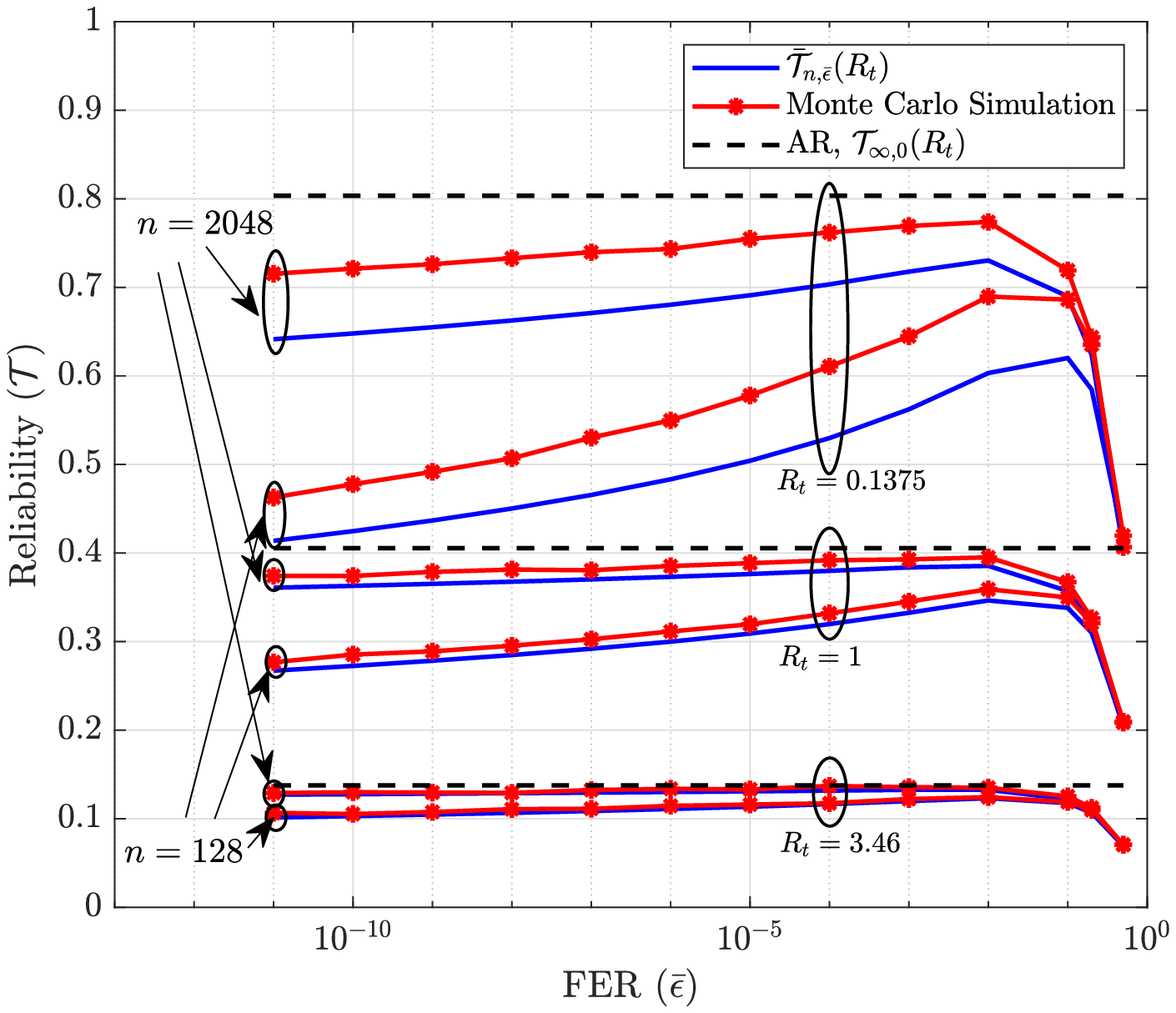}
    \caption{Random $r_0$: $\lambda=0.1$ $\text{BS}/\text{km}^2$ and $\frac{\mathcal{P}}{\sigma_w^2}=10\ \text{dB}$ }
    \label{fig:Reliability_var_r0}
     \end{subfigure}
     \caption{Reliability versus the FER threshold $\bar{\epsilon}$}
     \label{fig:Reliability}
\end{figure}

Finally, we validate the coding rate meta distribution in Fig.~\ref{fig:FBL_vs_IFB}. The coding rate meta distribution provided in~\eqref{eq:meta_exact} is simulated via Monte-Carlo simulations at $\lambda=1\ \text{BSs}/\text{km}^2$, $r_0\hspace{-0.1cm}=\hspace{-0.1cm}\ 150\ m$, an area of $500\ \text{km}^2$, a blocklength $n\hspace{-0.1cm}=\hspace{-0.1cm}128$, FER threshold $\bar{\epsilon}=10^{-5}$, and a rate threshold $R_t=\{0.1375,0.3964,1,2.0574,    3.4594\}$. It is observed that there is a significant gap between the performance in the AR and the FBR, showing that the meta distribution of the coding rate in the AR overestimates the performance of the network in the FBR. This gap can reach up to $0.09$ at $R_t=3.46$ and $p_{t}=0.9${, which means $9\%$ of the users predicted to achieve the required performance in the AR will not achieve it in the FBR.} It can also be observed that the gap increases as $R_t$ {or} $p_{t}$ increases. Thus, the AR analysis in~\cite{Meta1, Meta2, Meta3} {cannot be used to provide precise results in the FBR.} {For the sake of preciseness of the results, we have simulated the coding rate meta distribution including the noise term with noise power $N_0\hspace{-0.1cm}=\hspace{-0.1cm}-90\ \rm dBm$ \cite{Meta_noise}. The results show that the network is an interference-limited network and the noise can be neglected for the simplicity of the analysis.} In Fig.~\ref{fig:Exact_vs_App}, the approximation proposed in~\eqref{eq:P_App} is shown to {be very tight compared} to the exact coding rate meta distribution in \eqref{eq:meta_exact} for different $r_0$. The tightness of the approximation increases by decreasing $r_0$ or alternatively increasing SIR as the term $\sqrt{1-\frac{1}{(1+\Omega)^2}}$ converges to $1$ as the SIR increases. Using this approximation it is easier to find an expression for the moments of the success probability and use it to evaluate the meta distribution using the beta distribution provided in \eqref{eq:meta_beta}. Hence, the beta approximation of the meta-distribution provides a close performance to the exact and approximated meta-distribution. 

\begin{figure}[t]
    \centering
    \begin{subfigure}[h]{0.5\textwidth}
        \centering
        \includegraphics[width=1\textwidth,height=6.2cm]{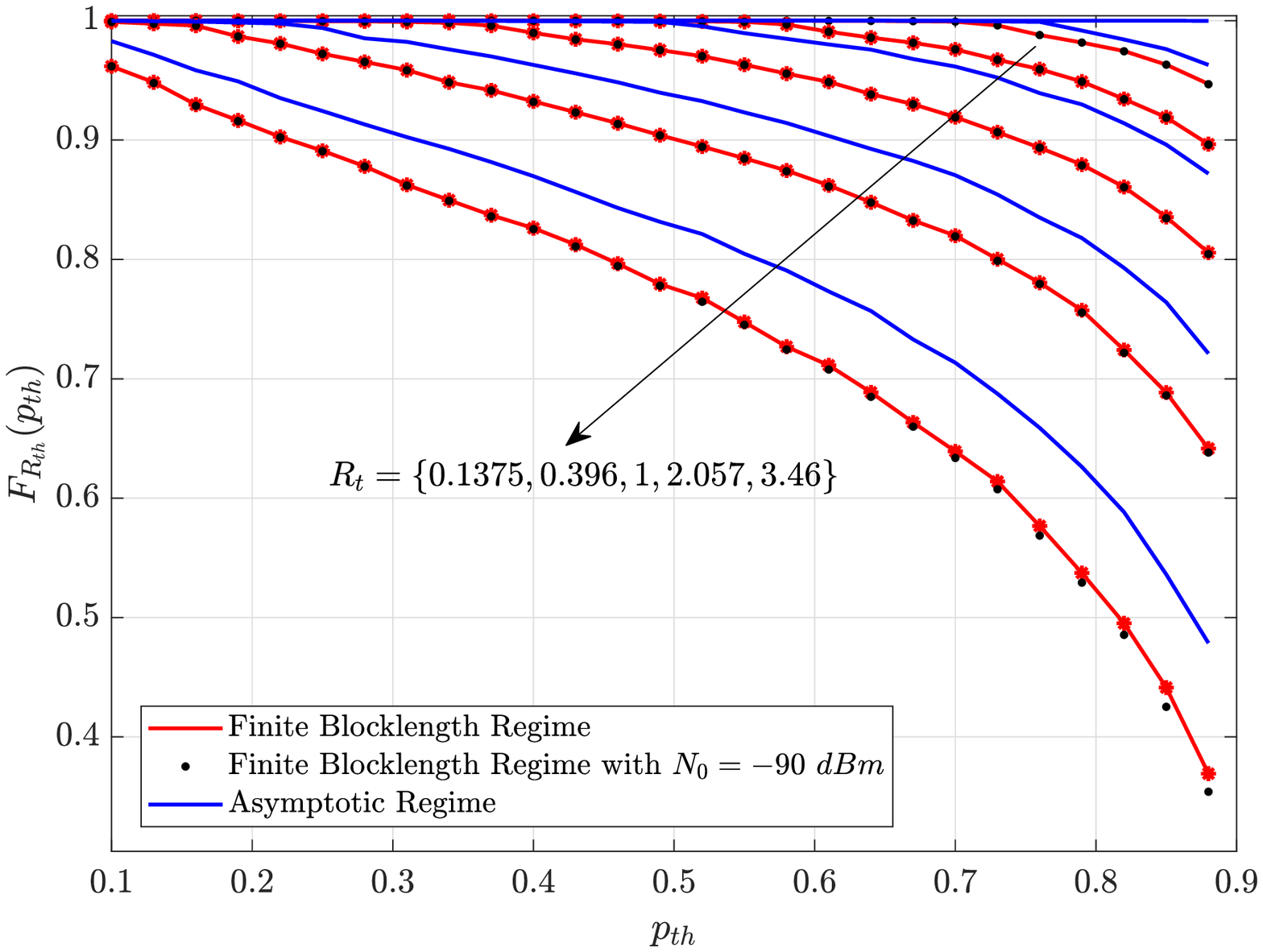}
    \caption{The FBR versus the AR at $n=128$ and $\bar{\epsilon}=10^{-5}$
    \newline}
    \label{fig:FBL_vs_IFB}
    \end{subfigure}\hfill
    \begin{subfigure}[h]{0.5\textwidth}
        \centering
        \includegraphics[width=1\textwidth,height=6.2cm]{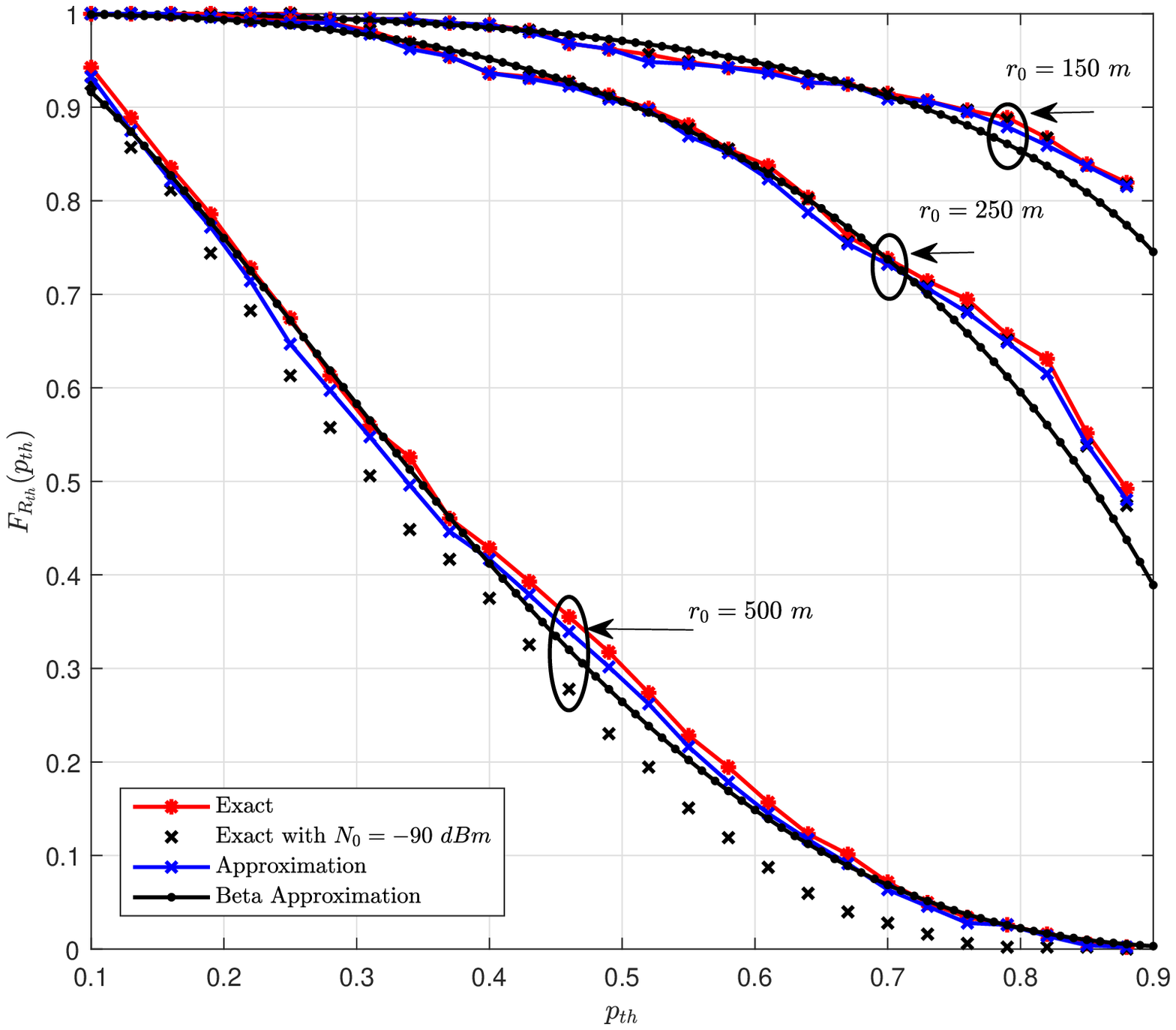}
    \caption{ Exact versus Approximation at $r_0\in\{150,\ 250,\ 500\}m$ at $R_t=1$, $n=128$, and $\bar{\epsilon}=10^{-2}$ }
    \label{fig:Exact_vs_App}
    \end{subfigure}
    \caption{The meta distribution of the {coding} rate versus $p_{t}$ }
\end{figure}

Thus, we saw that our proposed expressions provide an accurate representation of the rate outage probability and the coding rate meta distribution in the FBR, 
{which is more accurate than using AR expressions from the literature.}

%



\vspace{-0.5cm}
\section{Conclusion}\label{Conclusion}

{This paper investigates} {different aspects of} the performance of a large-scale DL network in the FBR using {the theory of coding in the FBR} in conjunction with stochastic geometric tools. {We start} with rate analysis where {we derive} the average coding rate using Gaussian codebooks and $M$-ary QAM constellation, and {investigate the} practical scheme of MLPCM {in the FBR showing that its performance approaches the theoretical benchmarks}. {We also study} the rate outage probability, reliability, and the coding rate meta distribution, where {we provide} bounds and approximations. From the results, we conclude that the performance analysis provided under an AR provides a misleading {(overestimated)} performance and cannot be used to characterize the performance in the FBR. {T}he {performance in the FBR provides accurate rate and FER characterization,} and approaches the AR performance at sufficiently large blocklengths $>2048$ and high FER $>10^{-2}$. The developed expressions for the average coding rate, the rate outage probability, and the coding rate meta distribution {are fairly tight as shown by} our numerical results. In conclusion, the results are relevant for the theoretical analysis of large-scale networks in the FBR, and the theoretical performance can be approached using practical transmission schemes such as MLPCM. The results in this paper can be extended to characterize the performance of networks under different multiple access schemes as in \cite{Noma4,Mu_MIMO,RS}.
\appendices
\section{The Average Square Root Channel Dispersion $\mathcal{V}(\alpha_0)$}\label{sec:Appendix3}

 From the definition of channel dispersion provided in \eqref{eq:Cap_polyankiy}, the average {square root} channel dispersion $\mathcal{V}(\alpha_0)$ {can be written} as follows
\begin{align}
  \mathcal{V}({\alpha_0})= \mathbb{E}\left\{\sqrt{ V(\Upsilon)}\right\}    &= \mathbb{E}\left\{ \log_2(e) \sqrt{\left(1-\frac{1}{(\Upsilon+1)^2}\right)}\right\}\nonumber\\
    &=\int_{0}^{\infty} (1-\mathbb{F}_{\sqrt{V}} ({v})) d{v}\label{eq:cdf_v},
\end{align}
where $\mathbb{F}_{\sqrt{V}}(v)$ is the cumulative {distribution} function (CDF) of  $\sqrt{V(\Upsilon)}$, and the last step follows as an application of Fubini's theorem~\cite{fubini}. The CDF of $\sqrt{V(\Upsilon)}$ for a given $\mathcal{B}_g$ is given by
\vspace{-0.3cm}
\begin{align}
    \mathbb{F}_{\sqrt{V}}(  {v}|\mathcal{B}_g)&= \mathbb{P}(\sqrt{V(\Upsilon)}<  {v}\big|\mathcal{B}_g)\nonumber\\
     &=\mathbb{P}\left(|h_0|^2<\left(\frac{1}{{\alpha_0}}+\frac{\mathcal{B}_g}{\mathcal{P}r_0^{-\eta}}\right){z(v)}\bigg|\mathcal{B}_g\right)\nonumber\\
     &=1-\exp\left(-\left(\frac{1}{{\alpha_0}}+\frac{\mathcal{B}_g}{\mathcal{P}r_0^{-\eta}}\right){z(v)}\right),
\end{align}
where the last step follows from the exponential distribution of $|h_0|^2$. Averaging with respect to the interference term $\mathcal{B}_g$ yields $\mathbb{E}_{\mathcal{B}_g}\{\mathbb{F}_{\sqrt{V}}(  {v}|\mathcal{B}_g)\}\hspace{-0.1cm}=\hspace{-0.1cm}\mathbb{E}_{\mathcal{B}_g}\left\{1-\exp\hspace{-0.1cm}\left(\frac{-{z(v)}}{{\alpha_0}}\right) \exp\left(-\frac{\mathcal{B}_g r_0^{\eta}}{\mathcal{P} }{z(v)}\right)\right\}$. Hence, the CDF of the square root of the channel dispersion is given as
\begin{align}
\label{eq:avg_cdf_v}
\mathbb{F}_{\sqrt{V}}(  {v})=
\hspace{-0.1cm}1-\exp\hspace{-0.1cm}\left(\frac{-{z(v)}}{{\alpha_0}}\right) \mathcal{L}_{\mathcal{B}_g}\left\{\frac{r_0^{\eta}}{\mathcal{P} }{z(v)}\right\}, \ \ \ \ \ \ \ \   0\leq v< \log_2(e),
\end{align}
which is obtained using the definition of Laplace transform $\left(\mathcal{L}_x\{u\}=\mathbb{E}\left\{e^{-ux} \right\}\right)$.
By substituting \eqref{eq:avg_cdf_v} in \eqref{eq:cdf_v}, we obtain
\begin{align}
   \mathcal{V}(\alpha_0)=\int_{0}^{\log_2(e)} \hspace{-0.5cm}\exp\left(\frac{-{z(v)}}{{\alpha_0}}\right)\mathcal{L}_{\mathcal{B}_g}\left\{\frac{r_0^{\eta}}{\mathcal{P}}{z(v)}\right\} d  {v},\nonumber
\end{align}
as defined in Theorem 1.
\section{The Channel Dispersion for A Large-Scale Network {Under a Finite Constellation}}
\label{appendix}
{The channel dispersion is defined as the conditional variance of the information density conditioned on the input distribution} as follows
\begin{align}\label{eq:1}
   \hspace{-0.2cm} V_{\text{\tiny M}}(\Upsilon)&=\mathbb{E}_S\left\{\mathbb{V}ar\left\{ \log_2  \frac{P_{Y|S}(y|s)}{P_Y(y)}\mid S\right\}\right\}\nonumber\\
   &=\frac{1}{M}\sum_{m=1}^{M} \mathbb{E}\left\{ \log_2^2  \frac{P_{Y|S}(y|\underline{s}_m)}{P_Y(y)}\right\} -\frac{1}{M}\sum_{m=1}^{M}\mathbb{E}\left\{ \log_2  \frac{P_{Y|S}(y|\underline{s}_m)}{P_Y(y)}\right\}^2. \nonumber
\end{align}
{where $\underline{s}_m=[\mathcal{R}e\{s_m\},\mathcal{I}m\{s_m\}]^T$ is the 2-D real-valued vector form of the complex-valued symbol $s_m$ of an $M$-ary constellation.}

Define $I_1=\mathbb{E}\left\{ \log_2^2  \frac{P_{Y|S}(y|\underline{s}_m)}{P_Y(y)}\right\}$ and $I_2=\mathbb{E}\left\{ \log_2  \frac{P_{Y|S}(y|\underline{s}_m)}{P_Y(y)}\right\}^2$. The first term $I_1$ is derived as follows
\begin{align}
    I_1 =\mathbb{E}\left\{ \log_2^2  P_{Y|S}(y|\underline{s}_m)\right\}-2 \mathbb{E}\left\{\log_2 P_{Y|S}(y|\underline{s}_m)\log_2 P_Y(y) \right\}+\mathbb{E}\left\{\log_2^2P_Y(y) \right\}=I_{11}-2 I_{12}+I_{13}.\nonumber
\end{align}

The term $I_{11}$ is given by
\begin{align}
\hspace{-4.8cm}  I_{11}=  \int_{{\mathbb{R}^2}} \frac{e^{ -\frac{\|{\underline{y}}-h_0 \sqrt{\mathcal{P}} r_0^{-\eta/2} {\underline{s}_m}\|^2}{N_0+\mathcal{B}} }}{{\pi (N_0+\mathcal{B})}} \log_2^2\frac{ e^{ -\frac{\|{\underline{y}}-h_0 \sqrt{\mathcal{P}} r_0^{-\eta/2} {\underline{s}_m}\|^2}{N_0+\mathcal{B}}}}{{\pi (N_0+\mathcal{B})}} \ d{\underline{y}},\nonumber
\end{align}
{where $\underline{y}=[y_1,y_2]^T$.} Define ${\underline{t}}=\frac{{\underline{y}}-h_0 \sqrt{\mathcal{P}} r_0^{-\eta/2} {\underline{s}_m}}{\sqrt{N_0+\mathcal{B}}}$, hence by substitution, $I_{11}$ is given by
\begin{align}
     I_{11}  &= \int_{{\mathbb{R}^2}}\frac{e^{-\|{\underline{t}}\|^2}}{\pi} 
 \left( \log_2^2\frac{1}{\pi(N_0+\mathcal{B})} 
 +2\|{\underline{t}}\|^2\log_2\left(e\right)\log_2 \frac{1}{\pi(N_0+\mathcal{B})}  +\log_2^2e^{\|{\underline{t}}\|^2}  \right) d{\underline{t}}\nonumber\\
&=\log_2^2\frac{1}{\pi(N_0+\mathcal{B})}-2 \log_2(e) \log_2\frac{1}{\pi(N_0+\mathcal{B})} +3\log_2^2(e)\nonumber.
\end{align}
The term $I_{12}$ is given by
   \begin{align*}
   I_{12}= \int_{{\mathbb{R}^2}} \frac{e^{-\frac{\|{\underline{y}}-h_0\sqrt{\mathcal{P}}r_0^{\eta/2}{\underline{s}_m}\|^2}{(N_0+\mathcal{B})} }}{\pi(N_0+\mathcal{B})} \log_2\left( \frac{1}{M} \sum_{l=1}^{M}  \frac{e^{-\frac{\|{\underline{y}}-h_0\sqrt{\mathcal{P}}r_0^{\eta/2}{\underline{s}_l}\|^2}{(N_0+\mathcal{B)}} }}{\pi(N_0+\mathcal{B})} \right) \log_2\left( \frac{e^{\frac{-\|{\underline{y}}-h_0\sqrt{\mathcal{P}}r_0^{\eta/2}{\underline{s}_m}\|^2}{(N_0+\mathcal{B)}} }}{\pi(N_0+\mathcal{B})}
   \right) d{\underline{y}},
   \end{align*}
{and is divided into} four terms {as $I_{12}=I_{12}^{(1)}+I_{12}^{(2)}+I_{12}^{(3)}+I_{12}^{(4)}$,} where
\begin{align}
 I_{12}^{(1)}&= \int_{{\mathbb{R}^2}} \frac{e^{-\frac{\|{\underline{y}}-h_0\sqrt{\mathcal{P}}r_0^{\eta/2}{\underline{s}_m}\|^2}{(N_0+\mathcal{B})} }}{\pi(N_0+\mathcal{B})}  \log_2 \frac{1}{\pi M(N_0+\mathcal{B})} \log_2 \frac{1}{\pi(N_0+\mathcal{B})} d{\underline{y}}\hspace{6cm}\nonumber\\
  &=\log_2 \frac{1}{\pi M (N_0+\mathcal{B})} \log_2 \frac{1}{\pi (N_0+\mathcal{B})},\nonumber
    \end{align}
    
   \begin{align}
   I_{12}^{(2)}&= \int_{{\mathbb{R}^2}} \frac{e^{-\frac{\|{\underline{y}}-h_0\sqrt{\mathcal{P}}r_0^{\eta/2}{\underline{s}_m}\|^2}{(N_0+\mathcal{B})}}}{\pi(N_0+\mathcal{B})}  \log_2\frac{1}{\pi M(N_0+\mathcal{B})} 
      \log_2  e^{\frac{-\|{\underline{y}}-h_0\sqrt{\mathcal{P}}r_0^{\eta/2}{\underline{s}_m}\|^2}{(N_0+\mathcal{B})} }d{\underline{y}}\hspace{2cm}\ \ \ \ \ \ \ \ \ \ \ \ \ \ \ \ 
 \ \ \nonumber\\
      &= -\log_2(e) \log_2 \frac{1}{\pi M (N_0+\mathcal{B})},\nonumber
    \end{align}
    \begin{align}
        I_{12}^{(3)}&= \int_{{\mathbb{R}^2}} \frac{e^{-\frac{\|{\underline{y}}-h_0\sqrt{\mathcal{P}}r_0^{\eta/2}{\underline{s}_m}\|^2}{(N_0+\mathcal{B})} }}{\pi(N_0+\mathcal{B})}  \log_2\frac{1}{\pi (N_0+\mathcal{B})}  \log_2 \sum_{l=1}^{M}  e^{-\frac{\|{\underline{y}}-h_0\sqrt{\mathcal{P}}r_0^{\eta/2}{\underline{s}_l}\|^2}{(N_0+\mathcal{B})} }\ d{\underline{y}}\hspace{6cm}\nonumber\\
        &\stackrel{\text{(i)}}{=}\frac{1}{\pi} \int_{{\mathbb{R}^2}} e^{-\|\underline{t}\|^2} \log_2\frac{1}{\pi(N_0+\mathcal{B})} \log_2\sum_{l=1}^{M} e^{-\|\underline{t}\|^2-2 \underline{t}^T(\underline{s}_m-\underline{s}_l)\frac{h_0 \sqrt{\mathcal{P}} r_0^{-\eta/2}}{\sqrt{N_0+\mathcal{B}}}-\frac{|h_0|^2 {\mathcal{P}} r_0^{-\eta}}{{N_0+\mathcal{B}}} \left\|\underline{s}_m-\underline{s}_l\right\|^2  }    d\underline{t}\nonumber\\
          &\stackrel{\text{(ii)}}{=}\frac{1}{\pi} \int_{{\mathbb{R}^2}} \hspace{-0.3cm}e^{-\|\underline{t}\|^2} \log_2 \frac{1}{\pi(N_0+\mathcal{B})} \log_2 \sum_{l=1}^{M} e^{-\left(\|\underline{t}\|^2+2\sqrt{\Upsilon} \underline{t}^T\underline{d}_{ml}+\Upsilon\left\| \underline{d}_{ml}\right\|^2 \right) }   d\underline{t}\nonumber\\
    & =-\log_2(e)\log_2\frac{1}{\pi (N_0+\mathcal{B})} \hspace{-0.1cm} +\mathbb{E}\left\{ \log_2 \frac{1}{\pi (N_0+\mathcal{B})} \log_2\sum_{l=1}^{M} e^{-2\sqrt{\Upsilon} \underline{t}^T \underline{d}_{ml}-\Upsilon\left\| \underline{d}_{ml}\right\|^2  }   \right\}\nonumber,
    \end{align}
 where $(\rm i)$ follows from a change of variable ${\underline{t}}=\frac{{\underline{y}}-h_0 \sqrt{\mathcal{P}} r_0^{-\eta/2} {\underline{s}_m}}{\sqrt{N_0+\mathcal{B}}}$, and $(\rm ii)$ follows by using $ \underline{ d}_{ml}=\underline{s}_m-\underline{s}_l$ and $\Upsilon=\frac{|h_0|^2 \mathcal{P} r_0^{-\eta}}{{N_0+\mathcal{B}}}$, and $I_{12}^{(4)}$ is given by 
 
      \begin{align}
        I_{12}^{(4)}&= \int_{{\mathbb{R}^2}} \frac{e^{-\frac{\|{\underline{y}}-h_0\sqrt{\mathcal{P}}r_0^{\eta/2}{\underline{s}_m}\|^2}{(N_0+\mathcal{B})}}}{\pi(N_0+\mathcal{B})}   \log_2  e^{\frac{-\|{\underline{y}}-h_0\sqrt{\mathcal{P}}r_0^{\eta/2}{\underline{s}_m}\|^2}{(N_0+\mathcal{B})} }  \log_2 \sum_{l=1}^{M}  e^{-\frac{\|{\underline{y}}-h_0\sqrt{\mathcal{P}}r_0^{\eta/2}{\underline{s}_l}\|^2}{(N_0+\mathcal{B})} } d{\underline{y}}\hspace{3cm}\nonumber\\
        & \stackrel{\text{(i)}}{=} \int_{{\mathbb{R}^2}} \frac{e^{-\|\underline{t}\|^2 }}{\pi}   \log_2  e^{-\|\underline{t}\|^2}
      \log_2 \sum_{l=1}^{M}   e^{-\left(\|\underline{t}\|^2+2 \underline{t}^T\underline{d}_{ml}\sqrt{\Upsilon}+\Upsilon \left\|\underline{d}_{ml}\right\|^2 \right) }  d{\underline{t}}\nonumber\\
        &=3 \log_2^2(e)-\log_2(e) \mathbb{E}_{\underline{t}}\left\{ \|\underline{t}\|^2 \log_2 \sum_{l=1}^{M} e^{-2 \sqrt{\Upsilon} \underline{t}^T \underline{d}_{ml} - \Upsilon \|\underline{d}_{ml}\|^2  }  \right\},\nonumber
    \end{align}
where $(\rm i)$ is obtained by the
substitution of $\underline{t}$, $\underline{d}_{ml}$, and $\Upsilon$. 

The term $I_{13}$ is given by
\begin{align*}
 \hspace{-2.5cm}   I_{13}=\int_{{\mathbb{R}^2}} \frac{e^{-\frac{\|{\underline{y}}-h_0\sqrt{\mathcal{P}}r_0^{\eta/2}{\underline{s}_m}\|^2}{(N_0+\mathcal{B})} }}{\pi(N_0+\mathcal{B})}  \hspace{-0.1cm}\left(\log_2 \frac{1}{M\pi(N_0+\mathcal{B})} + \hspace{-0.07cm}\log_2\sum_{l=1}^{M}e^{\frac{-\|{\underline{y}}-h_0\sqrt{\mathcal{P}}r_0^{\eta/2}{\underline{s}_l}\|^2}{(N_0+\mathcal{B})} }\right)^2
   d{\underline{y}},
\end{align*}
{and is divided into } three terms {as $I_{13}=I_{13}^{(1)}+I_{13}^{(2)}+I_{13}^{(3)}$}, where
\begin{align}
    I_{13}^{(1)}&= \int_{{\mathbb{R}^2}} \frac{e^{-\frac{\|{\underline{y}}-h_0\sqrt{\mathcal{P}}r_0^{\eta/2}{\underline{s}_m}\|^2}{(N_0+\mathcal{B})} }}{\pi (N_0+\mathcal{B})}  \log_2^2 \frac{1}{\pi M (N_0+\mathcal{B})} \ d{\underline{y}}\hspace{3cm} \ \ \ \ \ \ \ \ \ \ \  \ \ \ \ \ \ \ \ \ \ \ \ \ \ \ \ \ \ \ \ \ \ \ \ \ \ \nonumber\\
     &=\log_2^2 \frac{1}{\pi M (N_0+\mathcal{B})},\nonumber
    \end{align}
    \begin{align}
    I_{13}^{(2)}&={2} \int_{{\mathbb{R}^2}} \frac{ e^{-\frac{\|{\underline{y}}-h_0\sqrt{\mathcal{P}}r_0^{\eta/2}{\underline{s}_m}\|^2}{(N_0+\mathcal{B})} }}{\pi(N_0+\mathcal{B})} \log_2 \frac{1}{\pi M(N_0+\mathcal{B})}\log_2\sum_{l=1}^{M}e^{\frac{-\|{\underline{y}}-h_0\sqrt{\mathcal{P}}r_0^{\eta/2}{\underline{s}_l}\|^2}{(N_0+\mathcal{B})} }\  d{\underline{y}}\nonumber\\
     &=-2 \log_2(e) \log_2\frac{1}{\pi M(N_0+\mathcal{B})}  + 2 \log_2\frac{1}{\pi M(N_0+\mathcal{B})}  \mathbb{E}\left\{  \log_2 \sum_{l=1}^{M} e^{-2 \sqrt{\Upsilon} {\underline{t}^T}  \underline{ d}_{ml}- \Upsilon \| \underline{ d}_{ml}\|^2  }  \right\},\nonumber
         \end{align}
\begin{align}
    I_{13}^{(3)}&=\int_{{\mathbb{R}^2}} \frac{1}{\pi(N_0+\mathcal{B})} e^{-\frac{\|{\underline{y}}-h_0\sqrt{\mathcal{P}}r_0^{\eta/2}{\underline{s}_m}\|^2}{(N_0+\mathcal{B})} }
    \log_2^2 \sum_{l=1}^{M}e^{\frac{-\|{\underline{y}}-h_0\sqrt{\mathcal{P}}r_0^{\eta/2}{\underline{s}_l}\|^2}{(N_0+\mathcal{B})} }\ d{\underline{y}}\nonumber\\
   & = 3\log_2^2(e)- 2 \log_2(e)\ \mathbb{E}\left\{ \|{\underline{t}}\|^2 \log_2\sum_{l=1}^{M} e^{-2 \sqrt{\Upsilon} {\underline{t}^T} \underline{ d}_{ml}- \Upsilon \|\underline{ d}_{ml}\|^2  }  \right\}+ \mathbb{E}\left\{  \log_2\sum_{l=1}^{M} e^{-2 \sqrt{\Upsilon} {\underline{t}^T} \underline{ d}_{ml}- \Upsilon \|\underline{ d}_{ml}\|^2  }  \right\},\nonumber
   \end{align}

By adding $I_{11}$, $I_{12}$, and $I_{13}$, $I_1$ is given by
\begin{align}
    I_1=\log_2^2\frac{1}{M}\hspace{-0.1cm}+2 \log_2\frac{1}{M}  \mathbb{E}\left\{  \log_2 \sum_{l=1}^{M} e^{-2 \sqrt{\Upsilon} {\underline{t}^T} {\underline{d}_{ml}}- \Upsilon \|{\underline{d}_{ml}}\|^2  } \right\} + \mathbb{E}\left\{  \log_2^2 \sum_{l=1}^{M} e^{-2 \sqrt{\Upsilon} {\underline{t}^T}{\underline{d}_{ml}}- \Upsilon \|{\underline{d}_{ml}}\|^2 }  \right\}, \nonumber
\end{align}
The second term $I_2$ is the square of the relative entropy given as 
\begin{align}
I_2=\log_2^2 (M)-2 \log_2( M )\mathbb{E}\left\{  \log_2 \sum_{l=1}^{M} e^{-2 \sqrt{\Upsilon} {\underline{t}^T} {\underline{d}_{ml}}- \Upsilon \|{\underline{d}_{ml}}\|^2 }  \right\}
+\mathbb{E}^2\left\{  \log_2 \sum_{l=1}^{M} e^{-2 \sqrt{\Upsilon} {\underline{t}^T} {\underline{d}_{ml}}- \Upsilon \|{\underline{d}_{ml}}\|^2  }  \right\}. \nonumber
\end{align}
To compute the {average square-root} channel dispersion, $I_2$ is subtracted from $I_1$ and then averaged over all constellation symbols, leading to the following expression as indicated in \eqref{eq:V_mod}
\begin{align}
    V_{\text{\tiny M}}(\Upsilon)=\frac{1}{M}\sum_{m=1}^{M}\left\{\mathbb{E}_{\underline{t}}\left\{  \log_2^2 \sum_{l=1}^{M} e^{-2 \sqrt{\Upsilon} {\underline{t}^T} {\underline{d}_{ml}}- \Upsilon \|{\underline{d}_{ml}}\|^2  }  \right\} -\mathbb{E}_{\underline{t}}\left\{  \log_2 \sum_{l=1}^{M} e^{-2 \sqrt{\Upsilon} {\underline{t}^T} {\underline{d}_{ml}}- \Upsilon \|{\underline{d}_{ml}}\|^2  }  \right\}^2\right\}. \nonumber
\end{align}
\vspace{-1cm}

\vspace{-0.5cm}
\section{The SINR distribution at a Fixed $r_0$}
\label{appendix2}
The distribution of the SINR $\Upsilon$ at a fixed $r_0$ is derived by defining the CDF as follows
\begin{align}
   \mathbb{F}_{\Upsilon}(\Upsilon|r_0)&=\mathbb{P}\left(|h_0|^2<\frac{\Upsilon(\mathcal{B}+\sigma_w^2)}{\mathcal{P}r_0^{-\eta}}\right)\nonumber\\
    &=1-\mathbb{E}\left\{e^{\frac{-\Upsilon\mathcal{B}}{\mathcal{P}r_0^{-\eta}}}\right\} e^{\frac{-\Upsilon \sigma_w^2}{\mathcal{P}r_0^{-\eta}}}\nonumber\\
    &= 1- \mathcal{L}_{{\mathcal{B}}}\left\{\frac{\Upsilon}{\mathcal{P} r_0^{-\eta}} \right\}  e^{\frac{-\Upsilon \sigma_w^2}{\mathcal{P}r_0^{-\eta}}}\nonumber,
\end{align}
using the Gamma approximation provided in \eqref{eq:Gamma_app}. The PDF of the SINR $\Upsilon$ at a fixed $r_0$ is given by differentiating the CDF as follows
\begin{align}
    f(\Upsilon|r_0)=\frac{ e^{\frac{-\Upsilon \sigma_w^2}{\mathcal{P}r_0^{-\eta}}}}{\mathcal{P}r_0^{-\eta}} \frac{(1+ \frac{\theta r_0^{\eta} \Upsilon}{\mathcal{P}})\sigma_w^2+q\theta}{(1+ \frac{\theta r_0^{\eta} \Upsilon}{\mathcal{P}})^{q+1}},
\end{align}
this proves \eqref{eq:pdf_SINR}.

\bibliographystyle{IEEEtran}
\bibliography{IEEEabrv}

\ifCLASSOPTIONcaptionsoff
  \newpage
\fi

\end{document}